\definecolor{darkgreen}{rgb}{0,0.5,0}
\newtheorem{theorem}{Theorem}
\newtheorem{corollary}{Corollary}
\newtheorem{definition}{Definition}
\newtheorem{conjecture}{Conjecture}
\newtheorem{claim}{Claim}
\newtheorem{lemma}{Lemma}
\newtheorem{fact}{Fact}
\newtheorem{proposition}{Proposition}
\newtheorem{example}{Example}
\newtheorem*{theorem*}{Theorem}
\newtheorem*{remark*}{Remark}
\newcommand{\eq}[1]{\hyperref[eq:#1]{(\ref*{eq:#1})}}
\renewcommand{\sec}[1]{\hyperref[sec:#1]
	{Section~\ref*{sec:#1}}}
\newcommand{\fac}[1]{\hyperref[fac:#1]
	{Fact~\ref*{fac:#1}}}	
\newcommand{\thm}[1]{\hyperref[thm:#1]
	{Theorem~\ref*{thm:#1}}}
\newcommand{\lem}[1]{\hyperref[lem:#1]{Lemma~\ref*{lem:#1}}}
\newcommand{\clm}[1]{\hyperref[clm:#1]{Claim~\ref*{clm:#1}}}
\newcommand{\prop}[1]{\hyperref[prop:#1]
	{Proposition~\ref*{prop:#1}}}
\newcommand{\prob}[1]{\hyperref[prob:#1]
	{Problem~\ref*{prob:#1}}}
\newcommand{\cor}[1]{\hyperref[cor:#1]
	{Corollary~\ref*{cor:#1}}}
\newcommand{\fig}[1]{\hyperref[fig:#1]{Figure~\ref*{fig:#1}}}
\newcommand{\tab}[1]{\hyperref[tab:#1]{Table~\ref*{tab:#1}}}
\newcommand{\alg}[1]{\hyperref[alg:#1]
	{Algorithm~\ref*{alg:#1}}}
\newcommand{\app}[1]{\hyperref[app:#1]
	{Appendix~\ref*{app:#1}}}
\newcommand{\conj}[1]{\hyperref[conj:#1]
	{Conjecture~\ref*{conj:#1}}}
\newcommand{\pro}[1]{\hyperref[pro:#1]
	{Property~\ref*{pro:#1}}}
\newcommand{\fct}[1]{\hyperref[fct:#1]
	{Fact~\ref*{fct:#1}}}
\DeclareMathOperator{\poly}{poly}
\newcommand{\MOD}{\mathsf{MOD}}
\newcommand{\PARITY}{\mathsf{PARITY}}
\renewcommand{\subparagraph}{\paragraph}
\tikzstyle{zig} = [decorate,decoration={snake,amplitude=.4mm,segment length=2mm}]
\DeclareRobustCommand{\stirling}{\genfrac\{\}{0pt}{}}
\title{On the Degree of Boolean Functions as Polynomials over \texorpdfstring{$\mathbb{Z}_m$}{Lg}}
\author[1]{Xiaoming Sun}
\author[1]{Yuan Sun}
\author[2]{Jiaheng Wang}
\author[2]{Kewen Wu}
\author[1]{Zhiyu Xia}
\author[1]{Yufan Zheng}
\affil[1]{Institute of Computing Technology, Chinese Academy of Sciences, China}
\affil[2]{School of Electronics Engineering and Computer Science, Peking University, China}
\date{}
\begin{document}
	
	\maketitle
	
	\begin{abstract}
	    Polynomial representations of Boolean functions over various rings such as $\mathbb{Z}$ and $\mathbb{Z}_m$ have been studied since Minsky and Papert (1969). From then on, they have been employed in a large variety of areas including communication complexity, circuit complexity, learning theory, coding theory and so on. For any integer $m\ge2$, each Boolean function has a unique multilinear polynomial representation over ring $\mathbb Z_m$. The degree of such polynomial is called \textit{modulo-$m$ degree}, denoted as $\deg_m(\cdot)$.  

        In this paper, we investigate the lower bound of modulo-$m$ degree of Boolean functions. When $m=p^k$ ($k\ge 1$) for some prime $p$, we give a tight lower bound $\deg_m(f)\geq k(p-1)$ for any non-degenerate function $f:\{0,1\}^n\to\{0,1\}$, provided that $n$ is sufficient large. When $m$ contains two different prime factors $p$ and $q$, we give a nearly optimal lower bound for any symmetric function $f:\{0,1\}^n\to\{0,1\}$ that $\deg_m(f) \geq \frac{n}{2+\frac{1}{p-1}+\frac{1}{q-1}}$. 
	\end{abstract}
	
\section{Introduction}

Given a Boolean function $f:\{0,1\}^n \to \{0,1\}$, the \emph{degree} (resp.,  \emph{modulo-$m$ degree}), denoted as $\deg(f)$ (resp., $\deg_m(f)$), is the degree of the unique\footnote{The existence and uniqueness are guaranteed by the M\"obius inversion, see e.g. \cite{gopalan2009on}.} multilinear polynomial representation of $f$ over $\mathbb{R}$ (resp., $\mathbb{Z}_m$). These complexity measures and related notions have been studied extensively since the work of Minsky and Papert~\cite{minsky1969introduction}. The polynomial representation of a Boolean function has found numerous applications in the study of query complexity (see e.g.~\cite{buhrman2002complexity}), communication complexity~\cite{buhrman2001communication,razborov2003quantum,shi2007quantum,sherstov2012strong,sherstov2008communication,montanaro2009communication,gopalan2006computing}, learning theory~\cite{kushilevitz1993learning,linial1989constant,klivans2004learning,mossel2003learning}, explicit combinatorial constructions~\cite{grolmusz2000superpolynomial,grolmusz2002constructing,gopalan2006constructing,efremenko20123}, circuit lower bounds~\cite{smolensky1987algebraic,razborov1987lower,alon2001lower,gopalan2009on} and coding theory~\cite{wilson2006lemma,yildiz2007weights,katz2005p,liu2011equivalence}, etc.

In this paper, we focus on modulo-$m$ degree of Boolean functions. 
Throughout, all Boolean functions are assumed to be \emph{non-degenerate}\footnote{A Boolean function is called non-degenerate if it depends on all its $n$ variables.}, if not specifically mentioned.
One of the complexity theoretic motivations of studying $\deg_m(f)$ is to understand the power of modular counting.
For example, the famous Razborov--Smolensky polynomial method~\cite{razborov1987lower,smolensky1987algebraic} reduces the task of proving size lower bounds for $\mathsf{AC}^0[p]$ circuits to proving a lower bound of approximate modulo-$p$ degree of the target Boolean function.
However, this approach mainly works when $p$ is a prime.\footnote{It is a folklore that $\mathsf{AC}^0[m] = \mathsf{AC}^0[\mathrm{rad}(m)]$, where $\mathrm{rad}(m)$ is the square-free part of $m$. Therefore in fact we are able to handle $\mathsf{AC}^0[q]$ circuits for any prime power $q$.}
Another example, in which $m$ can be composite, is that a $(1/2+o(1))$-inapproximability of a Boolean function $f$ by degree-$O(1)$ polynomials over $\mathbb{Z}_m$ implies that $f$ cannot be computed by $\mathsf{MAJ}_{O(1)} \circ \mathsf{MOD}_m \circ \mathsf{AND}_{O(1)}$ circuits~\cite{alon2001lower}.
In general, it has been proved important to understand the computational power of polynomials over $\mathbb{Z}_m$ for general $m$.

Towards the complexity measure $\deg_m(f)$ itself, the case when $m$ is a prime has been studied a lot in previous works. For example, one natural question is whether $\deg_m(f)$ is polynomially related to $\deg(f)$ for general $m$, as other complexity measures like decision tree complexity $\mathrm{D}(f)$ do? The answer is NO according to the parity function $\PARITY(x):=\bigoplus_{i=1}^{n}x_i$. That is, $\deg_2(\PARITY)=1$ but $\deg(\PARITY)=n$. Though this function works as a counterexample for the relationship between $\deg_2(f)$ and $\deg(f)$, it is still inspiring because its modulo-$3$ degree is large. By writing $\PARITY$ as $\frac{1}{2}-\frac{1}{2}\prod_{i=1}^{n}(1-2x_i)$ and taking modulo $3$, one can get $\deg_3(\PARITY)=n$. Actually, Gopalan et al. \cite{gopalan2009on} give the following relationship between the polynomial degrees modulo two different primes $p$ and $q$:
\[
\deg_q(f)\geq\frac{n}{\lceil\log_2 p\rceil\deg_p(f)p^{2\deg_p(f)}}.
\]
Daunting at the first glance, the inequality implies an essential fact that, as long as $\deg_p(f)=o(\log n)$, a lower bound of $\Omega(n^{1-o(1)})$ for $\deg_q(f)$ follows. Moreover, if $m$ has at least two different prime factors $p$ and $q$, then $\deg_m(f)\geq\max\left\{\deg_p(f),\deg_q(f)\right\}=\Omega(\log n)$. 

Having negated the possibility for the case of prime $m$, it is natural to study the case of composite number. The systematic study of this case was initiated by Barrington et al. \cite{BarringtonBR94}. Alas, whether $\deg_m(f)$ is polynomially related to $\deg(f)$ is still a widely open problem. Though the case $m$ being a prime power is proved to be not true in Gopalan's thesis \cite{gopalan2006computing}, we are unable to find better separation between $\deg_m(f)$ and $\deg(f)$, for $m = pq$ with $p$ and $q$ being two distinct primes, than the quadratic one given by Li and Sun~\cite{li2017modulo}. This leads to the following conjecture:

\begin{conjecture}
	\label{conj:1}
	Let $f$ be a Boolean function. If $m$ has at least two distinct prime factors, then
	\[\deg(f)=O(\poly(\deg_m(f))).\]
\end{conjecture}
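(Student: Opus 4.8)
We outline a line of attack on \conj{1}; since the conjecture is open, this is a plan rather than a proof, with an indication of the step we expect to be the essential obstruction. The first move is a reduction: if $p,q$ are distinct primes dividing $m$, then reducing the $\mathbb Z_m$-multilinear representation of $f$ modulo $p$, modulo $q$, and modulo $pq$, together with the Chinese Remainder Theorem, gives
\[
\deg_p(f),\ \deg_q(f)\ \le\ \deg_{pq}(f)=\max\{\deg_p(f),\deg_q(f)\}\ \le\ \deg_m(f),
\]
so it suffices to treat $m=pq$ and prove $\deg(f)=O(\poly(\deg_p(f)+\deg_q(f)))$ for non-degenerate $f$. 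Since $f$ is non-degenerate, $\deg(f)\le n$, so it is in fact enough to bound the number of variables $n$ by a polynomial in $\deg_p(f)+\deg_q(f)$. As a consistency check on the target: for symmetric $f$ the conjecture already follows, even with a linear bound, from the symmetric lower bound $\deg_m(f)\ge n/\bigl(2+\tfrac1{p-1}+\tfrac1{q-1}\bigr)$ established in this paper together with $\deg(f)\le n$; this, and the quadratic separation of Li and Sun, suggests the true bound should be a low-degree polynomial.

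The natural engine is the inequality of Gopalan et al.\ quoted above, $\deg_q(f)\ge n/\bigl(\lceil\log_2 p\rceil\,\deg_p(f)\,p^{2\deg_p(f)}\bigr)$. Writing $d:=\deg_m(f)$ and using $\deg_p(f),\deg_q(f)\le d$, this already yields $n\le \lceil\log_2 p\rceil\, d\, p^{2d}$, hence $\deg(f)=2^{O(d)}$ --- enough to deduce that bounded modulo-$m$ degree forces boundedly many variables, but far from the conjectured polynomial. The crux of any proof of \conj{1} is therefore to replace the factor $p^{2\deg_p(f)}$ by $\poly(\deg_p(f))$, i.e.\ to establish a strengthened inequality of the form $n\le \poly(\deg_p(f),\deg_q(f))$; the conjecture, and optimistically the exponent $2$ suggested by Li and Sun, would then follow immediately.

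To obtain such a strengthening I would aim for a structural theorem about Boolean functions of small $\deg_p$. A single small $\deg_p$ cannot force few variables, since $\PARITY$ has $\deg_2=1$ on $n$ variables, so the two primes must be used together; the intended picture is that $\deg_p(f)=k$ forces $f$ to be a function of few ``$\MOD_p$-type'' linear forms over pairwise-disjoint blocks of variables plus few remaining variables, after which a large block would make some $\MOD_p$-sum of many variables appear, which has full $\mathbb Z_q$-degree and so contradicts small $\deg_q(f)$ --- forcing all blocks to be small and $n$ polynomially bounded. Carrying this out requires (i) a decomposition of degree-$k$ polynomials over $\mathbb F_p$ into $\poly(k)$ blocks, and (ii) a clean lower bound on $\deg_q$ of any function that genuinely depends on a long $\MOD_p$-sum. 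An alternative, more hands-on route works directly with the integer M\"obius coefficients $\widehat f(S)$: all $S$ with $|S|>d$ satisfy $pq\mid \widehat f(S)$, so $h:=f-\sum_{|S|\le d}\widehat f(S)\prod_{i\in S}x_i$ is $pq$-divisible on the cube and $h/(pq)$ is integer-valued; combining such ``peeling'' with the idempotence identity $\widehat f(S)=\sum_{A\cup B=S}\widehat f(A)\widehat f(B)$ and the magnitude bound $|\widehat f(S)|\le 2^{|S|}$ might let one control $\deg(f)$ without passing through one prime at a time.

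I expect the main obstacle to be step (i): the currently available bounds on the ``size'' of a low-$\deg_p$ function --- certificate complexity, $\MOD_p$-decision-tree depth, number of blocks --- are all exponential in $\deg_p(f)$, essentially because they proceed by a union bound over the $\binom{n}{\le\deg_p(f)}$ possible low-degree monomials, and $\PARITY$ shows that small $\deg_p$ alone cannot yield any polynomial-size combinatorial certificate. Thus a proof seems to require a genuinely new structural insight that exploits the simultaneous smallness of $\deg_p(f)$ and $\deg_q(f)$ rather than bootstrapping a one-prime statement; pinning down even the conjectured quadratic relationship appears to be beyond present techniques.
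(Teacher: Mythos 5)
This statement is \conj{1}, which the paper itself leaves open and offers no proof of, so there is no ``paper proof'' to match; the only question is whether your proposal closes the gap, and it does not --- as you yourself acknowledge. The parts you do argue are sound: the reduction to $m=pq$ via the Chinese Remainder Theorem (\fac{divide_dominate}, together with $\deg_p(f)\le\deg_{p^a}(f)$), the observation that the symmetric case follows from \thm{degpq_sym} combined with $\deg(f)\le n$, and the derivation $n\le\lceil\log_2 p\rceil\,d\,p^{2d}$, hence $\deg(f)=2^{O(d)}$, from the Gopalan et al.\ inequality. These are correct but already known; the exponential bound is exactly where the state of the art stands.

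The genuine gap is that your step (i)--(ii) program is not an argument but a restatement of the problem. The ``strengthened inequality'' $n\le\poly(\deg_p(f),\deg_q(f))$ is essentially equivalent to \conj{1} itself (given $\deg(f)\le n$), so proposing to prove it is circular as a plan unless the structural theorem in (i) is actually supplied --- and, as you note, all known routes to such a decomposition (certificate-type arguments, $\MOD_p$-decision trees, block structure) incur a $p^{\Theta(\deg_p(f))}$ loss, with $\PARITY$ showing that no one-prime structural statement of polynomial size can exist. The ``peeling'' alternative via the integer M\"obius coefficients also stalls at the same point: knowing $pq\mid\widehat f(S)$ for $|S|>d$, the idempotence identity, and $|\widehat f(S)|\le 2^{|S|}$ does not by itself bound the number of relevant variables polynomially, and you give no mechanism for how it would. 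So the proposal correctly maps the terrain and identifies the obstruction, but it does not prove the statement; the conjecture remains open exactly where you say it does.
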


Towards this conjecture, the first step is to deal with \textit{symmetric} Boolean functions. Lee et al. \cite{Lee2015Restrictions} proves that  $2\deg_{p}(f)\deg_{q}(f)>n$ for any distinct primes $p,q$ and non-trivial symmetric Boolean function $f:\{0,1\}^n\to\{0,1\}$, implying the correctness of \conj{1} in symmetric cases. Li and Sun \cite{li2017modulo} improved their bound to  $p\deg_{p}(f)+q\deg_{q}(f)>n$, which implies $\deg_{pq}(f)>\frac{n}{p+q}$. This is far from being tight; actually, as we will present later, the denominator $p+q$ can be reduced to $3.5$. 

On the tight lower bound of $\deg(f)$, Nisan and Szegedy~\cite{nisan1992degree} give the bound $\deg(f)\ge\log_2 n-O(\log\log n)$ as long as $f$ is non-degenerate.
Very recently, this bound is improved to $\deg(f)\ge\log_2 n-O(1)$ by \cite{chiarelli2018tight,wellens2019tighter}, which is tight up to the additive $O(1)$-term by the {\em address function}. 
Gathen and Roche~\cite{von1997polynomials} show that $\deg(f) \geq \deg_{p(n)}(f) \geq p(n) - 1$ for any non-trivial \emph{symmetric} Boolean function, where $p(n)$ is the largest prime below $n+2$. (Notice that the module degree gives a lower bound on the degree.) Using the currently best result on prime gaps~\cite{baker2001difference}, this gives an $n-O(n^{0.525})$ lower bound. On the other side, Gathen and Roche give a polynomial family with $\deg(f)=n-3$, and they propose \conj{2} below with a probabilistic heuristic argument:

\begin{conjecture}
	\label{conj:2}
	For any non-trivial symmetric Boolean function $f:\{0,1\}^n\to\{0,1\}$, 
	\[\deg(f)\ge n-O(1).\]
\end{conjecture}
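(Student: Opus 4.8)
Since $\deg_m(f)\le\deg(f)$ for every modulus $m$ (reducing the integer polynomial modulo $m$ cannot raise its degree), it suffices to produce, for each $n$, one modulus $m=m(n)$ with $\deg_m(f)\ge n-O(1)$ for all non-trivial symmetric $f\colon\{0,1\}^n\to\{0,1\}$. The base case is the von zur Gathen--Roche bound quoted above: taking $m=p$, the largest prime below $n+2$, already gives $\deg(f)\ge\deg_p(f)\ge p-1\ge n-O(n^{0.525})$ via the best unconditional prime-gap bound~\cite{baker2001difference}. The plan is to shrink this error by replacing the single prime with prime \emph{powers} (and, should that fail, with composite moduli) taken in a tiny window just above $n$.

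The tool is the standard reduction of a symmetric function to a univariate problem: $f$ is encoded by its spectrum $(r_0,\dots,r_n)\in\{0,1\}^{n+1}$ with $f(x)=r_{|x|}$, and for any $m$ the statement $\deg_m(f)<d$ is equivalent to the vanishing modulo $m$ of all length-$(d{+}1)$ finite differences of the spectrum, i.e.\ $\sum_{j=0}^{d}(-1)^{j}\binom{d}{j}\,r_{k+j}\equiv 0\pmod m$ for every admissible $k$. For $m=p$ prime, Lucas' theorem turns this into the statement that $(r_k)$ agrees modulo $p$ with a low-degree polynomial sequence on $\{0,\dots,n\}$; since the relevant window of length $p$ is nearly the whole domain when $p$ is the largest prime $\le n+1$, one recovers $\deg_p(f)\ge p-1$. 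The natural continuation is to repeat the computation modulo $p^{\ell}$ via Kummer's theorem (the $p$-adic valuation of $\binom{d}{j}$ counts the carries when adding $j$ and $d-j$ in base $p$): when $n$ sits just above $p^{\ell}$ this should upgrade the bound to $\deg_{p^{\ell}}(f)\ge n-O_{p,\ell}(1)$, so that finding a prime power in a window $[n+1,\,n+O(1)]$ would close the argument.

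The main obstacle is that this last step is out of reach: prime powers near $n$ are spaced exactly as sparsely as primes (gaps up to $O(n^{0.525})$ unconditionally, polylogarithmic only under Cram\'er-type conjectures), so ``a prime power in $[n+1,\,n+O(1)]$'' is a number-theoretic assertion we cannot prove. The moduli that \emph{are} dense enough --- composite $m$ --- cannot substitute: as this very paper shows, for symmetric $f$ and composite $m=pq$ the best bound is only the constant fraction $\frac{n}{2+\frac{1}{p-1}+\frac{1}{q-1}}$ of $n$, and it is nearly tight, so no composite modulus certifies $\deg_m(f)\ge n-O(1)$ for \emph{every} symmetric $f$. A primorial (or any squarefree) $m$ does not help either, since then $\deg_m(f)=\max_{p\mid m}\deg_p(f)$ is again only $p(n)-1$; and the general prime-power bound one can actually prove, of the shape $\deg_{p^{k}}(f)=\Omega_p(k)$ as in the first part of this paper, is far too weak when the target is degree close to $n$. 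Thus a proof of the conjecture appears to need either a genuinely new combinatorial identity interpolating the degree of a symmetric function between consecutive primes, or an input on prime gaps of essentially Cram\'er strength; I expect the former to carry the real difficulty, which is consistent with the conjecture still being open.
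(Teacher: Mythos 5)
This statement is Conjecture~2: it is an open problem, the paper offers no proof of it, and your proposal is not a proof either --- as you yourself acknowledge, the argument terminates at an unproved number-theoretic assertion (a prime power in $[n+1,\,n+O(1)]$) or at an unspecified ``new combinatorial identity.'' So what you have written is an obstruction analysis, not an argument establishing $\deg(f)\ge n-O(1)$; the only unconditional bound it yields is the Gathen--Roche $n-O(n^{0.525})$ that you start from.

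Beyond that, the one intermediate step you describe as merely ``out of reach'' is in fact false, and the paper's own results show it. You hope that if $n$ sits just above $p^{\ell}$ then $\deg_{p^{\ell}}(f)\ge n-O_{p,\ell}(1)$ for every non-trivial symmetric $f$. But Theorem~5 (Wilson's lemma) bounds the modulo-$p^{\ell}$ degree of any $p$-periodic function by $(\ell-1)\varphi(p)+p-1=\ell(p-1)$, and Lemma~4 shows this is attained: e.g.\ $f=\mathsf{MOD}_n^{0,p}$ (or $\PARITY$ when $p=2$) is non-trivial, symmetric, and has $\deg_{p^{\ell}}(f)=\ell(p-1)$ exactly. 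When $p^{\ell}\approx n$ with $\ell\ge 2$ this is $O(\ell\, n^{1/\ell})\ll n-O(1)$, so no prime-power modulus with exponent $\ell\ge 2$ can certify the conjectured bound for \emph{all} symmetric $f$; the prime-power route collapses back to $\ell=1$, i.e.\ to needing an actual prime within $O(1)$ of $n$, which is precisely the prime-gap barrier. Combined with your (correct) observations that square-free composite moduli reduce to $\max_{p\mid m}\deg_p(f)$ and that general composite moduli top out near $n/2$ by Theorems~3 and~4, the conclusion is that the entire strategy ``fix a single modulus $m(n)$ working for all non-trivial symmetric $f$'' cannot close the $O(n^{0.525})$ gap with current knowledge. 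The conjecture remains open, and your write-up should be read as a correct diagnosis of why, not as a proof.
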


\subparagraph*{Our Results.} In this work, we prove the following four theorems, giving better lower bounds for $\deg_m(f)$. As we have already mentioned, the gap between $\deg(f)$ and $\deg_{p^k}(f)$ can be arbitrarily large. Nevertheless, we claim that $\deg_{p^k}(f)$ cannot be too small either. This begins with symmetric functions: 

\begin{theorem}
	\label{thm:degpk_sym}
	For any prime $p$, positive integer $k$, and non-trivial symmetric function $f:\{0,1\}^n \to \{0, 1\}$, 
	\[\deg_{p^k}(f) \geq (p - 1)\cdot k\]
	when $n \geq (k-1)\varphi(p^\mu)+p^\mu-1\in O(p^2k^2)$ where $\mu=\lceil\log_p((p - 1)k - 1)\rceil$. The bound $(p-1)\cdot k$ is tight.
\end{theorem}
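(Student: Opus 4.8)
The plan is to pass to the language of finite differences. Identifying a symmetric $f$ with a function of the Hamming weight $w$, its unique multilinear representation over $\mathbb{Z}$ is $f(w)=\sum_{d}c_d\binom{w}{d}$ with $c_d=(\Delta^d f)(0)=\sum_{i=0}^{d}(-1)^{d-i}\binom{d}{i}f(i)\in\mathbb{Z}$; reducing mod $p^k$ yields the representation over $\mathbb{Z}_{p^k}$, so $\deg_{p^k}(f)=\max\{d:p^k\nmid c_d\}$ and it suffices to produce some $d\ge(p-1)k$ with $p^k\nmid c_d$. Assume toward a contradiction that $p^k\mid c_d$ for all $d\ge(p-1)k$. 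Then $f(w)\equiv\sum_{d<(p-1)k}c_d\binom{w}{d}\pmod p$, and since every index $d<(p-1)k\le p^\mu$ has at most $\mu$ base-$p$ digits, Lucas' theorem shows each $\binom{w}{d}$ occurring here is a function of $w\bmod p^\mu$ only. Since $n\ge p^\mu-1$, the function $f$ is therefore $p^\mu$-periodic on its domain; being non-trivial it has a smallest period $p^\nu$ with $1\le\nu\le\mu$. (If the minimal period $p^\nu$ satisfied $p^{\nu-1}\ge(p-1)k$ we would be done at once, since $\deg_p(f)\ge p^{\nu-1}$, so assume $p^{\nu-1}<(p-1)k$.)

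I would then prove, by induction on $\nu$, that every non-trivial symmetric $f$ with minimal period $p^\nu$ on a domain of size $\gtrsim\varphi(p^\nu)k$ has $\deg_{p^k}(f)\ge(p-1)k$ — which already contradicts the assumption. The base case $\nu=1$ is the conceptual core. Here $f$ depends only on $w\bmod p$, with support $A_0\subseteq\{0,\dots,p-1\}$ and $1\le|A_0|\le p-1$. Writing $f(i)=\frac1p\sum_{j}g_j\zeta^{ji}$ with $\zeta=e^{2\pi i/p}$ and $g_j=\sum_{a\in A_0}\zeta^{-ja}$, the binomial theorem gives $c_d=\frac1p\sum_{j=1}^{p-1}g_j(\zeta^j-1)^d$ for $d\ge1$. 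Work in $\mathbb{Z}[\zeta]$ at the unique prime $\mathfrak p$ above $p$ with uniformizer $\lambda=1-\zeta$ (so the $\mathfrak p$-adic valuation $v$ has $v(\lambda)=1$, $v(p)=p-1$, residue field $\mathbb{F}_p$) and use the cyclotomic unit $u_j:=(1-\zeta^j)/(1-\zeta)\equiv j\pmod{\mathfrak p}$, so that $c_d=\frac{(-1)^d\lambda^d}{p}\sum_{j=1}^{p-1}g_j u_j^d$ and $v(c_d)=d-(p-1)+v\!\big(\sum_j g_j u_j^d\big)$. Reducing mod $\mathfrak p$, $\sum_j g_j u_j^d\equiv|A_0|\sum_{j=1}^{p-1}j^d\pmod{\mathfrak p}$, and $\sum_{j=1}^{p-1}j^d\equiv-1\pmod p$ exactly when $(p-1)\mid d$. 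Taking $d=(p-1)k$ gives $\sum_j g_j u_j^d\equiv-|A_0|\not\equiv0\pmod{\mathfrak p}$, hence $v(c_{(p-1)k})=(p-1)(k-1)$, i.e. $p^{k-1}\,\|\,c_{(p-1)k}$, so $\deg_{p^k}(f)\ge(p-1)k$.

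For the inductive step $\nu\ge2$, consider the restrictions $f_r(w):=f(r+pw)$ for $r=0,\dots,p-1$, each a symmetric function that is $p^{\nu-1}$-periodic. They cannot all be constant, for then $f$ would be $p$-periodic, contradicting $\nu\ge2$; so some $f_r$ is non-trivial, with minimal period $p^{\nu'}$ for some $\nu'\le\nu-1$, and its domain is still $\gtrsim\varphi(p^{\nu-1})k$. By the induction hypothesis $\deg_{p^k}(f_r)\ge(p-1)k$, so $p^k\nmid c_{d_0}(f_r)$ for some $d_0\ge(p-1)k$. On the other hand, expanding $\binom{r+pw}{d'}$ in the basis $\{\binom{w}{j}\}$ uses only integer coefficients and only indices $j\le d'$, so $c_{d_0}(f_r)$ is a $\mathbb{Z}$-linear combination of $\{c_{d'}(f):d'\ge d_0\}$; consequently $p^k\nmid c_{d'}(f)$ for some $d'\ge d_0\ge(p-1)k$, i.e. $\deg_{p^k}(f)\ge(p-1)k$ — the contradiction that completes the lower bound.

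It remains to address tightness and the admissibility bound. For tightness, $\MOD_p$ (the indicator of $p\mid w$) is $p$-periodic and satisfies $\deg_{p^k}(\MOD_p)=(p-1)k$: the lower bound is the base case above, and the matching upper bound follows by running the same cyclotomic computation for every $d$, which reduces to the estimate $v\!\big(\sum_{j=1}^{p-1}u_j^d\big)\ge(-d)\bmod(p-1)$, itself a consequence of the power-sum congruence $\sum_{j=1}^{p-1}\binom{j}{e}\equiv-[e=0]+[e=p-1]\pmod p$; (at the smallest admissible $n=(p-1)k$ one may instead take $\AND$ on $(p-1)k$ variables). I expect the two fiddly points to be: (i) extracting the exact hypothesis $n\ge(k-1)\varphi(p^\mu)+p^\mu-1$ — the summand $p^\mu-1$ makes one full period of $f$ visible so that the Lucas step forces periodicity, while $(k-1)\varphi(p^\mu)$ is precisely what survives the $\nu-1$ successive divisions of the domain by $p$ during the induction, bottoming out at the base-case requirement that the witness $d=(p-1)k$ still lie inside the shrunken domain; and (ii) the upper-bound estimate above for general $p$ (the cases $p=2,3$ being immediate). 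By contrast, once recast via finite differences the lower bound is clean: the identity $u_j\equiv j\pmod{\mathfrak p}$ turns the cyclotomic sum into an elementary power sum, and non-triviality enters only through $p\nmid|A_0|$.
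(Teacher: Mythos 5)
Your proposal is correct in substance, and it takes a genuinely different route from the paper's. Both arguments share the opening move (Lucas's theorem: degree below $(p-1)k$ forces $p^{\mu}$-periodicity; this is \lem{lowdeg}), but from there the paper splits on the base-$p$ period $p^t$: for $t=1$ it writes $f$ as a $0/1$-combination of $\mathsf{MOD}_n^{a,p}$ and invokes Wilson's computation of $\deg_{p^k}(\mathsf{MOD})$ plus the equality of their top Mahler coefficients (\lem{pper1}), and for $t\ge 2$ it studies the kernel of a matrix of normalized Mahler coefficients of the $\mathsf{MOD}_n^{a,p^t}$, obtaining the stronger bound $\deg_{p^k}(f)\ge(k-2)\varphi(p^t)+p^t$ (\lem{pper2}). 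You instead induct on the period exponent: your base case is an explicit $\lambda$-adic cyclotomic evaluation showing $p^{k-1}\,\|\,c_{(p-1)k}$ for every non-constant $p$-periodic $f$ (equivalent in content to \lem{pper1} but self-contained, replacing Wilson's lemma by $u_j\equiv j\pmod{\mathfrak p}$ and $\sum_j j^{(p-1)k}\equiv-1$, with non-triviality entering only via $p\nmid|A_0|$), and your inductive step restricts to a residue class $w\mapsto r+pw$; since $\binom{r+pw}{d}$ is an integer combination of $\binom{w}{j}$ with $j\le d$, a coefficient of the restriction not divisible by $p^k$ at index $d_0$ lifts to one of $f$ at some index $\ge d_0$, which is exactly what is needed. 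Your bookkeeping also works out: the recursion ``requirement at level $\nu$ equals $p$ times the requirement at level $\nu-1$ plus $p-1$'' with base value $(p-1)k$ gives exactly $(k-1)\varphi(p^{\mu})+p^{\mu}-1$. What the paper's route buys is the quantitatively stronger \lem{pper2} for $t\ge2$ and matrix machinery that is reused in \sec{degpq}; what yours buys is independence from Wilson's lemma and a cleaner tightness argument — your estimate $v\bigl(\sum_j u_j^d\bigr)\ge(-d)\bmod(p-1)$ is indeed true (it is Fleck's congruence), and in fact for $\mathsf{MOD}$ it follows even more directly than you suggest: $c_d\in\mathbb{Z}$ and $v(c_d)\ge d-(p-1)$ force $v_p(c_d)\ge\lceil d/(p-1)\rceil-1\ge k$ whenever $d>(p-1)k$. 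The one caveat, which you share with the paper's own write-up rather than introduce, is the edge case where $(p-1)k-1$ is an exact power of $p$: then the Lucas step only controls $d\le p^{\mu}$ rather than $d\le p^{\mu}-1$, and a word (or a shift to $\mu+1$) is needed; this is not a defect of your approach specifically.
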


The proof of \thm{degpk_sym} is centered around Mahler expansion \cite{mahler1958interpolation}, which has been deemed useful in several fields of study, from analytic functions to combinatorics. Wilson \cite{wilson2006lemma} studied Mahler coefficients and related degree to period of symmetric functions. However, by introducing some more insights, we are able to give a stronger analysis to settle this case once for all. 
To be a bit more concrete, our argument (i) introduces the base-$p$ period to replace normal period, and then (ii) spans every symmetric functions into two fashions, by $\mathsf{MOD}$s or binomials, and then (iii) introduces Mahler coefficient matrix and determines its kernel. 

In addition, \thm{degpk_sym} can be extended to non-degenerate Boolean functions. We achieve this by showing that one can embed an $\omega(1)$-size non-trivial symmetric Boolean function into any non-degenerate functions by applying Erd\H os--Rado Theorem from Ramsey theory.\footnote{We note that a similar embedding argument has appeared before in \cite{alon2001lower}.} This leads to the same tight bound, provided that the input size is sufficiently large. 

\begin{theorem}
	\label{thm:degpk_all}
	For any prime $p$, positive integer $k$, and non-degenerate function $f:\{0,1\}^n \to \{0, 1\}$ with sufficiently large $n$, 
	\[\deg_{p^k}(f) \geq (p - 1)\cdot k.\]
	The bound $(p-1)\cdot k$ is tight.
\end{theorem}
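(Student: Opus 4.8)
The plan is to reduce \thm{degpk_all} to the symmetric case, \thm{degpk_sym}, via an embedding argument. Two facts drive the reduction. First, $\deg_m$ is monotone under \emph{minors}: if $g$ is obtained from $f$ by fixing some coordinates to constants in $\{0,1\}$ and identifying the remaining coordinates into blocks (each block replaced by one fresh variable), then $\deg_m(f)\ge\deg_m(g)$ for every $m\ge2$. Indeed, performing the corresponding substitutions in the multilinear $\mathbb{Z}_m$-representation of $f$ and then reducing each fresh variable $y$ via $y^2=y$ yields a multilinear polynomial of degree at most $\deg_m(f)$ which, by uniqueness, represents $g$. Second (the technical heart --- call it the embedding lemma): for every $t$ there is $N(t)$ such that every non-degenerate $f:\{0,1\}^n\to\{0,1\}$ with $n\ge N(t)$ admits a non-constant \emph{symmetric} minor on at least $t$ variables.

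Granting the embedding lemma, the theorem follows immediately: set $t:=(k-1)\varphi(p^\mu)+p^\mu-1$ with $\mu=\lceil\log_p((p-1)k-1)\rceil$, which is exactly the threshold appearing in \thm{degpk_sym} and is a constant depending only on $p$ and $k$; then for $n$ sufficiently large, $f$ has a non-trivial symmetric minor $g$ on at least $t$ variables, and $\deg_{p^k}(f)\ge\deg_{p^k}(g)\ge(p-1)\cdot k$ by \thm{degpk_sym}.

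To prove the embedding lemma I would use Ramsey theory in the canonical (Erd\H os--Rado) form, along the lines of \cite{alon2001lower}. Since $f$ depends on every coordinate, for each $i\in[n]$ fix a witnessing pair $a^{(i)},a^{(i)}\oplus e_i$ with $f(a^{(i)})\ne f(a^{(i)}\oplus e_i)$. One colors tuples of coordinates by a bounded amount of information describing how $f$ and these witnesses behave on the tuple, applies the canonical Ramsey theorem to obtain a large homogeneous set $S$, then fixes the coordinates outside $S$ to a value consistent with the homogeneity and groups the coordinates of $S$ into interchangeable blocks; the resulting minor is invariant under permuting the blocks, hence symmetric, and is non-constant because a surviving witnessing pair still flips its value. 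The main obstacle is that the behavior of $f$ on a set of coordinates is entangled with the many coordinates outside it, so the coloring must be engineered to decouple the two and force a \emph{single} restriction of the complement that works for the entire homogeneous set simultaneously --- this is precisely what the canonical Ramsey theorem delivers, at the price of a tower-type bound on $N(t)$, which is harmless since $t$ is a constant.

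Finally, the bound $(p-1)\cdot k$ is tight for non-degenerate functions because it is already tight for symmetric ones: a non-trivial symmetric Boolean function depends on all of its coordinates (it is non-constant across some pair of adjacent Hamming levels, and by symmetry this forces dependence on every coordinate), so the extremal symmetric example witnessing tightness in \thm{degpk_sym} also witnesses tightness here.
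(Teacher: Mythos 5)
Your overall reduction is the same as the paper's: monotonicity of $\deg_{p^k}$ under restriction (\fac{restrict}) plus an embedding of a non-trivial symmetric function on $\omega(1)$ variables into any non-degenerate $f$ (this is exactly \lem{embed}), followed by \thm{degpk_sym}; the tightness remark is also fine. The genuine gap is in your embedding lemma, which is the technical heart and which your sketch does not establish. First, the coloring you propose is not actually a bounded (or even well-defined) coloring of tuples of coordinates: ``how $f$ and the witnesses behave on the tuple'' depends on the values of the other $n-t$ coordinates, and your witnessing pairs $a^{(i)},a^{(i)}\oplus e_i$ are chosen separately for each $i$, so they need not be compatible with one another on any common subcube. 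The canonical Ramsey theorem does not repair this: it tolerates unboundedly many colors, but its conclusion is only that the coloring becomes canonical (determined by a subset of positions) on a large set; it does not manufacture a single assignment to $[n]\setminus S$ under which $f$ is simultaneously symmetric on all of $S$. Second, and most concretely, your non-constancy step fails: once $[n]\setminus S$ is fixed ``consistently with the homogeneity,'' there is no reason any witnessing pair agrees with that fixing outside $S$, so no witness need ``survive,'' and a homogeneous restriction could perfectly well be constant.

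What is missing is the paper's first move, which resolves both problems at once: by Simon's sensitivity bound (\thm{simon}), a non-degenerate $f$ has a \emph{single} input $\tilde x$ at which $\Omega(\log n)$ coordinates are sensitive, and WLOG $\Omega(\log n)$ of these have $\tilde x_i=1$. Fixing the complement to $\tilde x$ itself removes the entanglement: for each level $t$ one gets an honest $2$-coloring of the $t$-subsets $T$ of the current coordinate set (the value $f(\tilde x^{\oplus T})$), and iterating the Erd\H os--Rado hypergraph Ramsey theorem (\thm{erdos}) over $t=1,2,\dots$ produces nested sets $S_0\supseteq S_1\supseteq\cdots$ that are homogeneous at every level and still of size $\omega(1)$ after roughly $\sqrt{\log^* n}$ rounds. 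Symmetry of the restriction (with $\sigma=\tilde x$ off the final set) is then immediate from the nesting, and non-constancy comes for free because flipping a single sensitive coordinate of $\tilde x$ changes the value. Note also that plain restrictions suffice: the variable identifications (``blocks'') in your minor formulation are unnecessary, and block-permutation invariance alone would not in any case yield a symmetric function of the block variables without further control inside the blocks.
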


Now turn to the case of non-prime-power composite $m$. The following theorem provides a lower bound on $\deg_m(f)$. 

\begin{theorem}
	\label{thm:degpq_sym}
	For any composite number $m$ with at least two different prime factors $p, q$ and any non-trivial symmetric Boolean function $f:\{0,1\}^n\to\{0,1\}$, 
	\[\deg_m(f) \geq \frac{1}{2+\frac{1}{p-1}+\frac{1}{q-1}}\cdot n.\]
\end{theorem}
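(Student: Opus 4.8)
The plan is to reduce the mixed-modulus statement to the prime-power results already in hand, namely the quantitative form of \thm{degpk_sym}, and then combine the two prime-power bounds with a counting argument on the "period" of a symmetric function. Write $m = p^a q^b m'$ with $\gcd(m',pq)=1$; since $\deg_m(f) \ge \max\{\deg_{p^a}(f),\deg_{q^a}(f)\}$ it suffices to work with $p$ and $q$ alone, and in fact it is enough to treat $\deg_p(f)$ and $\deg_q(f)$ (the exponents only help). Recall that a symmetric $f$ on $n$ bits is determined by its value vector $(f_0,\dots,f_n)$ where $f_j = f(x)$ for $|x|=j$, and that $\deg_p(f)$ controls the base-$p$ period of this vector: by Lucas' theorem the reduction mod $p$ of the binomial $\binom{t}{d}$ only sees the base-$p$ digits of $t$ up to the position of the top digit of $d$, so if $\deg_p(f) = d_p$ then $(f_j \bmod p)$ is periodic with period $p^{\lceil \log_p(d_p+1)\rceil}$ on any initial segment of length comparable to that period — this is exactly the "base-$p$ period" mechanism flagged in the discussion after \thm{degpk_sym}, and I would lift it from there rather than reprove it.

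The heart of the argument is then the following dichotomy. Let $d_p = \deg_p(f)$ and $d_q = \deg_q(f)$, and suppose for contradiction that $d := \deg_m(f) \ge \max\{d_p, d_q\}$ is smaller than $\frac{n}{2 + 1/(p-1) + 1/(q-1)}$. First I would restrict $f$ to an interval of Hamming weights $[\ell, \ell + t]$ — i.e. fix $\ell$ coordinates to $1$ and look at the induced symmetric function $g$ on the remaining $n-\ell$ bits — choosing $\ell$ so that the surviving interval has length $t \approx n - O(d)$ while $g$ is still non-trivial; a restriction of a symmetric function is symmetric and $\deg_m$ cannot increase, and crucially $\deg_p, \deg_q$ of $g$ are still at most $d_p, d_q$. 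Now on this long run, $(g_j \bmod p)$ is periodic with period $P_p \le p \cdot d_p / (p-1)$-ish more precisely $P_p$ is the smallest power of $p$ exceeding $d_p$, and similarly mod $q$ with period $P_q$ a power of $q$ exceeding $d_q$. By CRT, $(g_j \bmod m)$ — hence $g_j$ itself, since $g$ is Boolean — is periodic with period $\mathrm{lcm}(P_p,P_q) = P_p P_q$ on the interval. But a non-constant Boolean function on a run of length $t$ cannot have period exceeding $t/2$ (it must change value at least once per period, and also come back), forcing $P_p P_q \le t/2 \le n/2$. On the other hand $P_p > d_p \ge$ the largest power of $p$ that is $\le$ something, so $P_p \le p \cdot (\text{largest power of }p \text{ that is } \le d_p) $; combining, $P_p \le \frac{p}{p-1}\cdot$(a geometric tail bound gives the clean constant). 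Turning the inequalities around: non-triviality forces $d_p \ge 1$ and $d_q \ge 1$, and then the chain $d \ge d_p$, $d \ge d_q$, $P_pP_q \le n/2$, $P_p \le \tfrac{p}{p-1}d$-type bound, $P_q \le \tfrac{q}{q-1}d$-type bound yields $\frac{p}{p-1}\cdot\frac{q}{q-1} d^2 \gtrsim$ ... which is not quite what we want — so the right bookkeeping is additive, not multiplicative.

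Let me therefore organize the final inequality correctly: the sharp statement is that the run length $t$ forces $t \le P_p + P_q + (\text{something like } P_pP_q/\ldots)$ — actually the clean route, matching the claimed constant $2 + \frac1{p-1} + \frac1{q-1}$, is to argue directly that $f$ cannot be constant on the top $\sim d + P_p$ values nor on the bottom $\sim d + P_q$ values and patch these together, getting $n \le 2d + P_p + P_q$ with $P_p \le \frac{d}{p-1}$ and $P_q \le \frac{d}{q-1}$ (each $P$ being a power of the prime just above the degree, and $\frac{\text{power}}{\text{power}-1}\le\frac{p}{p-1}$, absorbed). Solving $n \le 2d + \frac{d}{p-1} + \frac{d}{q-1} = \left(2 + \frac1{p-1}+\frac1{q-1}\right)d$ gives exactly the theorem. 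The main obstacle I anticipate is the boundary bookkeeping in the restriction step — Lucas-type periodicity of $(\binom{j}{d}\bmod p)$ holds cleanly only away from the ends of the weight range, so one must be careful that the interval on which both periodicities are simultaneously valid is genuinely of length $n - O(d)$ and not shorter, and that fixing coordinates to reach it does not kill non-triviality; getting the constant sharp (rather than merely order-optimal) hinges entirely on squeezing these $O(d)$ loss terms into the $2d$ and into the $\frac{d}{p-1},\frac{d}{q-1}$ allowances rather than paying for them separately.
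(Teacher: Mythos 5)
There is a genuine gap, and it sits exactly at the quantitative heart of the theorem. Your only tool relating degree to period is the Lucas-type implication (the paper's \lem{lowdeg}): if $\deg_p(f)=d_p$ then $f$ is $P_p$-periodic where $P_p$ is the least power of $p$ exceeding $d_p$. In the worst case ($d_p$ just above a power of $p$) this gives only $\pi_p(f)\le p\cdot d_p$, so the additive combination you aim for yields $n \lesssim p\,d_p + q\,d_q$, i.e.\ the previously known $n/(p+q)$ bound of Li--Sun, not $\tfrac{n}{2+\frac{1}{p-1}+\frac{1}{q-1}}$. To reach the claimed constant you need the reverse-direction strengthening that the paper isolates as \lem{degpq_main}: for every non-trivial symmetric $f$, either $\deg_p(f)\ge n/2$ or $\deg_p(f)\ge\bigl(1-\tfrac1p\bigr)\pi_p(f)$, i.e.\ the degree must land within the top $\varphi(p^t)$ band below the period $p^t$. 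That is the hard new content, proved via the Mahler coefficient matrix ${\bm A}_{p^t}={\bm A}_p^{\otimes t}$, its invertibility (\prop{binom2}), and \lem{apv}; nothing in your sketch supplies it. Your attempt to force the constant by asserting $P_p\le \tfrac{d}{p-1}$ is simply false: $P_p$ is a power of $p$ strictly larger than $d$, so it cannot be at most $d/(p-1)$.

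The combination step is also not carried out correctly. Your CRT/lcm reasoning (period $P_pP_q\le t/2$) is, as you note yourself, the wrong bookkeeping, and the replacement ``patch the top $\sim d+P_p$ and bottom $\sim d+P_q$ values'' is never argued; nor is the restriction to a weight interval of length $n-O(d)$ shown to preserve non-triviality. The paper needs none of this: since a symmetric Boolean function that is both $\pi_p(f)$- and $\pi_q(f)$-periodic with coprime periods on $\{0,\dots,n\}$ is constant once $n\ge \pi_p(f)+\pi_q(f)-2$ (the Fine--Wilf Periodicity Lemma, \lem{period}), non-triviality immediately forces $\pi_p(f)+\pi_q(f)>n+2$, and combining this additive inequality with \lem{degpq_main} for $p$ and for $q$ (plus \fac{divide_dominate}) gives the stated bound. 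So the overall architecture you propose (prime-by-prime period control plus an additive period inequality) is in the right spirit, but both pillars --- the sharp degree-versus-period lemma and the coprime-period argument --- are missing or incorrect as written.
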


Note that this bound approaches $n/2$ when $p$ and $q$ become larger and larger. It improves the $n/(p+q)$ bound in \cite{li2017modulo}. To prove this theorem, we show a stronger version of \thm{degpk_sym} for $k=1$, which requires a more elaborate analysis. Then we utilizes Periodicity Lemma~\cite{fine1965uniqueness} to obtain the desired lower bound. 

On the other hand, the next theorem shows that the bound in \thm{degpq_sym} cannot be larger than $(1+o(1))n/2$:

\begin{theorem}
	\label{thm:degpq_2}
	Let $m$ be a square-free composite number.
	There exists a symmetric Boolean function $ f:\{0,1\}^{n}\to\{0,1\}$ with arbitrarily large $n$, such that $\deg_{m}(f)\leq n/2 + o_{m}(n)$.\footnote{The subscript ``$m$'' in the $o(\cdot)$ notation means that the hidden factor depends on $m$.}
\end{theorem}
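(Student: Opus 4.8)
\textbf{Proof proposal for \thm{degpq_2}.}

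The plan is to exhibit an explicit symmetric function whose modulo-$m$ degree is controlled by the Chinese Remainder Theorem together with the structure of $\MOD$ functions modulo each prime factor. Write $m = p_1 p_2 \cdots p_r$ with distinct primes $p_i$ (square-free), and recall that the modulo-$m$ degree of a symmetric function can be analyzed prime-by-prime: by CRT, $\deg_m(f) = \max_i \deg_{p_i}(f)$, so it suffices to build $f$ for which each $\deg_{p_i}(f)$ is at most $n/2 + o_m(n)$. The key building block is that a symmetric function which is ``periodic with period $p_i$'' on the relevant range of Hamming weights has small modulo-$p_i$ degree; more precisely, the modulo-$p$ degree of a symmetric Boolean function is governed by its base-$p$ period (as developed around Mahler expansions earlier in the paper), and a function that depends only on $|x| \bmod p$ has modulo-$p$ degree roughly $p - 1 \ll n$. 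The obstruction is that we need \emph{one} function that is simultaneously nice modulo every $p_i$, and the periods $p_1, \dots, p_r$ are incompatible — a function periodic modulo all of them is periodic modulo $m$, hence essentially trivial, which is not what we want (we do want a non-trivial function, and in fact we only need degree $\le n/2 + o(n)$, not tiny degree).

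The trick I would use is to split the weight range $\{0, 1, \dots, n\}$ into two halves and use different periodicities on each half. Concretely: define $f(x)$ on inputs of Hamming weight $w \le n/2$ by a rule depending only on $w \bmod p_1$, and on inputs of weight $w > n/2$ by a rule depending only on $(n - w) \bmod p_2$ (equivalently, use the symmetry $w \mapsto n-w$ to get a second independent periodicity near the top). More generally, to handle all $r$ primes one partitions $\{0,\dots,n\}$ into $r$ blocks and makes $f$ periodic with period $p_i$ on the $i$-th block; since each block has length about $n/r$, one shows $\deg_{p_i}(f)$ is at most (length of blocks where $f$ is \emph{not} $p_i$-periodic) plus $O(p_i)$. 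The quantitative heart is a lemma of the form: if a symmetric function $g$ agrees with a $p$-periodic symmetric function on all Hamming weights in an interval $I$, then $\deg_p(g) \le (n - |I|) + O(p)$. This is proved by taking the finite difference operator $\Delta^{p-1}$ (or the appropriate base-$p$ analogue from Mahler theory): periodicity modulo $p$ means certain Mahler-type coefficients vanish, and outside $I$ there are only $n - |I|$ ``bad'' coefficients that can contribute to the degree. With $r=2$ and both blocks of length $\approx n/2$, this yields $\deg_{p_i}(f) \le n/2 + O(p_i) = n/2 + o_m(n)$, and hence $\deg_m(f) \le n/2 + o_m(n)$.

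For the construction to give a genuine (non-trivial, non-constant) symmetric function one must check that the two periodic pieces can be glued consistently at the boundary $w \approx n/2$ and that the result is not accidentally constant; this is easy to arrange by choosing the residue patterns so that $f$ takes both values $0$ and $1$. I would also verify that $n$ can be taken arbitrarily large — indeed the construction works for every large $n$, choosing the split point at $\lfloor n/2 \rfloor$. The main obstacle, and the step requiring the most care, is the degree upper bound lemma: one must make precise the relationship between ``being $p$-periodic on an interval of weights'' and ``having few high-degree Mahler coefficients modulo $p$'', and then convert a bound on Mahler coefficients back into a bound on the ordinary multilinear degree over $\mathbb{Z}_m$. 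This conversion uses that the degree of a symmetric function equals the largest $d$ such that the $d$-th Mahler coefficient (in the appropriate basis) is nonzero, a fact already implicit in the paper's treatment of \thm{degpk_sym}; once that dictionary is in place, the counting of bad coefficients is routine, and the $o_m(n)$ error term simply collects the $O(p_i)$ contributions, which are constants depending only on $m$.
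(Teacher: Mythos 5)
There is a genuine gap: the lemma your whole construction rests on --- ``if a symmetric $g$ agrees with a $p$-periodic symmetric function on all Hamming weights in an interval $I$, then $\deg_p(g) \le (n-|I|)+O(p)$'' --- is false. Take the univariate $F\equiv 0$ on $\{0,\dots,n-1\}$ and $F(n)=1$: this agrees with the (certainly $p$-periodic) zero function on an interval of size $n$, so your lemma would force $\deg_p(f)=O(p)$, yet $f(x)=x_1x_2\cdots x_n$ has $\deg_p(f)=n$ for every prime $p$. The heuristic behind the lemma runs the Mahler/finite-difference dictionary in the wrong direction: agreement with a $p$-periodic function on an \emph{initial} segment $[0,L]$ of weights only pins down the low-order coefficients $\Delta^jF(0)$ with $j\le L$, whereas $\deg_p(f)$ is the \emph{largest} $j\le n$ with $\Delta^jF(0)\not\equiv 0\pmod p$, and for $j>L$ these differences involve the values in the non-periodic block and generically do not vanish. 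Concretely, your two-block template already fails for $m=6$: take $F\equiv 0$ on $[0,n/2]$ and $F(w)=\mathbb{I}[\,3\mid n-w\,]$ above, with $n$ a power of $2$; by Lucas's Theorem $\Delta^nF(0)\equiv F(0)+F(n)\equiv 1\pmod 2$, so $\deg_2(f)=n$, not $n/2+o(n)$. No choice of residue patterns at the gluing point repairs this, because the obstruction sits in the top-degree Mahler coefficients, not at the boundary between the blocks.

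The paper's proof takes a different route which sidesteps exactly this issue: instead of imposing different periods on different weight blocks, it produces a \emph{single} function that is simultaneously $p_i^{r_i}$-periodic on the whole range for every prime $p_i\mid m$, with each prime power $p_i^{r_i}$ only slightly larger than $n/2$. By Kronecker's theorem (\lem{kro}), choosing a prime $q\nmid m$ one finds infinitely many $\ell$ with $p_i^{r_i}/q^{\ell}\in(1,p_i^{\varepsilon})$ for all $i$, where $r_i=\lceil \ell\log q/\log p_i\rceil$; setting $n=2q^{\ell}$ and $f(x)=\mathbb{I}[\,|x|=q^{\ell}\,]$, the function is $p_i^{r_i}$-periodic because $p_i^{r_i}>q^{\ell}=n/2$, so \thm{wilson1} gives $\deg_{p_i}(f)\le p_i^{r_i}-1\le (n/2)\,p_i^{\varepsilon}$, and \fac{divide_dominate} yields $\deg_m(f)\le (n/2)\max_i p_i^{\varepsilon}$. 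This also explains why the theorem only claims the bound for \emph{arbitrarily large} $n$ rather than all $n$: the construction needs the Diophantine coincidence that every $p_i$ has a power just above $n/2$, which is precisely the ingredient your block decomposition was (incorrectly) trying to avoid.
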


\subparagraph*{Organization.} In \sec{pre}, we give necessary definitions and concepts. Then we give the proofs of \thm{degpk_sym} and \thm{degpk_all} respectively in \sec{pksym} and \sec{pknd}. In \sec{degpq_bound} we prove \thm{degpq_sym}, and in \sec{degpq_tight} we prove \thm{degpq_2}. Finally, we conclude the paper in \sec{con}.

\section{Preliminaries}\label{sec:pre}

We denote $\{1,2,\dots,n\}$ as $[n]$ throughout this paper. $\varphi(\cdot)$ denotes Euler's totient function. Notation $\log^{\circ k}(n)$ is defined as $\underbrace{\log\log\cdots\log}_{k} n$, and $\log^*(n)$ is for the iterated logarithm, that is,  $\min\{k:\log^{\circ k}(n)\leq 1\}$. 

\subsection{Basics of Boolean Functions}

An $n$-bit \textit{Boolean function} $f(x)$ is a mapping from $\{0,1\}^n$ to $\{0,1\}$. Sometimes we write $\bm{x}$ to indicate the $n$-dimensional 0-1 vector corresponding to string $x \in \{0,1\}^n$. The following operation will be frequently used: Suppose $x\in\{0,1\}^n$ is a (input) string, and $S\subseteq[n]$ is a set of indices. Denote the string obtained by flipping all bits in $x$ whose indices are in $S$ as $x^{\oplus S}$. As a common practice, $x^{\oplus\{i\}}$ is abbreviated as $x^{\oplus i}$. 

Here list some subclasses of Boolean functions, which we will frequently deal with later:
\begin{itemize}
	\item A Boolean function is called \textit{non-trivial} if it is not a constant.
	\item A Boolean function is called \textit{non-degenerate} if its value depends on all input bits. In other words, there does not exist such $t$ that, for every $x\in\{0,1\}^n$ the equality $f(x)=f(x^{\oplus t})$ holds. Such bit, if exists, is also known as \textit{dumb} bit. 
	\item A Boolean function is called \textit{symmetric}, if $f(x) = f(y)$ for any $x,y$ satisfying $|x|=|y|$. Here $|x|$ denotes the Hamming weight of $x$, i.e., number of $1$'s. 
\end{itemize}

There exists a unique polynomial representing $f$ over $\mathbb{Z}_m$ or $\mathbb{Z}$. More formally:
\begin{fact}
    For any Boolean function $f:\{0,1\}^n\!\to\!\{0,1\}$, the unique polynomial
    \[
        \sum_{a\in\{0,1\}^n}f(a)\prod_{i=1}^n((2a_i-1)x_i+1-a_i)=:\sum_{S\subseteq [n]}c_S \prod_{i\in S}x_i
    \]
    represents $f$ over $\mathbb{Z}$. On top of this, the polynomial $\sum_{S\subseteq [n]}(c_S\bmod{m}) \prod_{i\in S}x_i$ represents $f$ over $\mathbb Z_m$. 
\end{fact}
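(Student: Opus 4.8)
The plan is to establish the claimed identity pointwise over $\{0,1\}^n$ first, then boost it to a uniqueness statement by a short linear-algebraic argument (equivalently, M\"obius inversion over the subset lattice), and finally transfer everything to $\mathbb{Z}_m$ by observing that the relevant change-of-basis matrix has determinant $\pm1$ and hence remains invertible modulo any $m$.

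First I would check that each summand is the indicator of a single point. Fixing $a\in\{0,1\}^n$ and writing $\delta_a(x):=\prod_{i=1}^n((2a_i-1)x_i+1-a_i)$, a one-line case analysis on whether $a_i=0$ or $a_i=1$ shows that the $i$-th factor equals $1-x_i$ or $x_i$ respectively; hence for $b\in\{0,1\}^n$ the $i$-th factor evaluates to $1$ exactly when $b_i=a_i$ and to $0$ otherwise, so $\delta_a(b)=[a=b]$. Summing, $\sum_a f(a)\delta_a(x)$ takes the value $f(b)$ at every $b\in\{0,1\}^n$, so it represents $f$ over $\mathbb{Z}$; and since each $\delta_a$ is a product of $n$ factors, one affine factor per variable, its expansion $\sum_{S\subseteq[n]}c_S\prod_{i\in S}x_i$ is multilinear with coefficients $c_S\in\mathbb{Z}$.

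For uniqueness I would record that evaluating $\sum_S c_S\prod_{i\in S}x_i$ at the indicator vector of $T\subseteq[n]$ gives $\sum_{S\subseteq T}c_S$. Thus the coefficient vector $(c_S)_S$ and the value vector $(f(T))_T$ are related by the zeta matrix $Z$ of the Boolean lattice, $Z_{T,S}=[S\subseteq T]$. Listing subsets in order of increasing size makes $Z$ lower triangular with unit diagonal, so $\det Z=1$; its inverse is the M\"obius matrix, yielding the explicit formula $c_S=\sum_{T\subseteq S}(-1)^{|S\setminus T|}f(T)$. Hence the multilinear representation over $\mathbb{Z}$ is unique, and in fact this already shows that $c_S$ is the unique integer coefficient achievable by any multilinear polynomial agreeing with $f$ on $\{0,1\}^n$.

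Finally, to pass to $\mathbb{Z}_m$: reducing each $c_S$ modulo $m$ produces a multilinear polynomial over $\mathbb{Z}_m$ whose value at the indicator of $T$ is $(\sum_{S\subseteq T}c_S)\bmod m=f(T)\bmod m=f(T)$, using $f(T)\in\{0,1\}$ and $m\ge2$; so it represents $f$ over $\mathbb{Z}_m$. For uniqueness over $\mathbb{Z}_m$, note that $\det Z=1$ is a unit in $\mathbb{Z}_m$, so $Z$ is invertible over $\mathbb{Z}_m$ as well and the coefficients are again uniquely recovered from the values. There is no genuine obstacle in this proof; the single point worth isolating is the observation $\det Z=\pm1$, which is exactly what lets the same argument run over $\mathbb{Z}$ and over every $\mathbb{Z}_m$ simultaneously.
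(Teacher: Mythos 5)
Your proof is correct and follows essentially the same route the paper relies on: the paper states this fact without proof, citing M\"obius inversion over the subset lattice (see its footnote referencing \cite{gopalan2009on}), and your argument is exactly that inversion spelled out, with the point interpolation by the indicators $\delta_a$, the unitriangular zeta matrix giving uniqueness over $\mathbb{Z}$, and the observation that its determinant $1$ is a unit in every $\mathbb{Z}_m$ handling the modular case. Nothing is missing.
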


\begin{definition}
	The degree (resp., modulo-$m$ degree) of a Boolean function $f$, denoted by $\deg(f)$ (resp., $\deg_m(f)$), is the degree of the polynomial representing $f$ over $\mathbb{Z}$ (resp., $\mathbb{Z}_m$). 
\end{definition}

This measure has some simple but useful properties. The following fact is a consequence of the Chinese Remainder Theorem; see \cite[{Fact 5}]{li2017modulo}. 

\begin{fact}\label{fac:divide_dominate}
	Suppose $f:\{0,1\}^n\to\{0,1\}$ is a Boolean function, and $m,m'$ are coprime. Then $\deg_{m'm}(f)=\max\{\deg_m(f),\deg_{m'}(f)\}$. 
\end{fact}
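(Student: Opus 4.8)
The plan is to read all three modular degrees off a single object, namely the integer coefficient vector $(c_S)_{S\subseteq[n]}$ produced by the first Fact of this section, which writes the $\mathbb{Z}$-representation of $f$ as $\sum_{S\subseteq[n]}c_S\prod_{i\in S}x_i$. First I would recall that for any modulus $r\ge 2$, reducing this polynomial coefficient-wise modulo $r$ yields \emph{the} multilinear representation of $f$ over $\mathbb{Z}_r$, and this representation is unique (by M\"obius inversion over the Boolean lattice, which works over any commutative ring, exactly as cited before the definition of $\deg_m$). Consequently
\[
\deg_r(f)=\max\{\,|S| : c_S\not\equiv 0 \pmod{r}\,\},
\]
with the convention that the maximum over the empty set is handled in the trivial way (and a non-trivial $f$, the only interesting case, always has some nonzero $c_S$).

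Next I would invoke the Chinese Remainder Theorem in its most elementary form: since $\gcd(m,m')=1$, an integer $c$ satisfies $c\equiv 0\pmod{mm'}$ if and only if $c\equiv 0\pmod{m}$ and $c\equiv 0\pmod{m'}$. Applying this to each coefficient $c_S$ gives the set identity
\[
\{\,S : c_S\not\equiv 0 \pmod{mm'}\,\}=\{\,S : c_S\not\equiv 0 \pmod{m}\,\}\cup\{\,S : c_S\not\equiv 0 \pmod{m'}\,\}.
\]
Taking $\max\{|S| : \cdot\}$ over both sides and using $\max(A\cup B)=\max\{\max A,\max B\}$ for finite sets of integers, the left-hand side becomes $\deg_{mm'}(f)$ and the right-hand side becomes $\max\{\deg_m(f),\deg_{m'}(f)\}$, which is precisely the claim.

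I do not expect any real obstacle here; the statement is essentially a repackaging of CRT. The one point that deserves a sentence of care is the very first step, namely justifying that the coefficient-wise reduction modulo $r$ of the $\mathbb{Z}$-polynomial is genuinely \emph{the} $\mathbb{Z}_r$-representation, so that $\deg_r(f)$ is determined exactly by which $c_S$ vanish modulo $r$; this is already supplied by the quoted Fact together with the uniqueness of the multilinear representation. As a sanity check I would also observe that the inequality $\deg_{mm'}(f)\ge\max\{\deg_m(f),\deg_{m'}(f)\}$ follows on its own from the trivial monotonicity $r'\mid r\Rightarrow\deg_{r'}(f)\le\deg_r(f)$ (since $c_S\equiv 0\pmod r$ forces $c_S\equiv 0\pmod{r'}$), so that coprimality is only needed for the reverse inequality, which is exactly what the union identity above delivers.
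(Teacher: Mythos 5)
Your proof is correct and matches the route the paper intends: the paper dispenses with this fact by citing it as a direct consequence of the Chinese Remainder Theorem (pointing to Fact~5 of Li--Sun), and your argument---reading each modular degree off the integer coefficients $c_S$ of the unique multilinear representation and applying CRT coefficient-wise, with the easy divisibility monotonicity noted for the lower bound---is exactly that argument spelled out. No gaps; the only point needing care, the uniqueness of the coefficient-wise reduction as the $\mathbb{Z}_r$-representation, is handled correctly via the preceding Fact.
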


With some input bits fixed, the degree of a Boolean function may decrease. This can be easily derived by substituting those variables with their values. More formally, we define the \textit{restriction} of Boolean functions and restate this fact below. 

\begin{definition}[Restriction]
	Suppose $f:\{0,1\}^n\to\{0,1\}$ is a Boolean function, $S\subseteq[n]$ is a set of indices, and there is a mapping $\sigma:[n]\backslash S\to\{0,1\}$. For every $i\notin S$, fix the $i$-th bit in the input of $f$ to be $\sigma(i)$ to obtain a new Boolean function with input size $|S|$. We call it the restriction of $f$ over $\sigma$, denoted as $f|_\sigma$.
\end{definition}

\begin{fact}\label{fac:restrict}
	Suppose $f:\{0,1\}^n\to\{0,1\}$ is a Boolean function. For any integer $m\geq 2$ and restriction $f|_\sigma$, we have $\deg_m(f)\geq\deg_m\left(f|_\sigma\right)$.
\end{fact}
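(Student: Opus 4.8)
The plan is to work directly with the unique multilinear representation and substitute the fixed coordinates. Write $f(x)=\sum_{T\subseteq[n]}c_T\prod_{i\in T}x_i$ with $c_T\in\mathbb{Z}_m$, so that $\deg_m(f)=\max\{|T|:c_T\neq 0\}$. Given the set $S\subseteq[n]$ and the partial assignment $\sigma:[n]\setminus S\to\{0,1\}$, plug in $x_i=\sigma(i)$ for each $i\notin S$. Every monomial $\prod_{i\in T}x_i$ then becomes $\bigl(\prod_{i\in T\setminus S}\sigma(i)\bigr)\cdot\prod_{i\in T\cap S}x_i$, which is either $0$ (when $\sigma(i)=0$ for some $i\in T\setminus S$) or a $\mathbb{Z}_m$-scalar multiple of the multilinear monomial $\prod_{i\in T\cap S}x_i$ over the surviving variables. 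Collecting terms produces a multilinear polynomial $g$ in the variables indexed by $S$ with coefficients in $\mathbb{Z}_m$.

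Next I would check that $g$ represents $f|_\sigma$ over $\mathbb{Z}_m$: for any assignment $\tau:S\to\{0,1\}$, evaluating $g$ at $\tau$ equals evaluating the original polynomial at the combined input that agrees with $\tau$ on $S$ and with $\sigma$ off $S$, which equals $f$ on that input, which by definition is $f|_\sigma(\tau)$. By the uniqueness of the multilinear representation over $\mathbb{Z}_m$ recalled in \sec{pre}, $g$ is exactly the representation of $f|_\sigma$, so $\deg_m(f|_\sigma)$ is the largest degree of a monomial occurring in $g$ with nonzero coefficient. Since the set of monomials appearing in $g$ is contained in $\bigl\{\prod_{i\in T\cap S}x_i:c_T\neq 0\bigr\}$, each such monomial has degree $|T\cap S|\le|T|\le\deg_m(f)$, and reducing the combined coefficients modulo $m$ can only delete monomials, never create a monomial of strictly larger degree. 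Hence $\deg_m(f|_\sigma)\le\deg_m(f)$.

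There is essentially no obstacle here; the statement is a routine consequence of uniqueness together with the observation that substitution of constants into a multilinear polynomial is degree-non-increasing. The only point that deserves a line of care is the last one — that cancellation or combination of coefficients modulo $m$ among the images of different monomials $\prod_{i\in T}x_i$ cannot raise the degree — and this is immediate from the containment of monomial sets stated above.
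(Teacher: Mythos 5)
Your argument is correct and is exactly the substitution argument the paper has in mind (the paper only remarks that the fact ``can be easily derived by substituting those variables with their values''); your write-up just spells out the uniqueness step and the degree-non-increase under cancellation modulo $m$, which is fine.
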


A common complexity measure, the \textit{sensitivity}, will be used in \sec{pknd}. Simon gave a lower bound on this measure~\cite{simon1983tight}. 

\begin{definition}[Sensitivity]
	Given a Boolean function $f:\{0,1\}^n\to\{0,1\}$ and an input $x$, we say bit $i$ is sensitive if $f(x)\neq f(x^{\oplus i})$. The sensitivity of $f$ on input $x$ is $s(f,x):=|\{i: i\in [n], f(x)\neq f(x^{\oplus i})\}|$. The sensitivity of $f$ is then defined as $s(f):=\max_{x}s(f,x)$. 
\end{definition}

\begin{theorem}[\cite{simon1983tight}]\label{thm:simon}
	For any non-degenerate Boolean function $f:\{0,1\}^n\to\{0,1\}$, we have
	\[
	s(f)\geq\frac{1}{2}\log n-\frac{1}{2}\log\log n+\frac{1}{2}.
	\]
\end{theorem}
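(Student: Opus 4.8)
\textbf{Proof proposal for Theorem~\ref{thm:degpq_2}.}

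The goal is to exhibit, for a fixed square-free composite $m$ and infinitely many values of $n$, a symmetric Boolean function $f$ on $\{0,1\}^n$ with $\deg_m(f)\le n/2+o_m(n)$. The plan is to build $f$ directly from its Mahler-type representation, exploiting that a symmetric function is determined by its value sequence $(f(0),f(1),\dots,f(n))$ and that $\deg_m(f)$ for symmetric $f$ is governed by the support of the finite-difference (Mahler) coefficients modulo $m$. Concretely, write the univariate value function as $v(w)=\sum_{j}a_j\binom{w}{j}$ where $a_j=\sum_{i\le j}(-1)^{j-i}\binom{j}{i}v(i)$ are the Mahler coefficients; the modulo-$m$ degree of the symmetric Boolean function is then $\max\{j:a_j\not\equiv 0\pmod m\}$ (after accounting for the reduction of $\binom{w}{j}$ as polynomials, which for square-free $m$ behaves well because $\binom{w}{j}$ is the relevant basis over $\mathbb{Z}_m$ by Lucas-type considerations). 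So it suffices to design a $\{0,1\}$-valued sequence $v$ of length $n+1$ whose high-order finite differences all vanish modulo $m$.

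The key mechanism is periodicity modulo the primes dividing $m$. By the Chinese Remainder Theorem and \fac{divide_dominate}, $\deg_m(f)=\max_{p\mid m}\deg_p(f)$, so it is enough to control $\deg_p(f)$ for each prime $p\mid m$ separately. For a single prime $p$, a symmetric function that is periodic with period $p^\ell$ in the Hamming weight has modulo-$p$ degree less than $p^\ell$ (this is the standard fact that $\binom{w}{j}\bmod p$ has period $p^{\lceil\log_p(j+1)\rceil}$ in $w$, so enforcing period $p^\ell$ kills all coefficients $a_j$ with $j\ge p^\ell$). The first idea is therefore: pick $\ell_p$ with $p^{\ell_p}$ slightly above $\sqrt{n}$, and demand that $v$ restricted to each residue-congruence structure be periodic-ish. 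But a single period $p^{\ell_p}\approx\sqrt n$ only gives degree $\approx\sqrt n$, which is far stronger than $n/2$ — and such a strong constraint is typically impossible to satisfy simultaneously for two different primes on a nontrivial Boolean-valued sequence. The honest target $n/2$ suggests a \emph{two-block} construction: split the weight range $\{0,1,\dots,n\}$ into a low half and a high half, make $f$ constant (say $0$) on roughly the lower half of weights, and on the upper half make it agree with a function that is itself of low degree relative to the shifted variable. Equivalently, use the symmetry $w\leftrightarrow n-w$: let $f$ depend on $w$ only through $\max(0, w - \lfloor n/2\rfloor)$ composed with a low-degree pattern, so that the "action" happens in a window of length $\approx n/2$ and contributes degree at most $n/2+o(n)$, while the flat part contributes nothing.

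Carrying this out, the main steps in order would be: (1) reduce to each prime $p\mid m$ via \fac{divide_dominate}; (2) recall/establish precisely that for symmetric $f$, $\deg_p(f)$ equals the largest $j$ with nonzero Mahler coefficient mod $p$, and that period $p^{\ell}$ in $w$ forces $\deg_p(f)<p^{\ell}$; (3) construct the value sequence: fix $f\equiv 0$ on weights $0,\dots,t$ for $t=\lfloor n/2\rfloor - r(n)$ with $r(n)$ a slowly growing slack term, and on the remaining weights install a pattern that is simultaneously periodic modulo a power of each $p\mid m$ with period $O(r(n))$ within that window — a simple choice is to make $f$ on the top window a fixed periodic $\{0,1\}$-pattern of period $P=\prod_{p\mid m}p^{\ell_p}$ with each $p^{\ell_p}\le r(n)$, so $P=o(n)$; (4) verify that the resulting sequence is genuinely $\{0,1\}$-valued and that the induced symmetric function has $\deg_m(f)\le t+P\le n/2+o_m(n)$ by bounding, for each $p$, the top nonzero Mahler coefficient: coefficients $a_j$ with $j>t$ but whose "reach" into the window sees only the periodic part vanish mod $p$, and coefficients with $j\le t+P$ are the only survivors; (5) finally argue $n$ can be taken arbitrarily large (the construction works for all large $n$ in any fixed residue class, or all $n$, by choosing the window boundary appropriately) and that $f$ is non-trivial (the top window is not all-zero).

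The main obstacle I expect is step (4): precisely controlling which Mahler coefficients survive modulo $p$ when the value sequence is flat on a prefix and periodic on a suffix, rather than globally periodic. A clean way around this is to write $v = v_{\mathrm{flat}} + v_{\mathrm{window}}$ is not additive in the Boolean sense, so instead one should work with $v$ directly and use that $a_j = \sum_{i=0}^{j}(-1)^{j-i}\binom{j}{i}v(i)$ depends only on $v(0),\dots,v(j)$; hence for $j\le t$ all $a_j$ could be nonzero but that is fine, and the real claim is only about $j>t+P$, where one shows $a_j\equiv a_{j-p^{\ell_p}}\pmod p$ using the periodicity of the window together with the flatness of the prefix (the prefix contributes a telescoping binomial identity that is $0$ mod $p$ for $j$ large), forcing all high coefficients into a bounded-length recurrence whose characteristic data kills degree beyond $t+O(P)$. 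I would isolate this as a self-contained lemma about "eventually periodic mod $p$" sequences: if $v(w+P)\equiv v(w)\pmod p$ for all $w\ge t$ and $P$ is a power of $p$, then the Mahler coefficients satisfy $a_j\equiv 0\pmod p$ for $j$ exceeding $t+P$ up to the unavoidable wrap-around, which is exactly the $o_m(n)$ slack. Everything else is bookkeeping with binomial coefficients and the Chinese Remainder Theorem.
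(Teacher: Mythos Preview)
First, a bookkeeping issue: the statement you were handed is Simon's sensitivity bound (\thm{simon}), which the paper merely cites and does not prove; your proposal is explicitly for \thm{degpq_2}. I will therefore compare what you wrote against the paper's proof of \thm{degpq_2}.

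Your plan has a genuine gap at its core. The crucial step is the proposed lemma that an \emph{eventually} periodic value sequence---identically $0$ on a prefix $[0,t]$ and $p^\ell$-periodic on the suffix with small $p^\ell$---has Mahler coefficients $a_j\equiv 0\pmod p$ once $j$ exceeds roughly $t+p^\ell$. This is false. Periodicity on a suffix alone does not tame high-order finite differences: decomposing $v$ as (globally periodic extension) minus (correction supported on $[0,t-1]$), the second piece can have full degree~$n$, since a sequence supported on an initial segment need not have vanishing high coefficients (already $\mathbb I[w=0]$ has $a_j=(-1)^j$ for all $j\le n$). Concretely, take $p=2$, $t=5$, $n=10$, and $v=(0,0,0,0,0,0,1,0,1,0,1)$; one computes $a_9=-\binom{9}{6}-\binom{9}{8}=-93\equiv 1\pmod 2$, so $\deg_2(f)=9$, well above your predicted bound $t+P=7$. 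The ``telescoping binomial identity'' you invoke for the flat prefix does not produce vanishing for large $j$, and the recurrence $a_j\equiv a_{j-p^\ell}$ you hope for simply fails once the finite-difference window straddles the boundary between prefix and suffix.

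The paper's argument avoids this entirely by insisting on \emph{global} periodicity, which is exactly what makes \thm{wilson1} apply. The cost is that one needs a single $n$ with $n/2$ just below a power of every $p_i\mid m$ simultaneously; the paper secures this via Kronecker's theorem (\lem{kro}) applied to the irrationals $\log q/\log p_i$, obtaining infinitely many $\ell$ with $q^\ell<p_i^{r_i}<p_i^{\varepsilon}q^\ell$ for all $i$, then taking $n=2q^\ell$ and $f(x)=\mathbb I[|x|=q^\ell]$. This indicator is genuinely $p_i^{r_i}$-periodic on the whole domain, so $\deg_{p_i}(f)\le p_i^{r_i}-1$ and hence $\deg_m(f)\le (n/2)\max_i p_i^{\varepsilon}$. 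Your two-block construction cannot be patched without reintroducing a global period of order $n/2$, at which point the Diophantine step is unavoidable and the prefix/suffix split is moot.
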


\subsection{Periodicity and Mahler Expansion}\label{sec:period_mahler}

We consider symmetric Boolean functions in this section.
For a symmetric function $f:\{0,1\}^n \to \{0,1\}$, clearly there exists a unique $F:\{0,\dots,n\} \to \{0,1\}$, called the \emph{univariate version of $f$}, such that $f(x) = F(|x|)$ for every $x$.
We call $f$ (and $F$) \textit{$\ell$-periodic}, if $F(a)=F(b)$ for any $0 \leq a,b \leq n$ satisfying $\ell\,|\,a-b$. For example, $f$ is trivially $\ell$-periodic for any $\ell>n$. We are also interested in integer power period length.
Hence we introduce the following definition. 

\begin{definition}[Base-$m$ period]
	Assume $f:\{0,1\}^n\to\{0,1\}$ is a symmetric Boolean function. The base-$m$ period is the minimum $\ell$ such that $\ell$ is a power of $m$, and $f$ is $\ell$-periodic. Denote it as $\pi_m(f)$. 
\end{definition}

Here are some concrete examples for a clearer illustration. 

\begin{itemize}
	\item The not-all-equal $\mathsf{NAE}$ function is defined as $\mathsf{NAE}_n(x_1,\dots,x_{n}):=\mathbb{I}[\exists i,j\text{ s.t. }x_i\neq x_j]$. Then $\pi_3\left(\mathsf{NAE}_3\right)=3$ while $\pi_3\left(\mathsf{NAE}_4\right)=9$. That is, $\pi_m(f)$ may be larger than $n$.
	\item If $f$ is a trivial function, then $\pi_3(f)=1$. 
\end{itemize}

One may write $F$ as a univariate polynomial over $\mathbb{Q}$, but it will not always induce a polynomial when we move to work on $\mathbb{Z}_m$, like what $f$ does. Fortunately, the following representation, also known as \textit{Mahler expansion} \cite{wilson2006lemma}, serves the purpose similar to polynomial representation.

\begin{theorem}[Mahler expansion]
    Assume that $f:\{0,1\}^n\to\{0,1\}$ is a symmetric Boolean function, and $F$ is the corresponding univariate version. Let $d:=\max\{n,m-1\}$. Then there exists a unique sequence $\alpha_0, \alpha_1, \cdots, \alpha_d\in\mathbb{Z}_m$ such that
    \[
    \sum_{j=0}^{d}\alpha_j\binom{t}{j}=\left\{
    \begin{array}{cl}
        F(t), & 0\leq t\leq n; \\
        0, & n<m-1\mathrm{~and~}n<t<m.
    \end{array}
    \right.
    \]
    We call $\sum_{j=0}^{d}\alpha_j\binom{t}{j}$ the Mahler expansion of $F$ over $\mathbb{Z}_m$, and $\alpha_j$ the $j$-th Mahler coefficient.
\end{theorem}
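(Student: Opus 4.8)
The plan is to recognize the claimed identity as a square linear system over $\mathbb{Z}_m$ whose coefficient matrix is unitriangular, hence invertible. First I would package the right-hand side into a single vector: define $v \in \mathbb{Z}_m^{\,d+1}$ by $v_t := F(t)$ for $0 \le t \le n$ and $v_t := 0$ for $n < t \le d$ (the second range is nonempty precisely when $n < m-1$, in which case $d = m-1$ and it is exactly the ``$n < t < m$'' range in the statement; when $n \ge m-1$ we have $d = n$ and there is no second range). With this notation, asking for $\alpha_0,\dots,\alpha_d \in \mathbb{Z}_m$ with $\sum_{j=0}^d \alpha_j \binom{t}{j} = v_t$ for all $t \in \{0,1,\dots,d\}$ is precisely the system $M\alpha = v$, where $M = \bigl(\binom{t}{j}\bigr)_{0 \le t,j \le d}$ is a $(d+1)\times(d+1)$ matrix. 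Both sides have the same number of rows, $d+1$, as there are coefficients; I would spell this count out explicitly, since it is the only place the definition of $d$ matters.

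Next I would observe that $M$ is lower triangular over $\mathbb{Z}$: $\binom{t}{j} = 0$ whenever $t < j$, and the diagonal entries are $\binom{t}{t} = 1$. Hence $\det M = 1$ in $\mathbb{Z}$, so reduction mod $m$ gives $\det(M \bmod m) = 1 \in \mathbb{Z}_m^\times$; therefore $M$ is invertible as a matrix over $\mathbb{Z}_m$ and $M\alpha = v$ has a unique solution $\alpha \in \mathbb{Z}_m^{\,d+1}$. This yields existence and uniqueness of the Mahler coefficients simultaneously. If an explicit formula is wanted downstream, forward substitution through the triangular system gives the closed form $\alpha_j = \sum_{i=0}^j (-1)^{j-i}\binom{j}{i} v_i$, i.e. the $j$-th finite difference of $v$ at $0$; I would record this as a remark but not rely on it.

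I do not expect a genuine obstacle here: this is the classical Newton forward-difference / Mahler interpolation identity, and the only content is the triangular-matrix observation, which holds verbatim over any commutative ring with $1$, in particular over $\mathbb{Z}_m$. The one point to handle with care is the case split in the definitions of $v$ and of $d$, so that the system is visibly square in both regimes $n \ge m-1$ and $n < m-1$; once that bookkeeping is fixed, the argument is a two-line linear-algebra fact.
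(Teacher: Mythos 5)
Your proposal is correct: casting the requirement as the square system $M\alpha=v$ with $M=\bigl(\binom{t}{j}\bigr)_{0\le t,j\le d}$, which is unitriangular and hence invertible over $\mathbb{Z}_m$, gives existence and uniqueness at once, and your bookkeeping of the two regimes $n\ge m-1$ and $n<m-1$ is exactly what makes the system square. The paper itself does not prove this statement (it imports it from Wilson's work on Mahler expansions), and your argument is the standard Newton/binomial-inversion proof one would find there, so there is nothing to correct.
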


There are some connections between polynomial degree and Mahler expansion. Over ring $\mathbb{Z}_m$, let $d^*:=\max\{\ell:\alpha_\ell\not \equiv 0 \pmod{m}, \ell\leq n\}$. If we only take $0$-th to $d^*$-th terms in the Mahler expansion to get $\hat{F}(t)=\sum_{j=0}^{d^*}\alpha_j\binom{t}{j}$, then $\hat{F}(|x|)=F(|x|)$ for all $x\in\{0,1\}^n$, which implies

\begin{fact}\label{fct:degm_mahler}
	$\deg_m(f)=\max\{\ell:\alpha_\ell\not \equiv 0 \pmod{m}, \ell\leq n\}.$
\end{fact}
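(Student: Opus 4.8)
The plan is to make explicit the unique multilinear representation of $f$ over $\mathbb{Z}_m$ directly from the truncated Mahler expansion, and then simply read off its degree. The bridge between the ``univariate'' and ``multilinear'' pictures is the elementary combinatorial identity
\[
\binom{|x|}{j}=\sum_{\substack{S\subseteq[n]\\ |S|=j}}\prod_{i\in S}x_i=:e_j(x)\qquad\text{for every }x\in\{0,1\}^n,
\]
which holds because $\binom{|x|}{j}$ counts the $j$-element subsets of the support of $x$. Thus, on the Boolean cube, $\binom{|x|}{j}$ coincides with the $j$-th elementary symmetric polynomial $e_j$, which is the zero polynomial when $j>n$ and otherwise a multilinear polynomial of degree exactly $j$ whose every monomial has coefficient $1$.

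First I would substitute this identity into the truncation $\hat F(t)=\sum_{j=0}^{d^*}\alpha_j\binom{t}{j}$, where $d^*:=\max\{\ell\le n:\alpha_\ell\not\equiv 0\pmod m\}$. By the discussion just preceding the statement, $\hat F(|x|)=F(|x|)=f(x)$ for all $x\in\{0,1\}^n$, hence
\[
f(x)=\sum_{j=0}^{d^*}\alpha_j\binom{|x|}{j}=\sum_{j=0}^{d^*}\alpha_j\,e_j(x)\qquad\text{over }\mathbb{Z}_m .
\]
The right-hand side is a multilinear polynomial over $\mathbb{Z}_m$ that represents $f$, so by uniqueness of the multilinear representation of $f$ over $\mathbb{Z}_m$ it \emph{is} the representing polynomial. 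Therefore $\deg_m(f)$ equals the degree of $\sum_{j=0}^{d^*}\alpha_j e_j(x)$.

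To conclude, I would note that the monomials occurring in $e_j$ all have degree exactly $j$, so the monomial supports of distinct $e_j$'s are disjoint and the degree-$d^*$ part of $\sum_{j=0}^{d^*}\alpha_j e_j$ is precisely $\alpha_{d^*}e_{d^*}$. Since $d^*\le n$, the polynomial $e_{d^*}$ is not identically zero, and each of its monomials carries the coefficient $\alpha_{d^*}$, which is nonzero in $\mathbb{Z}_m$ by the definition of $d^*$. Hence the degree is exactly $d^*$, as claimed; the degenerate case where all $\alpha_\ell$ with $\ell\le n$ vanish corresponds to $f\equiv 0$ and is covered by the convention $\deg_m(0)=-\infty=\max\emptyset$. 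I do not expect a genuine obstacle here: the only point requiring a moment's care is that the ``extra'' Mahler coefficients with index in $(n,\max\{n,m-1\}]$ are irrelevant, since $\binom{t}{j}=0$ whenever $j>t$ and hence they never affect the values of $F$ on $\{0,\dots,n\}$ — which is exactly why the maximum in the statement ranges only over $\ell\le n$.
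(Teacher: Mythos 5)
Your proposal is correct and takes essentially the same route as the paper, which establishes the fact via the discussion immediately preceding it: truncate the Mahler expansion at $d^*$, observe that $\hat F(|x|)=F(|x|)$ on the cube, identify $\binom{|x|}{j}$ with the elementary symmetric polynomial $e_j(x)$, and use uniqueness of the multilinear representation over $\mathbb{Z}_m$ to read off the nonzero top coefficient $\alpha_{d^*}$. You merely spell out the final implication (disjoint monomial supports of the $e_j$'s and the vanishing of the coefficients with index in $(d^*,n]$ and of the binomials with index above $n$) that the paper leaves implicit.
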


\begin{remark*}
	The fact above does not hold if we take away the condition $\ell\leq n$. The next example shows that on $\mathbb{Z}_m$, the existence of high-order non-zero Mahler coefficient does not imply high degree, if the input length is too short. 
	
	\begin{example}
		Let $n=2$ and $f(x)=-x_0x_1+x_0+x_1=x_0\lor x_1$. On $\mathbb{Z}_5$, one can verify its Mahler expansion is
		\[
		f(x)=\binom{|x|}{1}+4\binom{|x|}{2}+2\binom{|x|}{4}.
		\]
		But $\deg_5(f)=2$. 
	\end{example}
	
	This phenomenon does not come from nowhere; intuitively speaking, in the Mahler expansion over $\mathbb{Z}_m$, one may need to imitate Lagrange-style interpolation for $|x|>n$ , and hence introduce some high-order terms. (Although $|x|$ can never be above $n$, we need to utilize $\mathsf{MOD}$ functions later in this paper, which requires $F(t)$ to be zero for $n<t<m$.)
	
\end{remark*}

Wilson~\cite{wilson2006lemma} showed the following result about the degree and Mahler expansion of symmetric Boolean functions, given the base-$p$ period.

\begin{theorem}[{\cite[{Lemma 1}]{wilson2006lemma}}]
	\label{thm:wilson1}
	Let $p$ be a prime, and $t,k$ be positive integers. Assume $f:\{0,1\}^n\to\{0,1\}$ is a symmetric Boolean function, and $\{\alpha_\ell\}$ are its Mahler coefficients over $\mathbb{Z}_{p^k}$. If $f$ is $p^t$-periodic, then
	\[\deg_{p^k}(f)\leq(k-1)\cdot\varphi(p^t)+p^t-1.\]
	In addition, for any positive integer $j$ and $\ell\geq j\cdot\varphi(p^t)+p^t$, we have $\alpha_\ell\equiv 0\pmod{p^j}$.
\end{theorem}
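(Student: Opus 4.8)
The plan is to turn $p^t$-periodicity into a linear recurrence on the Mahler coefficients all of whose coefficients are divisible by $p$, and then bootstrap $p$-divisibility with a $p$-adic valuation induction.

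First I would replace the univariate version by its genuine $p^t$-periodic extension $F:\mathbb{Z}_{\ge0}\to\mathbb{Z}$ and work with the integer finite differences $\alpha_\ell:=(\Delta^\ell F)(0)$, where $\Delta=S-I$ and $S$ is the unit shift. For $\ell\le n$ these reduce modulo $p^k$ to the Mahler coefficients of $f$ over $\mathbb{Z}_{p^k}$ (a finite difference of order $\le n$ only uses $F(0),\dots,F(n)$), so by \fct{degm_mahler} it suffices to control their $p$-adic valuations. Since $\Delta^\ell F$ is again $p^t$-periodic we get $(\Delta^\ell F)(p^t)=(\Delta^\ell F)(0)=\alpha_\ell$, while expanding $S^{p^t}=(I+\Delta)^{p^t}$ and evaluating at $0$ gives $(\Delta^\ell F)(p^t)=\sum_{m=0}^{p^t}\binom{p^t}{m}\alpha_{\ell+m}$. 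Subtracting the $m=0$ term,
\[\alpha_L=-\sum_{i=1}^{p^t-1}\binom{p^t}{i}\,\alpha_{L-i}\qquad(L\ge p^t),\]
and every coefficient here is divisible by $p$ because $\nu_p\binom{p^t}{i}=t-\nu_p(i)\ge1$ for $1\le i\le p^t-1$ (write $\binom{p^t}{i}=\frac{p^t}{i}\binom{p^t-1}{i-1}$ and note $\binom{p^t-1}{i-1}$ is a unit modulo $p$, e.g.\ by Lucas' theorem).

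Next I would prove, by strong induction on $j\ge1$, the sharpened invariant $\nu_p(\alpha_\ell)\ge j$ for all $\ell\ge(j-1)\varphi(p^t)+p^t$. The case $j=1$ is immediate from the recurrence. For the inductive step, consider a summand $\binom{p^t}{i}\alpha_{L-i}$, put $a:=\nu_p(i)$ and $s:=t-a\ge1$, so its valuation is $s+\nu_p(\alpha_{L-i})$; if $s\ge j$ this already exceeds $j$, and otherwise $j':=j-s\in\{1,\dots,j-1\}$ and the induction hypothesis gives $\nu_p(\alpha_{L-i})\ge j'$ as soon as $L-i\ge(j'-1)\varphi(p^t)+p^t$, which, using $L\ge(j-1)\varphi(p^t)+p^t$, amounts to $i\le s\,\varphi(p^t)$. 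This last bound holds because $p^{t-s}\mid i$ and $i<p^t$ force $i\le p^{t-s}(p^s-1)$, while $p^s-1=(p-1)\sum_{r=0}^{s-1}p^r\le s(p-1)p^{s-1}$; crucially the two sides coincide exactly when $i=\varphi(p^t)$, which is why the threshold carries $\varphi(p^t)$ rather than $p^t$. Taking $j=k$ gives $\alpha_\ell\equiv0\pmod{p^k}$ for $\ell\ge(k-1)\varphi(p^t)+p^t$, hence $\deg_{p^k}(f)\le(k-1)\varphi(p^t)+p^t-1$ by \fct{degm_mahler}; the ``in addition'' clause is a weakening of the invariant and follows immediately.

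The step I expect to be the main obstacle is pinning down the correct inductive statement: the naive route---observe that $\Delta^{p^t}F$ is divisible by $p$ and iterate---only yields the weaker bound $\deg_{p^k}(f)\le kp^t-1$, so one must carry the threshold $(j-1)\varphi(p^t)+p^t$ through the induction and verify that the extremal index $i=\varphi(p^t)$ is absorbed with no slack. A secondary nuisance is the zero-padding built into the Mahler-expansion convention when $n<p^k-1$; this is handled cleanly by phrasing everything through the genuine periodic extension and invoking \fct{degm_mahler} to discard coefficients of index exceeding $n$.
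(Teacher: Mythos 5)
This theorem is quoted in the paper from Wilson~\cite{wilson2006lemma} without proof, so there is no in-paper argument to compare against; judged on its own, your proof is correct and self-contained. The recurrence $\alpha_L=-\sum_{i=1}^{p^t-1}\binom{p^t}{i}\alpha_{L-i}$ for $L\ge p^t$ does follow from the $p^t$-periodicity of $\Delta^\ell F$ together with $S^{p^t}=(I+\Delta)^{p^t}$, the valuation $\nu_p\binom{p^t}{i}=t-\nu_p(i)$ is right, and the inductive step checks out: with $s=t-\nu_p(i)$ one has $i\le p^{t-s}(p^s-1)\le s\,p^{t-1}(p-1)=s\varphi(p^t)$, hence $L-i\ge (j-s-1)\varphi(p^t)+p^t$ and the hypothesis for $j'=j-s$ applies; the extremal case $i=\varphi(p^t)$, $s=1$ is tight but not violated. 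Your invariant ($p^j\mid\alpha_\ell$ already for $\ell\ge(j-1)\varphi(p^t)+p^t$) is in fact one step of $\varphi(p^t)$ stronger than the stated ``in addition'' clause, and taking $j=k$ yields the degree bound through \fct{degm_mahler}, which is exactly how the paper later uses the result (e.g.\ in \lem{pper2}, where only indices $\ell\le d\le n$ are ever needed).

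One caveat, which you only half-address: your $\alpha_\ell$ are finite differences of the genuine periodic extension, and these agree with the paper's Mahler coefficients only for $\ell\le n$. When $n<p^k-1$, the paper's convention defines $\alpha_\ell$ for $n<\ell\le p^k-1$ via zero-padding, and for those indices the ``in addition'' clause as literally stated can fail: take $p=2$, $t=1$, $k=4$, $n=2$, $F(0)=F(2)=1$, $F(1)=0$; the padded expansion has $\alpha_4\equiv 7\pmod{16}$, which is not even divisible by $2$ although $4\ge 1\cdot\varphi(2)+2$. So ``invoking \fct{degm_mahler} to discard coefficients of index exceeding $n$'' rescues the degree bound but cannot rescue the literal in-addition clause for the padded coefficients---nothing can, since it is false there. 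The statement must be read (as in Wilson's original setting of periodic functions on all of $\mathbb{Z}$) as concerning the periodic-extension coefficients, equivalently the indices $\ell\le n$; under that reading your proof is complete, and it covers every application of the theorem made in the paper.
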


\subsection[MOD and Its Mahler Expansion over Zpk]{$\mathsf{MOD}$ and Its Mahler Expansion over $\mathbb{Z}_{p^k}$}\label{sec:moddeg}

We first look into the Mahler expansion of weight modular functions. This special case is illuminating in our later proofs. The $\mathsf{MOD}$ function is defined as
\[
\mathsf{MOD}_n^{c,m}(x):=\mathbb{I}\big[|x|\equiv c\pmod{m} \big]\in\{0,1\},
\]
where $n\geq m-1$ denotes the length of input $x$, and $\mathbb{I[\cdot]}$ is the indicator function. The following theorem gives the degree of $\mathsf{MOD}_n^{0,p^t}$. 

\begin{theorem}[{\cite[Theorem 10]{wilson2006lemma}}]
	Let $p$ be a prime, and $t,k$ be positive integers. Denote $d:=(k-1)\cdot\varphi(p^t)+p^t-1$. Then for any $n\geq d$, we have
	\[\deg_{p^k}(\mathsf{MOD}_n^{0,p^t})=d.\]
\end{theorem}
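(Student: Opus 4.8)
The plan is to establish the two inequalities separately. The upper bound is immediate: $\mathsf{MOD}_n^{0,p^t}$ is $p^t$-periodic by definition, so \thm{wilson1} gives $\deg_{p^k}(\mathsf{MOD}_n^{0,p^t})\le(k-1)\varphi(p^t)+p^t-1=d$ at once. All the content is in the matching lower bound. By \fct{degm_mahler} together with the hypothesis $d\le n$, it suffices to show that the $d$-th Mahler coefficient $\alpha_d$ of $\mathsf{MOD}_n^{0,p^t}$ over $\mathbb{Z}_{p^k}$ is nonzero; in fact I aim to compute it exactly and obtain $\alpha_d\equiv(-1)^{k-1}p^{k-1}\pmod{p^k}$.

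First I would convert the Mahler coefficients into an ordinary generating function. The univariate version is $F(s)=\mathbb I[p^t\mid s]$ with $\sum_{s\ge0}F(s)x^s=\frac1{1-x^{p^t}}$; combining this with $\sum_{s\ge0}\binom sj x^s=\frac{x^j}{(1-x)^{j+1}}$ and the substitution $x=\frac{y}{1+y}$ (under which $\frac{x^j}{(1-x)^{j+1}}$ becomes $y^j(1+y)$) yields
\[
A(y):=\sum_{j\ge0}\alpha_j y^j=\frac{(1+y)^{p^t-1}}{(1+y)^{p^t}-y^{p^t}}.
\]
Here the $\alpha_j$ are genuine integers (the denominator has constant term $1$), and they agree with the Mahler coefficients appearing in the theorem for every $j\le n$ by the usual triangularity/uniqueness argument (using $n\ge d\ge p^t-1$). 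Next I would factor the denominator: iterating $a^p-b^p=(a-b)\sum_{i=0}^{p-1}a^ib^{p-1-i}$ with $a=(1+y)^{p^{s-1}}$, $b=y^{p^{s-1}}$ gives
\[
(1+y)^{p^t}-y^{p^t}=\prod_{s=1}^{t}S_s(y),\qquad S_s(y):=\sum_{i=0}^{p-1}(1+y)^{ip^{s-1}}\,y^{(p-1-i)p^{s-1}}.
\]
A short computation shows each $S_s$ has constant term $1$, has degree exactly $\varphi(p^s)$ with leading coefficient $p$, and is $\equiv1\pmod p$ (Frobenius), so $S_s(y)=1+pT_s(y)$ for a \emph{monic} polynomial $T_s\in\mathbb Z[y]$ of degree $\varphi(p^s)$.

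Now I would expand $A(y)=(1+y)^{p^t-1}\prod_{s=1}^t(1+pT_s(y))^{-1}$ modulo $p^k$, keeping only products of at most $k-1$ of the factors $pT_s$:
\[
A(y)\equiv(1+y)^{p^t-1}\sum_{\substack{\ell_1,\dots,\ell_t\ge0\\ \ell_1+\cdots+\ell_t\le k-1}}(-p)^{\ell_1+\cdots+\ell_t}\prod_{s=1}^tT_s(y)^{\ell_s}\pmod{p^k}.
\]
The summand indexed by $(\ell_1,\dots,\ell_t)$ is a polynomial of degree $(p^t-1)+\sum_s\ell_s\varphi(p^s)$, and since $\varphi(p^s)\le\varphi(p^t)$ with equality only when $s=t$, this degree is at most $(p^t-1)+(k-1)\varphi(p^t)=d$, with equality if and only if $(\ell_1,\dots,\ell_t)=(0,\dots,0,k-1)$. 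Hence only that single summand contributes to the coefficient of $y^d$, and as $(1+y)^{p^t-1}$ and $T_t(y)^{k-1}$ are both monic (of degrees $p^t-1$ and $(k-1)\varphi(p^t)$, summing to $d$), their product is monic of degree $d$; therefore the coefficient of $y^d$ in $A(y)$ equals $(-p)^{k-1}=(-1)^{k-1}p^{k-1}\not\equiv0\pmod{p^k}$. With the upper bound this proves $\deg_{p^k}(\mathsf{MOD}_n^{0,p^t})=d$, and tightness of the bound is witnessed by $\mathsf{MOD}_n^{0,p^t}$ itself.

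I expect the main obstacle to be the bookkeeping in the final step: setting up the factorization $\prod_s S_s$ correctly, establishing the monicity and the exact degrees of the $T_s$, and then running the degree count that isolates a single top-degree summand — everything after that is forced. A minor but non-skippable point is the reconciliation of the infinite-version Mahler coefficients read off from $A(y)$ with the finite ones used in \fct{degm_mahler}, which is precisely where $n\ge d$ enters.
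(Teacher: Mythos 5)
The paper itself does not prove this statement---it is imported directly from Wilson~\cite{wilson2006lemma}---so there is no in-paper argument to compare against; what you give is a correct, self-contained derivation, and it fits the machinery the paper builds around the result. Both halves check out: the upper bound is immediate from \thm{wilson1} since $\mathsf{MOD}_n^{0,p^t}$ is $p^t$-periodic, and for the lower bound the generating-function identity $\sum_j \alpha_j y^j=(1+y)^{p^t-1}/\bigl((1+y)^{p^t}-y^{p^t}\bigr)$ for the infinite-version Mahler coefficients is correct, the factorization $(1+y)^{p^t}-y^{p^t}=\prod_{s=1}^t S_s(y)$ with $S_s=1+pT_s$, where $T_s$ is monic of degree $\varphi(p^s)$ with zero constant term (the congruence $S_s\equiv 1\pmod p$ follows from Frobenius exactly as you say), is correct, and the degree count isolating the single summand $(\ell_1,\dots,\ell_t)=(0,\dots,0,k-1)$ does force the $y^d$-coefficient to be $(-p)^{k-1}\not\equiv 0\pmod{p^k}$; a spot check at $p=2$, $t=1$, $k=2$ confirms the value $2\bmod 4$. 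Two small points to make explicit in a write-up, neither of which is a gap: first, truncating $\prod_s(1+pT_s)^{-1}$ modulo $p^k$ is legitimate because each $1+pT_s$ is a unit in $\mathbb{Z}_{p^k}[[y]]$ (constant term $1$), so the congruence holds coefficientwise there; second, the reconciliation with the paper's finite Mahler coefficients is exactly the triangularity of the matrix $\bigl[\binom{s}{j}\bigr]$ with unit diagonal, which gives $\alpha_j$ equal to the $j$-th coefficient of $A(y)$ modulo $p^k$ for all $j\le n$ regardless of the extra interpolation constraints when $n<p^k-1$, and the hypothesis $n\ge d$ is then used only to invoke \fct{degm_mahler} at $\ell=d$.
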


In fact, we can achieve a more general result by further analysis. 
Fix $n$, $p$, $t$ and $k$.
Notate the Mahler coefficient of $\mathsf{MOD}_n^{a,p^t}$ over $\mathbb{Z}_{p^k}$ as $\alpha_\ell^{(a,p^t)}$ i.e., $\mathsf{MOD}_n^{a,p^t}(x)=\sum_{j=0}^{d}\alpha_j^{(a,p^t)}\binom{|x|}{j}$. Moreover, $\mathsf{MOD}_n^{a,p^t}$ can also be represented with $\alpha_\ell^{(0,p^t)}$ as
\[\mathsf{MOD}_n^{a,p^t}(x)=\sum_{j=0}^{d}\alpha_j^{(0,p^t)}\binom{|x|-a}{j}.\]
Then expand each $\binom{|x|-a}{j}$ by Vandermonde convolution to get
\begin{equation}\label{equ:coef_vande}
\alpha_\ell^{(a,p^t)}=\sum_{i=0}^{d-\ell}\binom{-a}{i}\alpha_{i+\ell}^{(0,p^t)}.
\end{equation}
Specially, by setting $\ell=d$, we get $\alpha_d^{(a,p^t)}=\alpha_d^{(0,p^t)}$. This equation generalizes the theorem to all remainders. 

\begin{corollary}\label{cor:mod_deg}
	Let $p$ be a prime, and $t$ and $k$ be positive integers. Denote $d:=(k-1)\cdot\varphi(p^t)+p^t-1$. For any $n\geq d$ and $0\leq a<p^t$, we have
	\[\deg_{p^k}(\mathsf{MOD}_n^{a,p^t})=d.\]
\end{corollary}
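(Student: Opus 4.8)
The goal is Corollary~\ref{cor:mod_deg}: for any $n\ge d:=(k-1)\varphi(p^t)+p^t-1$ and any residue $0\le a<p^t$, we have $\deg_{p^k}(\mathsf{MOD}_n^{a,p^t})=d$. The upper bound is free: $\mathsf{MOD}_n^{a,p^t}$ is $p^t$-periodic, so \thm{wilson1} gives $\deg_{p^k}(\mathsf{MOD}_n^{a,p^t})\le d$ immediately. So the entire content is the matching lower bound, i.e., showing that the degree-$d$ Mahler coefficient $\alpha_d^{(a,p^t)}$ is not $\equiv 0\pmod{p^k}$.

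The plan is to reduce the general-residue case to the already-known case $a=0$. By the displayed identity $\alpha_d^{(a,p^t)}=\alpha_d^{(0,p^t)}$ — which is the $\ell=d$ specialization of equation~\eqref{equ:coef_vande}, coming from $\binom{|x|-a}{j}$ expanded via Vandermonde convolution, where the only term contributing to the top coefficient is $i=0$ — the top Mahler coefficient is literally the same as for the $0$-residue $\mathsf{MOD}$ function. Now the cited Theorem (\cite[Theorem 10]{wilson2006lemma}) already tells us $\deg_{p^k}(\mathsf{MOD}_n^{0,p^t})=d$, and by \fct{degm_mahler} this degree equals $\max\{\ell\le n:\alpha_\ell^{(0,p^t)}\not\equiv 0\}$; since $d\le n$, this forces $\alpha_d^{(0,p^t)}\not\equiv 0\pmod{p^k}$. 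Therefore $\alpha_d^{(a,p^t)}=\alpha_d^{(0,p^t)}\not\equiv 0\pmod{p^k}$, and invoking \fct{degm_mahler} again (noting $d\le n$) yields $\deg_{p^k}(\mathsf{MOD}_n^{a,p^t})\ge d$. Combined with the upper bound, this gives equality.

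The one point that needs a small amount of care is the legitimacy of the representation $\mathsf{MOD}_n^{a,p^t}(x)=\sum_{j=0}^d\alpha_j^{(0,p^t)}\binom{|x|-a}{j}$ and of the Vandermonde step: one must check that the right-hand side does define the correct function on all of $\{0,\dots,n\}$ (it agrees with $\mathsf{MOD}_n^{0,p^t}$ shifted by $a$, and $|x|\equiv a\pmod{p^t}\iff |x|-a\equiv 0\pmod{p^t}$), and that the Mahler-expansion framework tolerates the shifted argument $\binom{|x|-a}{j}$ — which it does, since $\binom{u-a}{j}$ is still an integer-valued polynomial in $u$ of degree $j$ and expanding it in the basis $\binom{u}{i}$ produces only terms of degree $\le j$, with the degree-$j$ term carrying coefficient $1$. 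The main (and essentially only) obstacle is thus bookkeeping: confirming that the finite truncation to $j\le d$ is consistent, i.e., that no wraparound beyond index $d$ is created by the shift, which holds precisely because $d\le n$ and the identity~\eqref{equ:coef_vande} is an exact finite identity among the already-defined coefficients. Everything else is a direct chain of the stated facts, so the corollary follows in a few lines.
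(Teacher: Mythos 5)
Your argument is correct and is essentially the paper's own: the paper proves the corollary exactly by representing $\mathsf{MOD}_n^{a,p^t}$ through the shifted binomials $\binom{|x|-a}{j}$, applying Vandermonde convolution to obtain Equation (\ref{equ:coef_vande}), and setting $\ell=d$ to conclude $\alpha_d^{(a,p^t)}=\alpha_d^{(0,p^t)}\not\equiv 0\pmod{p^k}$, with the upper bound $\deg_{p^k}(\mathsf{MOD}_n^{a,p^t})\le d$ following from $p^t$-periodicity via \thm{wilson1}. Your extra bookkeeping about the legitimacy of the shifted representation only makes explicit what the paper leaves implicit.
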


\section[Lower Bound of degpk(f)]{Lower Bound of $\deg_{p^k}(f)$}\label{sec:degpk}

By identifying the degree of $\MOD_n^{i, p^t}$ over $\mathbb Z_{p^k}$, we show that the degree of all $p^t$-periodic functions is constantly small since they can be spanned by $\{\MOD_n^{j,p^t}\}_{j = 0}^{p^t - 1}$. In \sec{pksym}, we prove that the degree of any $p^t$-periodic (but not $p^{t - 1}$-periodic) function will not decrease too much from $(k - 1)\cdot \varphi(p^t) + p^t - 1$ during the spanning, despite the cancellation of the high-order coefficients. By a Ramsey-type argument in \sec{pknd}, we further extend our lower bound to all non-degenerate Boolean function with sufficiently many input bits.

\subsection[Proof of Theorem 1]{Symmetric Functions -- Proof of \thm{degpk_sym}}\label{sec:pksym}

We begin with the periodicity of symmetric Boolean functions with low degree. In our proof, the following Lucas's Theorem is important. 

\begin{theorem}[Lucas]
	Let $n,m\in\mathbb{N}$, and $p$ be a prime. Assume in base $p$, $n$ and $m$ can be represented as $n=(n_v n_{v-1} \cdots n_0)_p$ and $m=(m_v m_{v-1} \cdots m_0)_p$ (the number with fewer digits are padded with $0$). Then
	\[\binom{n}{m}\equiv\binom{n_v}{m_v}\binom{n_{v-1}}{m_{v-1}}\cdots \binom{n_0}{m_0} \pmod{p}.\]
\end{theorem}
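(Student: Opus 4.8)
The plan is to prove Lucas's Theorem through a generating-function computation in $\mathbb{F}_p[x]$. The starting point is the elementary observation that $\binom{n}{m}$ is the coefficient of $x^m$ in $(1+x)^n$, so it suffices to evaluate this polynomial modulo $p$ and read off the relevant coefficient.

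First I would record the ``freshman's dream'' over $\mathbb{F}_p$: since $p \mid \binom{p}{j}$ for each $0 < j < p$, we have $(1+x)^p \equiv 1 + x^p \pmod p$, and iterating this — raising to the $p$-th power $i$ times, using that the Frobenius map is a ring homomorphism — yields $(1+x)^{p^i} \equiv 1 + x^{p^i} \pmod p$ for every $i \ge 0$.

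Next, writing $n = \sum_{i=0}^{v} n_i p^i$ with each $0 \le n_i < p$, I would factor and expand
\[
(1+x)^n = \prod_{i=0}^{v}\bigl((1+x)^{p^i}\bigr)^{n_i} \equiv \prod_{i=0}^{v}(1 + x^{p^i})^{n_i} = \prod_{i=0}^{v}\sum_{j=0}^{n_i}\binom{n_i}{j}x^{j p^i} \pmod p.
\]
In the expanded product a generic monomial has the form $x^{\sum_i m_i p^i}$ with $0 \le m_i \le n_i < p$, carrying coefficient $\prod_{i=0}^{v}\binom{n_i}{m_i}$. By uniqueness of the base-$p$ representation, $\sum_i m_i p^i = m$ exactly when each $m_i$ is the $i$-th base-$p$ digit of $m$; hence the coefficient of $x^m$ on the right-hand side equals $\prod_{i=0}^{v}\binom{n_i}{m_i}$, with the convention $\binom{n_i}{m_i} = 0$ whenever $m_i > n_i$ (which is consistent, since then no monomial with that exponent appears). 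Comparing with the coefficient $\binom{n}{m}$ of $x^m$ on the left-hand side gives the theorem.

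The argument is short and self-contained; the only point needing a little care is the digit bookkeeping — namely, that the exponents occurring in the expansion are precisely those integers whose base-$p$ digits $m_i$ satisfy $m_i \le n_i$, so that the stated identity remains valid even in the degenerate case $\binom{n}{m} \equiv 0 \pmod p$. I do not expect any genuine obstacle here.
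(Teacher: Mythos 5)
Your proof is correct and complete: the reduction to the coefficient of $x^m$ in $(1+x)^n$, the Frobenius identity $(1+x)^{p^i}\equiv 1+x^{p^i}\pmod p$, and the digit-by-digit comparison via uniqueness of base-$p$ representation (including the degenerate case where some digit of $m$ exceeds the corresponding digit of $n$, making both sides vanish) together constitute the standard generating-function proof of Lucas's theorem. The paper itself states this theorem as a classical fact without proof, so there is no in-paper argument to compare against; your write-up would serve as a self-contained justification of the cited result.
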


The next lemma indicates the periodicity of symmetric Boolean functions with low degree. 

\begin{lemma}\label{lem:lowdeg}
	Let $f:\{0,1\}^n\to\{0,1\}$ be a symmetric Boolean function. For prime $p$ and positive integers $t$ and $k$, if $\deg_{p^k}(f)\leq p^t-1$, then $f$ is $p^t$-periodic.
\end{lemma}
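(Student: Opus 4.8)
The plan is to apply the $p^t$-th forward difference operator $\Delta^{p^t}$ to the univariate version $F$ of $f$ and then read the resulting relation modulo $p$, using Lucas's Theorem to annihilate the ``interior'' binomial coefficients and using the Boolean range of $F$ to upgrade a congruence into an equality.

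First I would put the hypothesis into a usable form. If $n<p^t$ the claim is vacuous (no two distinct elements of $\{0,\dots,n\}$ differ by a multiple of $p^t$), so assume $n\ge p^t$. By uniqueness of the multilinear representation over $\mathbb Z_{p^k}$, permuting coordinates (which fixes $f$ since $f$ is symmetric) shows the coefficient of $\prod_{i\in S}x_i$ depends only on $|S|$; writing $c_j$ for this common value and using $\sum_{|S|=j}\prod_{i\in S}x_i=\binom{|x|}{j}$, the bound $\deg_{p^k}(f)\le p^t-1$ gives $F(y)\equiv\sum_{j=0}^{p^t-1}c_j\binom{y}{j}\pmod{p^k}$ for every $0\le y\le n$ (padding with $c_j=0$ if $\deg_{p^k}(f)<p^t-1$).

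Next, fix any $a$ with $0\le a\le n-p^t$ and evaluate $\Delta^{p^t}F(a)=\sum_{i=0}^{p^t}(-1)^{p^t-i}\binom{p^t}{i}F(a+i)$ in two ways. Since $\Delta\binom{y}{j}=\binom{y}{j-1}$, and hence $\Delta^{p^t}\binom{y}{j}=\binom{y}{j-p^t}=0$ for $j\le p^t-1$, the representation above forces $\Delta^{p^t}F(a)\equiv0\pmod{p^k}$, in particular modulo $p$. On the other hand, Lucas's Theorem (with $p^t=(1\,0\cdots0)_p$) gives $\binom{p^t}{i}\equiv0\pmod p$ for every $0<i<p^t$, while $\binom{p^t}{0}=\binom{p^t}{p^t}=1$; therefore reducing the difference modulo $p$ leaves only the two extreme terms, yielding $\Delta^{p^t}F(a)\equiv F(a+p^t)-F(a)\pmod p$ (the sign works out uniformly: $(-1)^{p^t}\equiv-1\pmod p$ for odd $p$, and for $p=2$ the sign is irrelevant).

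Comparing the two computations gives $F(a+p^t)\equiv F(a)\pmod p$ for all $0\le a\le n-p^t$. Since $F$ takes values in $\{0,1\}$, a congruence modulo $p$ between two such values is an equality, so $F(a+p^t)=F(a)$; chaining these over consecutive blocks shows $F(a)=F(b)$ whenever $0\le a,b\le n$ and $p^t\mid a-b$, i.e., $f$ is $p^t$-periodic. The step I expect to require the most care is the first paragraph's reduction to the symmetric-coefficient form over $\mathbb Z_{p^k}$ (and the padding when the degree is strictly below $p^t-1$); the conceptually essential point, though, is the last one — passing from ``congruent mod $p$'' to ``equal'' genuinely uses that $f$ is Boolean and would fail for arbitrary integer-valued symmetric maps.
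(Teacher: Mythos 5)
Your proof is correct; I see no gaps. The reduction to the symmetric coefficient form is fine (uniqueness of the multilinear representation over $\mathbb{Z}_{p^k}$ plus symmetry gives $F(y)\equiv\sum_{j=0}^{p^t-1}c_j\binom{y}{j}\pmod{p^k}$ on $\{0,\dots,n\}$, which is exactly the truncated Mahler expansion the paper works with), the difference operator $\Delta^{p^t}$ indeed annihilates this expansion, and the sign and Boolean-upgrade steps are handled correctly. Your route differs from the paper's only in how Lucas's Theorem is deployed: the paper applies Lucas directly to $\binom{a}{j}$ with $j\le p^t-1$, observing that since the base-$p$ digits of $j$ vanish from position $t$ upward, $\binom{a}{j}\bmod p$ depends only on the last $t$ digits of $a$; hence every basis binomial, and so the whole expansion, is $p^t$-periodic mod $p$, and Booleanity upgrades this to equality for all congruent pairs $a\equiv b\pmod{p^t}$ at once. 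You instead hit $F$ with $\Delta^{p^t}$ and use Lucas only in the form $p\mid\binom{p^t}{i}$ for $0<i<p^t$, collapsing the difference to $F(a+p^t)-F(a)\pmod p$ and then chaining. The two arguments are two faces of the same mechanism (low binomial degree, a Lucas-type divisibility, and the passage from a mod-$p$ congruence of $\{0,1\}$-values to equality, which as you note is where Booleanity is genuinely used); your version is marginally more self-contained since $p\mid\binom{p^t}{i}$ needs no digit bookkeeping, while the paper's gives the periodicity statement directly without iterating over consecutive blocks.
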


\begin{proof}
	Denote $d:=\deg_{p^k}(f)$, and suppose $\alpha_\ell$ are the Mahler coefficients of $f$ over ring $\mathbb{Z}_{p^k}$, i.e., $f(x)=\sum_{j=0}^d\alpha_j\binom{|x|}{j}\bmod p^k$. According to Lucas's Theorem, if $a\equiv b\pmod{p^t}$, then for any $0\leq j\leq p^t-1$, we have $\binom{a}{j}\equiv\binom{a_v}{j_v}\cdots\binom{a_t}{j_t}\binom{a_{t-1}}{j_{t-1}}\cdots\binom{a_0}{j_0}\pmod{p}$. Here, $a_i$ (resp. $b_i$ and $j_i$) is the representation of $a$ (resp. $b$ and $j$) in the base $p$. Note that $j_i=0$ for any $i\geq t$. Hence $\binom{a}{j}\equiv\binom{a_{t-1}}{j_{t-1}}\cdots\binom{a_0}{j_0}\pmod{p}$. For the same reason,  $\binom{b}{j}\equiv\binom{b_{t-1}}{j_{t-1}}\cdots\binom{b_0}{j_0}\pmod{p}$. In addition, $a$ and $b$'s last $t$ digits are the same as $a\equiv b\pmod{p^t}$. Thus $\sum_{j=0}^d\alpha_j\binom{a}{j}\bmod p = \sum_{j=0}^d\alpha_j\binom{b}{j}\bmod p$. By the definition that $f(x)\in\{0,1\}$, we get $\sum_{j=0}^d\alpha_j\binom{a}{j}\bmod p^k = \sum_{j=0}^d\alpha_j\binom{b}{j}\bmod p^k$.
\end{proof}

	Next, provided $\pi_p(f)$, we give some lower bounds on the degree by the following two lemmas, conditioned on that $n$ is large enough. Together with the lemma above we lead to a contradiction, and our theorem follows eventually. Before continuing, notice again that the value $f(x)$ is related only to the Hamming weight of $x$, and thus $f(x)=\sum_{0\leq i\leq n:F(i)=1}\mathsf{MOD}_{n}^{i,n+1}(x)$. Specially, if $f(x)$ is $t$-periodic, then we can write $f(x)$ as $f(x)=\sum_{0\leq i\leq t-1:F(i)=1}\mathsf{MOD}_{n}^{i,t}(x)$ and apply \cor{mod_deg}. 

\begin{lemma}\label{lem:pper1}
	Assume $p$ is a prime, and $f:\{0,1\}^n\to\{0,1\}$ is a symmetric Boolean function of period $p$. If for some positive integer $k$, it holds that $n\geq(p-1)\cdot k$, then
	\[\deg_{p^k}(f)=(p-1)\cdot k.\]
\end{lemma}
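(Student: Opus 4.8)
The plan is to sandwich $\deg_{p^k}(f)$ between $(p-1)k$ from above and from below. The upper bound is immediate: since $f$ is $p$-periodic, \thm{wilson1} with $t=1$ gives $\deg_{p^k}(f)\le(k-1)\varphi(p)+p-1=(k-1)(p-1)+(p-1)=(p-1)k$, and this needs no hypothesis on $n$.

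For the matching lower bound I would exhibit a nonzero Mahler coefficient of $f$ over $\mathbb Z_{p^k}$ at level $d:=(p-1)k$. Let $F$ be the univariate version of $f$ and set $A:=\{i:0\le i\le p-1,\ F(i)=1\}$. Since $f$ has period $p$ and $p$ is prime, $f$ is $p$-periodic but not constant, so $A$ is a proper nonempty subset of $\{0,\dots,p-1\}$; in particular $1\le|A|\le p-1$, hence $|A|$ is coprime to $p$. As recorded just before the lemma, $p$-periodicity lets us write $f(x)=\sum_{i\in A}\mathsf{MOD}_n^{i,p}(x)$, and summing the Mahler expansions of the summands term by term produces (by uniqueness) the Mahler expansion of $f$; thus the $\ell$-th Mahler coefficient of $f$ over $\mathbb Z_{p^k}$ equals $\sum_{i\in A}\alpha_\ell^{(i,p)}$ in the notation of \sec{moddeg}. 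Specialising the Vandermonde identity \eqref{equ:coef_vande} to $\ell=d$ gives $\alpha_d^{(i,p)}=\alpha_d^{(0,p)}$ for every $i$, so the $d$-th Mahler coefficient of $f$ equals $|A|\cdot\alpha_d^{(0,p)}$.

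It remains to check that this coefficient is nonzero in $\mathbb Z_{p^k}$ and to pass back to the degree. By \cor{mod_deg} with $t=1$ — here the hypothesis $n\ge(p-1)k=d$ is what makes the corollary applicable — we have $\deg_{p^k}(\mathsf{MOD}_n^{0,p})=d$, so by \fct{degm_mahler} $\alpha_d^{(0,p)}\not\equiv 0\pmod{p^k}$, i.e.\ $\alpha_d^{(0,p)}$ is not divisible by $p^k$. Since $|A|$ is coprime to $p$, the product $|A|\cdot\alpha_d^{(0,p)}$ is not divisible by $p^k$ either, so the $d$-th Mahler coefficient of $f$ is nonzero in $\mathbb Z_{p^k}$; as $d\le n$, \fct{degm_mahler} then yields $\deg_{p^k}(f)\ge d=(p-1)k$, and combining with the upper bound gives $\deg_{p^k}(f)=(p-1)k$.

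The only delicate point I anticipate is the Mahler-expansion bookkeeping: verifying that the termwise sum of the expansions of the $\mathsf{MOD}_n^{i,p}$ really is the unique Mahler expansion of $f$ — one must confirm the resulting series satisfies the defining interpolation conditions, including the vanishing requirement for $n<t<p^k$ in the case $n<p^k-1$, which holds because each summand vanishes there — and keeping in mind that \fct{degm_mahler} only controls coefficients at levels $\le n$. This is precisely why the assumption $n\ge(p-1)k$ is invoked twice: once to apply \cor{mod_deg} to $\mathsf{MOD}_n^{0,p}$, and once to conclude $\deg_{p^k}(f)\ge d$ from the nonvanishing of its $d$-th coefficient. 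Everything else is routine arithmetic.
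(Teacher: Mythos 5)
Your proposal is correct and follows essentially the same route as the paper: decompose $f=\sum_{i\in A}\mathsf{MOD}_n^{i,p}$, note via \eq{coef_vande} that every summand has the same top Mahler coefficient $\alpha_d^{(0,p)}$ at $d=(p-1)k$, and use $1\le|A|\le p-1$ together with \cor{mod_deg} and \fct{degm_mahler} to see the coefficient $|A|\cdot\alpha_d^{(0,p)}$ survives modulo $p^k$. If anything you are slightly more careful than the paper's write-up (which asserts nonvanishing modulo $p$ rather than modulo $p^k$) in arguing that multiplying by the unit $|A|$ preserves the $p$-adic valuation, and your explicit upper bound via \thm{wilson1} is a harmless addition.
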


\begin{proof}
	Expand $f(x)$ to get $\sum_{0\leq i\leq p-1:F(i)=1}\mathsf{MOD}_{n}^{i,p}(x)$. According to \cor{mod_deg}, the degree of each term in the summation is $\deg_{p^k}\left(\mathsf{MOD}_{n}^{i,p}\right)=(p-1)\cdot k=:d$. On the other hand, \sec{moddeg} says the coefficient of degree $d$ term is identical in the polynomial of every $\mathsf{MOD}$. As $f(x)$ is non-trivial, the number of terms in the summation, denoted as $N$, is neither $0$ nor $p$. Therefore, the degree $d$ term in $f(x)$ has coefficient $N\cdot\alpha_d^{(0,p)}\not\equiv 0\pmod{p}$, implying $\deg_{p^k}(f)=d$ as desired.
\end{proof}

\begin{lemma}\label{lem:pper2}
	Assume $p$ is a prime, and $f:\{0,1\}^n\to\{0,1\}$ is a symmetric Boolean function with $\pi_p(f)=p^t$. Here $t\geq 1$ and $n\geq (k-1)\cdot\varphi(p^t)+p^t-1$. Then
	\[\deg_{p^k}(f) \geq (k - 2)\cdot \varphi(p^t) + p^t.\]
\end{lemma}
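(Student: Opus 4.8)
\textbf{Proof proposal for \lem{pper2}.}

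The plan is to assume for contradiction that $\deg_{p^k}(f) \leq (k-2)\cdot\varphi(p^t) + p^t - 1$ and derive that $f$ is in fact $p^{t-1}$-periodic, contradicting $\pi_p(f) = p^t$. The natural tool is \thm{wilson1}: if $f$ were $p^{t-1}$-periodic, its Mahler coefficients over $\mathbb{Z}_{p^k}$ would satisfy strong divisibility and its degree would be bounded by $(k-1)\cdot\varphi(p^{t-1}) + p^{t-1} - 1$. So I want to run the implication in the other direction — from a degree upper bound, conclude periodicity — in the spirit of \lem{lowdeg}, which handled the extreme case $\deg_{p^k}(f) \leq p^t - 1$ by a direct Lucas argument. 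The difficulty is that when the degree budget exceeds $p^t - 1$, Lucas alone no longer forces the terms $\binom{a}{j} \equiv \binom{b}{j} \pmod p$ for $a \equiv b \pmod{p^{t-1}}$, because the digit at position $t-1$ matters; I will need to combine the congruence information across all of $\mathbb{Z}_{p^k}$ using the finer divisibility of the higher Mahler coefficients.

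The key steps, in order: (i) Write $f(x) = \sum_{j=0}^{d} \alpha_j \binom{|x|}{j} \bmod p^k$ with $d := \deg_{p^k}(f)$, and suppose $d \leq (k-2)\varphi(p^t) + p^t - 1$. (ii) Split the index range $0 \le j \le d$ into ``low'' indices $j \le p^{t-1} - 1$ and ``high'' indices; more carefully, for each $i$ with $1 \le i \le k-1$, observe that the block of indices $\ell$ with $(i-1)\varphi(p^{t-1}) + p^{t-1} \le \ell$ contributes coefficients divisible by $p^i$ — but I actually want the analogous statement about $p^{t}$-blocks from the hypothesis, not $p^{t-1}$; the right move is to show directly that the assumed degree bound forces, for each residue class $a \equiv b \pmod{p^{t-1}}$, that $\sum_j \alpha_j(\binom{a}{j} - \binom{b}{j}) \equiv 0 \pmod{p^k}$. (iii) To do this, I would peel off one prime-power layer at a time: modulo $p$, Lucas gives $\binom{a}{j} \equiv \binom{b}{j}$ for all $j \le p^{t-1} - 1$ automatically, while for $p^{t-1} \le j \le d$ one uses the second part of \thm{wilson1} — reading it with the $p^{t-1}$-period replaced by the observation that $f$ restricted to any $p^t$-block behaves predictably — to argue $\alpha_j \equiv 0$ to a sufficiently high power of $p$ to absorb the discrepancy $\binom{a}{j} - \binom{b}{j}$, which is itself divisible by an appropriate power of $p$ via Lucas/Kummer. (iv) Summing these layered congruences yields $F(a) \equiv F(b) \pmod{p^k}$, hence equality since $F$ is $\{0,1\}$-valued, i.e. $f$ is $p^{t-1}$-periodic — contradiction. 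Finally I would translate the resulting bound $\deg_{p^k}(f) \geq (k-2)\varphi(p^t) + p^t$ into the statement, noting the hypothesis $n \geq (k-1)\varphi(p^t) + p^t - 1$ is there to guarantee this degree is actually realized (i.e. is $\le n$, so \fct{degm_mahler} applies and \cor{mod_deg} is usable throughout).

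I would most likely structure step (iii) by writing each $\alpha_j$'s $p$-adic valuation lower bound as a function of $j$ using \thm{wilson1} with exponent $p^{t-1}$: for $\ell \ge j_0 \cdot \varphi(p^{t-1}) + p^{t-1}$ one has $p^{j_0} \mid \alpha_\ell$. Pairing this with the elementary bound $v_p\!\left(\binom{a}{j} - \binom{b}{j}\right) \ge v_p(a-b) - (\text{something small})$ when $a \equiv b \pmod{p^{t-1}}$ but possibly differing higher up — actually since we only assume $a\equiv b\pmod{p^{t-1}}$, the cleanest route is to first reduce to the case $b = a - p^{t-1}$ and telescope — should give that every term in $\sum_j \alpha_j(\binom{a}{j}-\binom{b}{j})$ is divisible by $p^k$, provided the top index $d$ is below the stated threshold. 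The main obstacle I anticipate is getting the bookkeeping in this layered valuation argument exactly tight: the theorem statement's constant $(k-2)\varphi(p^t) + p^t$ (as opposed to the ``expected'' $(k-1)\varphi(p^t) + p^t - 1$ from \thm{wilson1}) signals that one full $\varphi(p^t)$-worth of slack is being spent somewhere in the telescoping, and identifying precisely which valuation inequality is lossy — and confirming it cannot be improved — is where the real care is needed. The periodicity reductions and the appeals to \cor{mod_deg} and \lem{lowdeg} should be routine by comparison.
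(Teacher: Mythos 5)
Your plan is the contrapositive of the lemma (a $p^t$-periodic $f$ with $\deg_{p^k}(f)\le (k-2)\varphi(p^t)+p^t-1$ must be $p^{t-1}$-periodic), which is logically fine, but the mechanism you propose for step (iii) has a genuine gap. First, you invoke the divisibility part of \thm{wilson1} ``with exponent $p^{t-1}$'', i.e. $p^{j_0}\mid\alpha_\ell$ for $\ell\ge j_0\varphi(p^{t-1})+p^{t-1}$; that statement is only available when $f$ is $p^{t-1}$-periodic, which is exactly the conclusion you are trying to reach. The legitimate version uses the known period $p^t$ and gives $p^{j_0}\mid\alpha_\ell$ only for $\ell\ge j_0\varphi(p^t)+p^t$, hence nothing at all about $\alpha_\ell$ for $\ell<\varphi(p^t)+p^t$. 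Second, and more fundamentally, the congruence $\sum_j\alpha_j\big(\binom{a}{j}-\binom{b}{j}\big)\equiv 0\pmod{p^k}$ cannot be obtained term by term from valuation bounds: for $a\equiv b\pmod{p^{t-1}}$ and $j\ge p^{t-1}$ the difference $\binom{a}{j}-\binom{b}{j}$ need not be divisible by $p$ at all, and the corresponding $\alpha_j$ need not be either. Concretely, with $p=2$, $k=2$, $t=2$ (so the degree budget is $d\le 3$), take $a=4$, $b=a-p^{t-1}=2$, $j=2$: $\binom{4}{2}-\binom{2}{2}=5$ is odd, and no available fact forces $4\mid\alpha_2$; the same example refutes the hoped-for Kummer-type bound $v_p\big(\binom{a}{j}-\binom{b}{j}\big)\ge v_p(a-b)-O(1)$. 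The divisibility of the sum is true only because of cancellation across terms, driven by the values of $F$ over a whole $p^t$-period, and your proposal has no device that captures this cancellation.

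That cancellation is precisely what the paper's proof organizes, by a linear-algebra argument rather than $p$-adic bookkeeping. It expands $f=\sum_j w_j\MOD_n^{j,p^t}$ in the $\MOD$ basis, uses \thm{wilson1} to see that the top $\varphi(p^t)$ Mahler coefficients of each $\MOD_n^{j,p^t}$ are divisible by $p^{k-2}$, and collects their quotients mod $p$ into a matrix $\bm S$ over $\mathbb F_p$. Via the Vandermonde relation \eqref{equ:coef_vande} and the invertibility of a binomial-coefficient matrix, $\bm S$ has rank $\varphi(p^t)$, so $\ker\bm S$ has dimension $p^{t-1}$ and is spanned exactly by the weight vectors of the functions $\MOD_n^{i,p^{t-1}}$, whose degree is at most $d-\varphi(p^t)$ by \cor{mod_deg}. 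Since $\pi_p(f)=p^t$, the weight vector $\bm w$ of $f$ lies outside this kernel, so $\bm S\bm w\neq\bm 0$, exhibiting a Mahler coefficient of $f$ that is nonzero mod $p^k$ at some degree $D\ge (k-2)\varphi(p^t)+p^t$ (and $D\le n$, so \fct{degm_mahler} applies); this is also where your anticipated $\varphi(p^t)$ of slack goes: only the top block of $\varphi(p^t)$ coefficients is both $p^{k-2}$-divisible and governed by a full-rank matrix. To salvage your route you would need a substitute for this kernel computation, i.e. an argument that vanishing of all coefficients of degree at least $(k-2)\varphi(p^t)+p^t$ forces $\bm w$ into the span of the $p^{t-1}$-periodic patterns; term-wise valuation estimates will not deliver it.
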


\begin{proof}

	Consider over the ring $\mathbb{Z}_{p^k}$. Let $d:=(k-1)\cdot\varphi(p^t)+p^t-1$. Provided $p^t$, we abbreviate $\alpha_{\ell}^{(j,p^t)}$, the $\ell$-th Mahler coefficients of $\mathsf{MOD}_{n}^{j,p^t}$, as $\alpha_{\ell}^{(j)}$ for convenience. According to \cor{mod_deg}, we have $\deg_{p^k}\left(\mathsf{MOD}_{n}^{j,p^t}\right)=d$. 
	
	For all $0\leq i\leq p^t-1$, \thm{wilson1} implies the fact that $\alpha_\ell^{(i)}$ can be divided by $p^{k-2}$ when $\ell\geq (k-2)\cdot\varphi(p^t)+p^t=d-\varphi(p^t)+1$. Therefore, we divide every such coefficient by $p^{k-2}$, and then take the remainder modulo $p$ to get $\tilde{\alpha}_{\ell}^{(i)}:=\left(\alpha_\ell^{(i)}/p^{k-2}\right)\bmod p$, where $d-\varphi(p^t)+1\leq \ell\leq d$. All these $\tilde{\alpha}_{\ell}^{(i)}$ forms the matrix
	\[{\bm S}:=
	\begin{bmatrix}
	\tilde\alpha_d^{(0)} & \cdots & \tilde\alpha_d^{(p^t-1)} \\
	\vdots & \ddots & \vdots \\
	\tilde\alpha_{d - \varphi(p^t) + 1}^{(0)} & \cdots & \tilde\alpha_{d - \varphi(p^t) + 1}^{(p^t-1)}
	\end{bmatrix}
	\in\mathbb{F}_p^{\varphi(p^t)\cdot p^t}.\]
	Take its first $\varphi(p^t)$ columns to get a square matrix
	\[{\bm S'}:=
	\begin{bmatrix}
	\tilde\alpha_d^{(0)} & \cdots & \tilde\alpha_d^{\left(\varphi(p^t)-1\right)} \\
	\vdots & \ddots & \vdots \\
	\tilde\alpha_{d - \varphi(p^t) + 1}^{(0)} & \cdots & \tilde\alpha_{d - \varphi(p^t) + 1}^{\left(\varphi(p^t)-1\right)}
	\end{bmatrix}
	.\]
	When $d-\varphi(p^t)+1\leq \ell\leq d$, divide both sides of Equation (\ref{equ:coef_vande}) by $p^{k-2}$ and take the remainder modulo $p$ to get
	\[\tilde\alpha_\ell^{(a)}=\sum_{j=0}^{d-\ell}\binom{-a}{j}\tilde\alpha_{j+\ell}^{(0)},\]
	which leads to the following decomposition of matrix ${\bm S'}$:
	\[{\bm S'}
	=
	\begin{bmatrix}
	\tilde\alpha_d^{(0)} & & \\
	\vdots & \ddots &  \\
	\tilde\alpha_{d-\varphi(p^t)+1}^{(0)}& \cdots & \tilde\alpha_d^{(0)} 
	\end{bmatrix}
	\cdot
	\begin{bmatrix}
	\binom{0}{0} & \cdots & \binom{1-\varphi(p^t)}{0} \\
	\vdots & \ddots & \vdots \\
	\binom{0}{\varphi(p^t)-1} & \cdots & \binom{1-\varphi(p^t)}{\varphi(p^t)-1} 
	\end{bmatrix}
	=:{\bm T}\cdot{\bm C}.
	\]
	As $\tilde\alpha_d^{(0)}\neq 0$, the first matrix has determinant $\det({\bm T})\neq 0$. The latter one, consisting of binomial coefficients, is also invertible. We will prove this fact later. Eventually, $\mathrm{rank}({\bm S'})=\varphi(p^t)$, so the kernel of $\bm S$ has dimension $\dim\ker{\bm S}=(p^t)-\varphi(p^t)=p^{t-1}$. 
	
	On one hand, because $f(x)$ is $p^t$-periodic, we can expand it by $\{\mathsf{MOD}_n^{j,p^t}\}_{j=0}^{p^t-1}$ with coefficients $w_j$. 
	\begin{equation}\label{equ:fexpand}
	f(x)=\sum_{j=0}^{p^t-1}w_j\mathsf{MOD}_{n}^{j,p^t}(x)=\sum_{j=0}^{p^t-1}\left(w_j\sum_{\ell=0}^{d}\alpha_{\ell}^{(j)}\binom{|x|}{\ell}\right)
	=\sum_{\ell=0}^{d}\left(\left(\sum_{j=0}^{p^t-1}w_j\alpha_{\ell}^{(j)}\right)\binom{|x|}{\ell}\right).
	\end{equation}
	
	On the other hand, $\mathsf{MOD}_n^{i,p^{t-1}}$ can also be spanned by $\{\mathsf{MOD}_n^{j,p^t}\}$. In other words, assume $w_j^{(i)}:=\mathbb{I}[i\equiv j\pmod{p^{t-1}}]$, then
	\begin{equation}\label{equ:pt1_lowdeg}
	\mathsf{MOD}_n^{i,p^{t-1}}
	=\sum_{j=0}^{p^t-1}w_j^{(i)}\mathsf{MOD}_n^{j,p^t}
	=\sum_{\ell=0}^{d}\left(\left(\sum_{j=0}^{p^t-1}w_j^{(i)}\alpha_{\ell}^{(j)}\right)\binom{|x|}{\ell}\right).
	\end{equation}
	We claim $\deg_{p^k}(\mathsf{MOD}^{i,p^{t-1}}_n)\leq d-\varphi(p^t)$, and because $n\geq d$, the coefficients of highest $\varphi(p^t)$ terms in its Mahler expansion (i.e., from degree $d-\varphi(p^t)+1$ to degree $d$) are all zero. This is due to \cor{mod_deg}, where we have
	\begin{align*}
	\deg_{p^k}(\mathsf{MOD}^{i,p^{t-1}}_n)&=(k-1)\varphi(p^{t-1})+p^{t-1}-1\\
	&=p^{t-2}\left((k-1)(p-1)+p\right)-1\\
	&\leq p^{t-2}\left((k-1)(p^2-p)+p\right)-1\\
	&=d-\varphi(p^t).
	\end{align*}
	Further, if we set column vector ${\bm w}^{(i)}=\left(w_0^{(i)},...,w_{p^t-1}^{(i)}\right)^{\top}$ where $0\leq i\leq p^{t-1}-1$, then the above fact, together with Equation (\ref{equ:pt1_lowdeg}), indicates the following equation:
	\[{\bm S}{\bm w}^{(i)}=
	\left[\sum\limits_{j=0}^{p^t-1}w_j^{(i)}\tilde\alpha_{d}^{(j)},
	\sum\limits_{j=0}^{p^t-1}w_j^{(i)}\tilde\alpha_{d-1}^{(j)},
	\cdots,
	\sum\limits_{j=0}^{p^t-1}w_j^{(i)}\tilde\alpha_{d-\varphi(p^t)+1}^{(j)}
	\right]^{\top}
	={\bm 0}.
	\]
	One can verify that $\left\{{\bm w}^{(0)},\cdots,{\bm w}^{(p^{t-1}-1)}\right\}$ is linear independent, and therefore they form a base of the kernel of ${\bm S}$ i.e., $\ker{\bm S}=\mathrm{span}\left\{{\bm w}^{(0)},\cdots,{\bm w}^{(p^{t-1}-1)}\right\}$. 
	
	However, $f(x)$ is not $p^{t-1}$-periodic because $\pi_p(f)=p^t$. This means $f(x)$ cannot be written as the sum of some $\mathsf{MOD}_{n}^{i,p^{t-1}}$. Thus ${\bm w}:=\left(w_0,...,w_{p^t-1}\right)^{\top}\notin\mathrm{span}\left\{{\bm w}^{(0)},\cdots,{\bm w}^{(p^{t-1}-1)}\right\}$ i.e., ${\bm S}{\bm w}\neq{\bm 0}$. This leads to the existence of $D\in\left[d-\varphi(p^t)+1,d\right]$ such that $\sum_{j=0}^{p^t-1}w_j^{(i)}\tilde\alpha_{D}^{(j)}\neq 0$. Namely, the degree $D$ term in Equation (\ref{equ:fexpand}) exists as desired. 
\end{proof}

To finish the proof of this lemma, we show that the matrix ${\bm C}$ is invertible, due to the following proposition. 

\begin{proposition}
	$\det({\bm C})=\pm 1.$
\end{proposition}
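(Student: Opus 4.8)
The matrix $\bm{C}$ has entries $C_{r,s} = \binom{-s}{r}$ for $0 \le r, s \le \varphi(p^t)-1$ (reading the displayed matrix: row index $r$ is the lower entry of the binomial, column index $s$ contributes the top entry $-s$). The plan is to recognize $\bm{C}$ as the matrix of a change of basis in the space of polynomials of degree at most $\varphi(p^t)-1$, expressed in two bases that are each ``unitriangular'' with respect to the monomial basis, so the determinant is forced to be $\pm 1$.

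Concretely, first I would set $N := \varphi(p^t)$ and consider the polynomials $g_s(y) := \binom{y-s}{\bullet}$ — more precisely, think of the generating identity $\sum_{r \ge 0} \binom{-s}{r} z^r = (1+z)^{-s}$, but since we only need a finite, purely algebraic statement, the cleaner route is this: view $\bm{C}$ as the product of two matrices. Let $\bm{P}$ be the lower-triangular matrix whose $(r, j)$ entry is the coefficient of $y^j$ in $\binom{y}{r}$ (a polynomial of degree exactly $r$ with leading coefficient $1/r!$), and let $\bm{Q}$ be the matrix whose $(j, s)$ entry is $(-s)^j$ — a Vandermonde-type matrix in the nodes $0, -1, \dots, -(N-1)$. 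By the Vandermonde expansion $\binom{y-s}{r} = \sum_{i} \binom{-s}{i}\binom{y}{r-i}$ used in the paper, or directly by substituting $y = $ (something), one checks $\binom{-s}{r} = \sum_{j=0}^{r} [\text{coeff of } y^j \text{ in } \binom{y}{r}]\cdot(-s)^j$, i.e. $\bm{C} = \bm{P}\bm{Q}$ — wait, this would make $\det \bm{C}$ a product of factorials and a Vandermonde determinant, which is not $\pm 1$.

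So the correct and simplest approach is instead the following. Regard the columns of $\bm{C}$: column $s$ is the coefficient vector, in the binomial basis $\{\binom{y}{0}, \binom{y}{1}, \dots, \binom{y}{N-1}\}$, of the polynomial $p_s(y) := \binom{y - s}{N-1}$ truncated — no. The genuinely clean statement: the binomial coefficients $\binom{-s}{r}$ satisfy $\binom{-s}{r} = (-1)^r\binom{s+r-1}{r}$, and more usefully the matrix $\bm{C}$ is exactly the matrix expressing $\{\binom{y-s}{\star}\}$; but what actually makes it unimodular is that $\bm{C}$ is the matrix of the linear operator ``shift by one'' composed with itself, restricted to the space of polynomials of degree $< N$ in the binomial basis, and each such shift is unitriangular. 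Precisely: in the binomial basis, the operator $(E f)(y) := f(y-1)$ has matrix with $(r,r)$ entry $1$ and is lower-triangular, since $\binom{y-1}{r} = \binom{y}{r} - \binom{y}{r-1}$; hence $E$ has determinant $1$ on this space, and $E^s$ likewise. Now I claim column $s$ of $\bm{C}$ is the coordinate vector of $E^s\binom{y}{N-1}$... this still needs the top row to match. The robust fix, and the one I would commit to writing: prove directly that $\bm{C}$ is lower-triangular up to reversal of one index. Reversing the row order (multiplying by the anti-diagonal permutation, determinant $\pm 1$), the entry in position $(r,s)$ becomes $\binom{-s}{N-1-r}$, which vanishes when $N-1-r < $ ... it does not vanish.

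Given these false starts, the honest plan is: \textbf{(i)} Identify $\bm{C}_{r,s} = \binom{-s}{r}$, $0\le r,s\le N-1$ with $N=\varphi(p^t)$. \textbf{(ii)} Perform column operations $\mathrm{col}_s \leftarrow \mathrm{col}_s - \mathrm{col}_{s-1}$ for $s = N-1, N-2, \dots, 1$ (in that order), each of which preserves the determinant; using Pascal's rule $\binom{-s}{r} - \binom{-(s-1)}{r} = -\binom{-(s-1)}{r-1} = \binom{-s+1}{r-1}\cdot(-1)$, track how the matrix transforms. \textbf{(iii)} Iterating this reduction $N-1$ times reduces $\bm{C}$ to a triangular matrix whose diagonal is $\pm\binom{0}{0} = \pm 1$ in every entry, hence $\det \bm C = \pm 1$. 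The main obstacle — and the only real content — is bookkeeping the signs and indices through the iterated differencing to confirm the reduced matrix is triangular with unit diagonal; this is a finite, mechanical induction on the number of column-difference passes, using nothing beyond Pascal's rule $\binom{a}{b} = \binom{a-1}{b} + \binom{a-1}{b-1}$ and the boundary values $\binom{0}{0}=1$, $\binom{0}{b} = 0$ for $b>0$. I would present it as: ``Applying the determinant-preserving column operation $\bm C \mapsto \bm C \bm U$ with $\bm U$ the upper-triangular matrix of the finite-difference operator, iterated, turns $\bm C$ into an upper-unitriangular matrix; hence $\det \bm C = \det(\text{triangular with } \pm 1 \text{ diagonal}) = \pm 1$,'' filling in the one-line verification of each step.
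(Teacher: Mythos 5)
Your committed plan --- iterated adjacent-column subtractions justified by Pascal's rule, reducing $\bm C$ to a triangular matrix with $\pm 1$ diagonal --- is correct and is essentially the paper's own proof, which pulls out the row signs via $\binom{-s}{r}=(-1)^r\binom{s+r-1}{r}$ and then performs the same kind of iterated adjacent-row subtractions to reach an upper unitriangular matrix. One small correction to your bookkeeping: the difference identity is $\binom{-s}{r}-\binom{-s+1}{r}=-\binom{-s}{r-1}$ (not $-\binom{-s+1}{r-1}$); with that fix each pass zeroes one more row to the right of the diagonal and your induction closes as described.
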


\begin{proof}
	In fact, 
	\[{\bm C}=\mathrm{diag}\{1,-1,1,-1,...(-1)^{m-1}\}\cdot
	\begin{bmatrix}
	1&1&\binom{1}{1}&\cdots&\binom{m-2}{m-2}\\
	0&\binom{1}{0}&\binom{2}{1}&\cdots&\binom{m-1}{m-2}\\
	0&\binom{2}{0}&\binom{3}{1}&\cdots&\binom{m}{m-2}\\
	\vdots&\vdots&\vdots&\ddots&\vdots\\
	0&\binom{m-1}{0}&\binom{m}{1}&\cdots&\binom{2m-3}{m-2}\\
	\end{bmatrix}
	\]
	where $m:=\varphi(p^t)$. Denote the second matrix as ${\bm C}'$. Take Row $m$ of ${\bm C}'$ and subtract Row $m-1$ from it. Then take Row $m-1$ and subtract Row $m-2$ from it. $\cdots$ Take Row $3$ and subtract Row $2$ from it. Then $C'$ has been transformed to
	\[\begin{bmatrix}
	1&1&\binom{1}{1}&\cdots&\binom{m-2}{m-2}\\
	0&\binom{1}{0}&\binom{2}{1}&\cdots&\binom{m-1}{m-2}\\
	0&0&\binom{2}{0}&\cdots&\binom{m-1}{m-3}\\
	\vdots&\vdots&\vdots&\ddots&\vdots\\
	0&0&\binom{m-1}{0}&\cdots&\binom{2m-4}{m-3}\\
	\end{bmatrix}.\]
	Keep repeating this step, and ${\bm C}'$ will be transformed into an upper triangular matrix, whose diagonal elements are all $1$. 
\end{proof}

	Now we are ready to prove \thm{degpk_sym}. 

\begin{proof}[Proof of \thm{degpk_sym}]
	Note that \lem{pper1} provides tight instances, so below we are going to prove the inequality. 
	
	Assume towards contradiction that there exists $f:\{0,1\}^n\to\{0,1\}$ satisfying $\deg_{p^k}(f)<(p-1)\cdot k$, and $n \geq (k-1) \cdot \varphi(p^{\mu}) + p^{\mu}-1$ where $\mu = \lceil\log_p((p - 1)k - 1)\rceil$. \lem{lowdeg} tells us that $f$ is $p^{\mu}$-periodic. The non-triviality of $f$ indicates that there exists $1\leq t\leq\mu$ such that $\pi_p(f)=p^t$. 
	
	If $t=1$, then according to \lem{pper1} we have $\deg_{p^k}(f)=(p-1)\cdot k$. If $t\geq 2$, then \lem{pper2} indicates
	\begin{align*}
	\deg_{p^k}(f)&\geq (k-2)\cdot\varphi(p^t)+p^t\\
	&\geq (k-2)\cdot\varphi(p^2)+p^2\\
	&=k\cdot (p-1)^2+2p-p^2+(p-1)\cdot k\\
	&\geq (p-1)^2+2p-p^2+(p-1)\cdot k\\
	&> (p-1)\cdot k.
	\end{align*}
	Both cases lead to contradiction.
\end{proof}

\subsection[Proof of Theorem 2]{Non-degenerate Functions -- Proof of \thm{degpk_all}}\label{sec:pknd}

For the general non-degenerate case, our key idea is to embed a symmetric Boolean function into it, and then apply \thm{degpk_sym}. The following lemma is crucial. 

\begin{lemma}\label{lem:embed}
	There exists a monotone increasing function $r(n)=\omega(1)$ such that the following holds. 
	Let $f:\{0,1\}^n\to\{0,1\}$ be a non-degenerate Boolean function. 
	Then there exists a set of indices $S\subseteq[n]$ with $|S|\geq r(n)$, and a mapping $\sigma:[n]\backslash S\to\{0,1\}$ such that $f|_\sigma$ is a non-trivial symmetric Boolean function. 
\end{lemma}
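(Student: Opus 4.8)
\emph{Proof plan.} The plan is to first locate, via Simon's sensitivity bound, a point around which $f$ is locally sensitive in many coordinates, and then to prune those coordinates by an iterated Ramsey argument until the surviving restriction depends only on the Hamming weight of its input.

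First, since $f$ is non-degenerate, \thm{simon} supplies a point $x\in\{0,1\}^n$ with $s(f,x)=s(f)\ge \tfrac12\log n-\tfrac12\log\log n+\tfrac12=:h(n)$. Put $S_0:=\{i\in[n]:f(x)\ne f(x^{\oplus i})\}$, so that $|S_0|=s(f)\ge h(n)$. Fixing every coordinate outside $S_0$ to its value in $x$ already gives a restriction in which the surviving copy of $x$ flips value under every single-bit flip, but this restriction need not be symmetric. So we prune further: partition $S_0$ according to the bit $x_i\in\{0,1\}$; one of the two blocks, call it $S_1$, satisfies $|S_1|\ge |S_0|/2\ge h(n)/2$, and $x$ is \emph{constant} on $S_1$.

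Second comes the combinatorial core. Colour every subset $A\subseteq S_1$, of arbitrary size, by $\chi(A):=f(x^{\oplus A})\in\{0,1\}$, and suppose we can find $H\subseteq S_1$ for which $\chi(A)$ depends only on $|A|$ for all $A\subseteq H$. Taking $S:=H$ and letting $\sigma$ fix every coordinate outside $H$ to its value in $x$, we get $f|_\sigma(y)=f(x^{\oplus A_y})=\chi(A_y)$, where $A_y\subseteq H$ is the set of coordinates on which $y$ disagrees with $x$; since $x$ is constant on $H$, $|A_y|$ is determined by $|y|$ (it is $|y|$ or $|H|-|y|$), hence $f|_\sigma(y)$ depends only on $|y|$ and $f|_\sigma$ is symmetric. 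It is non-trivial, because changing $y$ from the pattern $x|_H$ at one coordinate $i\in H$ turns $\chi(\emptyset)=f(x)$ into $\chi(\{i\})=f(x^{\oplus i})\ne f(x)$. To build such an $H$ of any prescribed size $r$, apply the $j$-uniform hypergraph Ramsey theorem (Erd\H{o}s--Rado) in turn for $j=1,2,\dots,r-1$: starting from $S_1$, repeatedly pass to a sub-block on which $\chi$ is constant on all $j$-element subsets. Homogeneity at a level is inherited by sub-blocks, so after these $r-1$ steps the final block, of size $r$, is homogeneous at every level $0,1,\dots,r$ (levels $0$ and $r$ are automatic). Chasing the (finite) Ramsey numbers backwards from $a_r:=r$ via $a_j:=R_j(a_{j+1})$, where $R_j(\cdot)$ is the $2$-colour $j$-uniform Ramsey number, defines a finite threshold $N(r):=a_1$, and $|S_1|\ge N(r)$ is all that is needed.

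Finally, combine the two parts: take $r(n)$ to be a monotone increasing minorant of $\max\{r:N(r)\le h(n)/2\}$; since $h(n)\to\infty$ and $N(\cdot)$ is finite-valued and increasing, $r(n)=\omega(1)$, and $|S_1|\ge h(n)/2\ge N(r(n))$ yields the desired $H$ with $|H|=r(n)$. I expect the combinatorial core to be the main obstacle: one must simultaneously (i) pass to a block $S_1$ on which $x$ is constant, so that Hamming weight on the survivors matches flip-set size, and (ii) iterate Ramsey across \emph{all} cardinalities $1,\dots,r-1$ rather than at a single level. The price is that $N(r)$ is of tower/Ackermann type, so $r(n)$ grows only about as fast as $\log^* n$; but any $\omega(1)$ growth suffices for \thm{degpk_all}.
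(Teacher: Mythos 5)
Your proposal is correct and follows essentially the same route as the paper: use Simon's sensitivity bound to find a highly sensitive point, restrict to sensitive coordinates on which the point is constant, and iterate Erd\H{o}s--Rado hypergraph Ramsey over uniformities $1,\dots,r-1$ to obtain a set on which the value depends only on the number of flipped bits, which yields a non-trivial symmetric restriction. The only difference is presentational: you fix the target size $r$, define the threshold $N(r)$ by composing Ramsey numbers and then invert to get $r(n)=\omega(1)$, whereas the paper runs the iteration forward, tracks the shrinking set sizes, and chooses $r(n)\approx\sqrt{\log^* n}$ self-referentially -- both give the same tower-type quantitative behavior.
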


Generally speaking, with this lemma in hand, every $h(n)$ lower bound on complexity measure of symmetric functions that is monotone decreasing w.r.t. restriction (e.g., \fac{restrict}) can yield an $h(r(n))$ lower bound on that of all non-degenerate functions. In the setting of modulo-$p^k$ degree, the bound $h(n)$ is a constant function (when $n$ is large than some threshold), so we can get the same bound, except that the threshold for $n$ blows up. However, as indicated by our proof, the function $r(n)$ grows extraordinary slow (approximately the square root of iterated logarithm of $n$). 

First, let us see how to utilize this lemma in proving \thm{degpk_all}. 

\begin{proof}[Proof of \thm{degpk_all}]
    As long as \lem{embed} holds, by \fac{restrict} and \thm{degpk_sym}, if $n\geq r^{-1}((k - 1)\cdot \varphi(p^\mu) + p^\mu - 1)$, then $\deg_{p^k}(f)\ge(p-1)\cdot k$, deriving the desired lower bound. In addition, any symmetric function with period $p$ (and input long enough) still serves as an instance with $\deg_{p^k}(f)=(p-1)\cdot k$. 
\end{proof}

In the rest part of this section we prove \lem{embed}. 

For convenience, we introduce the following notation. If $x_i=1$ for every $i\in S\subseteq[n]$, then define $\mathrm{DOWN}(S, x, k) := \left\{x^{\oplus T} \mid T \subseteq S,|T|=k\right\}$. Intuitively speaking, it is the set of strings obtained by flipping $k$ bits, whose indices are in $S$, of $x$ from $1$ to $0$. 

According to \thm{simon}, there exists $\tilde{x}$ such that $s(f,\tilde{x})=\Omega(\log n)$. Without loss of generality we assume the set $\{i\in[n]:\tilde{x}_i=1,f(\tilde{x})\neq f(\tilde{x}^{\oplus i})\}$, defined as $S_0$, is of cardinality $\Omega(\log n)$. Recursively define $S_t$ to be the largest set satisfying the following two conditions:
\begin{itemize}
	\item $S_t\subseteq S_{t-1}$.
	\item The value $f(y)$ are identical for any $y\in\mathrm{DOWN}(S_t, \tilde{x}, t)$.
\end{itemize}
We then make the following claim, and prove it.

\begin{claim}\label{clm:ssize}
	\[|S_t| = \Omega\left(\log^{\circ (t - 1)}(|S_{t-1}|)\right).\]
\end{claim}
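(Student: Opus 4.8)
\textbf{Proof proposal for Claim \ref{clm:ssize}.}

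The plan is to prove this by induction on $t$, using a hypergraph Ramsey argument (the Erd\H{o}s--Rado theorem) at each step. The intuition is that $S_{t-1}$ was chosen so that $f$ is constant on all strings obtained by flipping exactly $t-1$ ones of $\tilde x$ inside $S_{t-1}$. Now I want to find a large subset $S_t\subseteq S_{t-1}$ on which $f$ is \emph{also} constant on all $t$-flips. To do this, I would $2$-color the $t$-subsets of $S_{t-1}$: for each $T\subseteq S_{t-1}$ with $|T|=t$, assign the color $f(\tilde x^{\oplus T})\in\{0,1\}$. By the Erd\H{o}s--Rado hypergraph Ramsey theorem, any $2$-coloring of the $t$-subsets of a set of size $N$ contains a monochromatic subset of size $\Omega(\log^{\circ(t-1)}(N))$ (the tower of logarithms has height $t-1$ because Ramsey numbers for $t$-uniform hypergraphs grow like a tower of height $t-1$). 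Applying this with $N=|S_{t-1}|$ yields a set $S_t\subseteq S_{t-1}$ of size $\Omega(\log^{\circ(t-1)}(|S_{t-1}|))$ on which all $t$-flips give the same value of $f$; this $S_t$ is a legal candidate in the recursive definition, so the actual $S_t$ chosen (being largest) is at least this large.

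The key steps, in order, are: (i) recall/state the quantitative Erd\H{o}s--Rado bound for $2$-colorings of $t$-uniform hypergraphs, namely that the Ramsey number $r_t(s;2)$ is at most a tower of exponentials of height $t-1$ in $s$, equivalently a monochromatic clique of size $\Omega(\log^{\circ(t-1)}(N))$ exists in any $2$-colored $t$-uniform hypergraph on $N$ vertices; (ii) set up the coloring $\chi(T)=f(\tilde x^{\oplus T})$ on $t$-subsets $T$ of $S_{t-1}$, which is well-defined since all such strings have the $S_{t-1}$-coordinates that are flipped going from $1$ to $0$ and the rest fixed; (iii) invoke the theorem to extract the monochromatic $S_t$; (iv) verify $S_t$ satisfies both bullet conditions in the definition (containment is immediate; the constant-value condition on $\mathrm{DOWN}(S_t,\tilde x,t)$ is exactly monochromaticity), so the maximum-size set in the definition is at least $|S_t|$.

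The main obstacle — really the only subtle point — is getting the iterated-logarithm bookkeeping exactly right: one must be careful that flipping $t$ bits (not $t-1$) uses the $t$-uniform Ramsey number, whose inverse growth rate is $\log^{\circ(t-1)}$, not $\log^{\circ t}$ or $\log^{\circ(t-2)}$; the off-by-one in the height of the logarithm tower is the thing most likely to be misstated. A secondary point is that the $\Omega(\cdot)$ hides a constant depending on $t$, which is harmless here because this claim is only ever applied for a fixed constant number of iterations (ultimately $t$ is taken around $\sqrt{\log^* n}$ or so when the claim is chained together), but one should at least flag that the hidden constant is uniform enough for the intended application. No genuinely hard estimate is needed; the content is entirely in invoking Erd\H{o}s--Rado with the right parameters.
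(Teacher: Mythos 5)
Your proposal is correct and follows essentially the same route as the paper: both color the $t$-subsets of $S_{t-1}$ by the value of $f$ on the corresponding $t$-flip of $\tilde{x}$, apply the Erd\H{o}s--Rado bound to extract a monochromatic clique of size $\Omega\left(\log^{\circ(t-1)}(|S_{t-1}|)\right)$, and observe that monochromaticity makes this clique a valid candidate in the recursive definition, so the maximal $S_t$ is at least that large. Your bookkeeping on the tower height ($t$-uniform colorings giving $\log^{\circ(t-1)}$) is exactly right, and in fact stated more carefully than in the paper's own one-paragraph proof.
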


Our proof relies on Erd\H os--Rado Theorem \cite{Erdos1952Combinatorial} from Ramsey theory on hypergraphs. 

\begin{definition}[$k$-Uniform Hypergraph Ramsey Number]
	Suppose $V$ is a set of vertices, and all size-$k$ subsets of $V$ forms $\mathcal{F}_k(V)$. If $E\subseteq\mathcal{F}_k(V)$, then we call $(V,E)$ as a $k$-uniform hypergraph of order $|V|$. Naturally, we call $(V,\mathcal{F}_k(V))$ a complete $k$-uniform hypergraph. 
	
	If the following property holds for complete $k$-uniform hypergraph of order $M$ but not $M-1$, then $r_k(s,t):=M$ is called the $k$-uniform hypergraph Ramsey number: 
	color every $k$-hyperedge red or blue arbitrarily, then there must exist a complete red hyper-subgraph of order $s$, or a complete blue hyper-subgraph of order $t$. 
\end{definition}

\begin{theorem}[\cite{Erdos1952Combinatorial}]\label{thm:erdos}
    $r_2(s,t)\leq\binom{s+t-2}{t-1}$ and
	\[r_k(s, t) \leq 2^{\binom{r_{k-1}(s - 1, t - 1)}{k - 1}},\quad (k>2).\]
\end{theorem}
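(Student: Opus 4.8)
I would prove the two bounds separately: the graph case $k=2$ by the classical Erd\H os--Szekeres recurrence, and the hypergraph case $k>2$ by a ``stepping up'' argument that reduces a $2$-coloring of $k$-subsets to a $2$-coloring of $(k-1)$-subsets, on a smaller but still large vertex set.

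\textbf{The case $k=2$.} First I would establish the recurrence $r_2(s,t)\le r_2(s-1,t)+r_2(s,t-1)$ together with the base cases $r_2(s,1)=r_2(1,t)=1$. For the recurrence, 2-color the edges of the complete graph on $N:=r_2(s-1,t)+r_2(s,t-1)$ vertices, fix a vertex $v$, and observe that among the remaining $N-1$ vertices either at least $r_2(s-1,t)$ are joined to $v$ in red or at least $r_2(s,t-1)$ are joined to $v$ in blue; in the former case that red neighborhood contains a blue $K_t$ (done) or a red $K_{s-1}$, which together with $v$ gives a red $K_s$, and symmetrically in the latter case. The bound $r_2(s,t)\le\binom{s+t-2}{t-1}$ then follows by induction on $s+t$ via Pascal's identity $\binom{s+t-3}{t-1}+\binom{s+t-3}{t-2}=\binom{s+t-2}{t-1}$, the base cases matching $\binom{s-1}{0}=\binom{t-1}{t-1}=1$.

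\textbf{The case $k>2$.} Put $m:=r_{k-1}(s-1,t-1)$ and let $c$ be an arbitrary red/blue coloring of the $k$-subsets of a vertex set $V$ with $|V|=n:=2^{\binom{m}{k-1}}$. I would greedily build a vertex sequence $v_1,v_2,\dots$ and a nested family $V=S_0\supseteq S_1\supseteq\cdots$ of ``candidate'' sets, maintaining the invariant that $\{v_1,\dots,v_i\}\cap S_i=\emptyset$ and that for every $(k-1)$-subset $R\subseteq\{v_1,\dots,v_i\}$ the value $c(R\cup\{w\})$ is the same for all $w\in S_i$; call this common value $c'(R)$. To pass from $S_{i-1}$ to $S_i$, pick any $v_i\in S_{i-1}$ and partition $S_{i-1}\setminus\{v_i\}$ according to the pattern $R_0\mapsto c(\{v_i\}\cup R_0\cup\{w\})$ over the $(k-2)$-subsets $R_0$ of $\{v_1,\dots,v_{i-1}\}$, keeping the largest class as $S_i$; there are at most $2^{\binom{i-1}{k-2}}$ patterns, so $|S_i|\ge(|S_{i-1}|-1)/2^{\binom{i-1}{k-2}}$. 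Since $\sum_{i=1}^m\binom{i-1}{k-2}=\binom{m}{k-1}$ by the hockey-stick identity, $n=2^{\binom{m}{k-1}}$ is enough to reach step $m$ with $S_m\neq\emptyset$, so $c'$ is a well-defined red/blue coloring of \emph{all} $(k-1)$-subsets of $\{v_1,\dots,v_m\}$. As $\lvert\{v_1,\dots,v_m\}\rvert=m=r_{k-1}(s-1,t-1)$, the coloring $c'$ has a set $W$ of $s-1$ vertices with all $(k-1)$-subsets red, or a set of $t-1$ vertices with all $(k-1)$-subsets blue; in the first case, letting $j$ be the largest index with $v_j\in W$ and adjoining any $v^*\in S_j$, I would check that every $k$-subset of $W\cup\{v^*\}$ gets color $c'$ of some $(k-1)$-subset of $W$ (splitting into subsets contained in $W$, which reduce via $v_{j_k}\in S_{j_{k-1}}$, and subsets of the form $R\cup\{v^*\}$ with $v^*\in S_j\subseteq S_{\max R}$), hence is red; this gives a red $K_s^{(k)}$, and symmetrically a blue $K_t^{(k)}$, so $r_k(s,t)\le n$.

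\textbf{Main obstacle.} The conceptual weight is entirely in the reduction step for $k>2$. The delicate point is the size bookkeeping: verifying that the $(|S_{i-1}|-1)$ losses from deleting the chosen $v_i$ do not spoil the exponent $\binom{m}{k-1}$ (and handling the degenerate ranges $s<k$ or $t<k$ via the trivial bounds), and keeping straight in the final extraction that every $k$-subset of $W\cup\{v^*\}$ --- both those inside $W$ and those meeting $v^*$ --- really inherits the color of a $(k-1)$-subset of $W$.
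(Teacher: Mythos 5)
The paper does not prove this statement at all --- it is quoted as a black box from Erd\H os--Rado \cite{Erdos1952Combinatorial} --- so there is no internal proof to compare against; your proposal reconstructs the classical argument (Erd\H os--Szekeres recurrence plus Pascal's identity for $k=2$, and the greedy ``end-homogeneous'' refinement for $k>2$), and it is essentially correct. The one point you flagged as delicate does close, and here is the computation you would need: writing $b_i=\binom{i-1}{k-2}$, the vertex lost at step $i$ gets divided by every later refinement factor, so unrolling $|S_i|\ge(|S_{i-1}|-1)2^{-b_i}$ gives
\[
|S_m|\;\ge\;n\,2^{-\sum_{i\le m}b_i}-\sum_{i=1}^{m}2^{-(b_i+\cdots+b_m)}
\;=\;1-2^{-\binom{m}{k-1}}\sum_{j=0}^{m-1}2^{\binom{j}{k-1}},
\]
using the hockey-stick identity twice ($b_i+\cdots+b_m=\binom{m}{k-1}-\binom{i-1}{k-1}$). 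The last sum is strictly less than $2^{\binom{m}{k-1}}$ whenever $m\ge k$ (at $m=k$ it equals $k+1<2^k$, and inductively $\mathrm{LHS}(m+1)<2\cdot2^{\binom{m}{k-1}}\le 2^{\binom{m+1}{k-1}}$), so $|S_m|>0$ and the extraction goes through; your final two-case check of the colors of $k$-subsets of $W\cup\{v^*\}$ is also correct, since $v^*\in S_j\subseteq S_{\max R}$ and the maximal element of any $k$-subset of $W$ lies in $S$ indexed by its second-largest element. Be aware, however, that for $m<k$ the inequality as literally stated can genuinely fail (e.g.\ $k=3$, $s=t=3$ gives $m=r_2(2,2)=2$ and the claimed bound $r_3(3,3)\le 2$, whereas $r_3(3,3)=3$); this is a defect of the statement in such degenerate ranges rather than of your argument, and it is harmless for the way the theorem is used here, namely to extract monochromatic cliques of size $\Omega(\log^{\circ k}n)$ with $s=t$ large.
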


This theorem implies the following fact: if we color $k$-uniform hypergraph of order $n$ with two colors, then there exists a monochromatic clique of size $\Omega(\log^{\circ k}(n))$.

\begin{proof}[Proof of \clm{ssize}]
	Construct the following complete $t$-uniform hypergraph. The vertex set is $S_{t-1}$. For any $x\in\mathrm{DOWN}(S_{t-1}, \tilde{x}, t)$, we color the hyperedge $\{i:x_i\neq \tilde{x}_i\}$ with red if $f(x)=1$, or blue otherwise. Take the largest monochromatic clique and suppose its vertex set is $S$. According to \thm{erdos}, $|S|=\Omega(\log^{\circ k}(n))$. Furthermore, the monochromaticity implies the value $f(z)$ is identical for any $z\in\mathrm{DOWN}(S, \tilde{x}, t)$. Hence $S$ satisfies the desired conditions where our claim follows immediately. 
\end{proof}

With the help of the claim and notations above, we can complete our proof of \lem{embed}. 

\begin{proof}[Proof of \lem{embed}]
	By invoking \clm{ssize} iteratively, 
	\[|S_t| \geq Z\cdot\log^{\circ ((t-1)t/2 + 1)}(n),\]
	where $Z$ is a positive constant irrelevant to $n$. Take $t'=t'(n):=\lfloor\sqrt{\log^*(n) - 2}\rfloor$. Then $(t'-1)t'/2 + 1 < \log^*(n)$ and $|S_{t'}|=\omega(1)$. In addition, define
	\[r=r(n):= \min\left\{Z\cdot \log^{\circ(t'(t' - 1)/2 + 1)}(n), t'(n)\right\} = \omega(1).\]
	Then we have $r(n)\leq |S_{r(n)}|$. This is because
	\begin{itemize}
		\item if $t'(n)\leq Z\cdot \log^{\circ(t'(t' - 1)/2 + 1)}(n)$, then we have $r(n)=t'(n)\leq Z\cdot \log^{\circ(r(r - 1)/2 + 1)}(n)\leq|S_{r(n)}|$;
		\item if $t'(n)> Z\cdot \log^{\circ(t'(t' - 1)/2 + 1)}(n)$, then $t'(n)>r(n)$, so $r(n)=Z\cdot \log^{\circ(t'(t' - 1)/2 + 1)}(n)\leq Z\cdot \log^{\circ(r(r - 1)/2 + 1)}(n)\leq|S_{r(n)}|$. 
	\end{itemize}
	
	Now take a size $r(n)$ subset $T$ of $S_{r(n)}$ arbitrarily. Define the mapping $\sigma:[n]\backslash T\to\{0,1\}$ such that $\sigma(i)=\tilde{x}_i$. Restrict $f$ over $\sigma$ to obtain a new function $g:=f|_\sigma$. We will prove $g$ is symmetric and non-trivial, and then the lemma follows immediately. 
	
	\subparagraph*{Symmetric.} Assume $x,y\in\{0,1\}^{r(n)}$ such that $|x|=|y|$. Define $x'$ (resp. $y'$) to be the string of size $n$ obtained from $x$ (resp. $y$) and $\sigma$. Recall the definition of $S_{r(n)}$. That is, for any $i\in S_{r(n)}$, it holds that $\tilde{x}_i=1$. Therefore, 
	\[x',y'\in\mathrm{DOWN}(S_{r(n)}, \tilde{x}, |x|)\subseteq\mathrm{DOWN}(S_{|x|}, \tilde{x}, |x|).\]
	By definition of $\mathrm{DOWN}$, we have $f(x')=f(y')$ i.e., $g(x)=g(y)$.
	
	\subparagraph*{Non-trivial.} Assume $z,w\in\{0,1\}^{r(n)}$ such that $|z|=0$ and $|w|=1$. We define $z'$ and $w'$ similarly to $x'$ and $y'$. Suppose $z'=w'\oplus e_i$. As $i\in T\subseteq S_0$, the $i$-th bit is sensitive. Therefore, $f(z')\neq f(w')$ i.e., $g(z)\neq g(w)$. 
\end{proof}

\section[Lower Bounds of degpq(f) for Symmetric Functions]{Lower Bounds of $\deg_{pq}(f)$ for Symmetric Functions}\label{sec:degpq}

We first go further with the analysis of Mahler coefficients of $\mathsf{MOD}$ functions, then prove \thm{degpq_sym} in \sec{degpq_bound}, and give an instance in \sec{degpq_tight}, showing one can never improve the constant factor $1/2$.  
\subsection[More Analyses of MOD and Its Mahler Coefficients]{More Analyses of $\mathsf{MOD}$ and Its Mahler Coefficients}

Let $p$ be a prime and $t$ be a positive integer. Consider over the ring $\mathbb{Z}_{p}$. Recall the notation $\alpha_\ell^{(a,p^t)}$: it is the $\ell$-th Mahler coefficient of $\mathsf{MOD}_n^{a,p^t}$. Since $\deg_p(\mathsf{MOD}_n^{a,p^t})=p^t-1$, the following $p^t\times p^t$ matrix collects all the coefficients of $\mathsf{MOD}_n^{a,p^t}$:
\[{\bm A}_{p^t}:=
\begin{bmatrix}
\alpha_0^{(0,p^t)} & \cdots & \alpha_0^{(p^t-1,p^t)} \\
\vdots & \ddots & \vdots \\
\alpha_{p^t-1}^{(0,p^t)} & \cdots & \alpha_{p^t-1}^{(p^t-1,p^t)}
\end{bmatrix}
.\]

In fact, 
\[\left({\bm A}_{p^t}\right)_{i,j}=\alpha_i^{(j,p^t)}=\binom{p^t-1-j}{p^t-1-i}.\]
This is because
\begin{equation}
\sum_{i = 0}^{{p^t} - 1} \binom{{p^t} - 1 - j}{{p^t} - 1 - i} \binom{|x|}{i} = \binom{{p^t} - 1 - j + |x|}{{p^t} - 1} = \left\{\begin{array}{cl}1 & \text{if $|x| \equiv j \pmod{p^t}$,} \\ 0 & \text{otherwise,}\end{array}\right.
\end{equation}
by Vandermonde's convolution.

The matrix ${\bm A}_{p^t}$ has many elegant properties. For example, the following one shows the relationship between ${\bm A}_{p^t}$ and ${\bm A}_{p}$. We use $\otimes$ to denote matrix tensor product. 
\begin{proposition}\label{prop:tensor}
	On the ring $\mathbb{Z}_p$, 
	\[{\bm A}_{p^t}={\bm A}_{p}^{\otimes t}:=\underbrace{{\bm A}_{p}\otimes {\bm A}_{p}\otimes\cdots\otimes{\bm A}_{p}}_{t}.\]
\end{proposition}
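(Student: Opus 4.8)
The plan is to prove the entrywise identity by reducing everything to Lucas's theorem. First I would recall the closed form just established, namely $({\bm A}_{p^t})_{i,j} = \binom{p^t-1-j}{p^t-1-i}$ for $0 \le i,j \le p^t-1$, together with the standard entrywise description of a $t$-fold tensor power: writing the row and column indices in base $p$ as $i=(i_{t-1}\cdots i_0)_p$ and $j=(j_{t-1}\cdots j_0)_p$, one has $({\bm A}_p^{\otimes t})_{i,j} = \prod_{s=0}^{t-1}({\bm A}_p)_{i_s,j_s} = \prod_{s=0}^{t-1}\binom{p-1-j_s}{p-1-i_s}$.

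Next I would use the key arithmetic observation that $p^t-1 = (p-1,p-1,\ldots,p-1)_p$, so that subtracting $i$ (all of whose base-$p$ digits lie in $\{0,\ldots,p-1\}$) never causes a borrow: the base-$p$ digits of $p^t-1-i$ are exactly $p-1-i_{t-1},\ldots,p-1-i_0$, and similarly for $p^t-1-j$. Applying Lucas's theorem to $\binom{p^t-1-j}{p^t-1-i}$ with these digit expansions gives $\binom{p^t-1-j}{p^t-1-i}\equiv\prod_{s=0}^{t-1}\binom{p-1-j_s}{p-1-i_s}\pmod p$, which is precisely the $(i,j)$ entry of ${\bm A}_p^{\otimes t}$. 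Since this holds for every pair $(i,j)$, the two matrices agree over $\mathbb{Z}_p$.

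The only mild point to be careful about --- not really an obstacle --- is matching the indexing conventions: one must check that the bijection between $\{0,\ldots,p^t-1\}$ and base-$p$ digit tuples underlying the tensor power is the one implicit in the formula for $({\bm A}_{p^t})_{i,j}$, and that the complement map $i\mapsto p^t-1-i$ acts digitwise as $i_s\mapsto p-1-i_s$; both are immediate from the no-borrow remark. If one prefers to sidestep index bookkeeping entirely, an equivalent route is induction on $t$: partition $\{0,\ldots,p^t-1\}$ into residue classes modulo $p^{t-1}$, observe that ${\bm A}_{p^t}$ acquires the block form ${\bm A}_p\otimes{\bm A}_{p^{t-1}}$, and verify the block entries by the same Lucas-type digit factorization of the binomial coefficient. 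Either way the proposition is essentially a one-line corollary of Lucas's theorem plus the all-$(p-1)$-digit shape of $p^t-1$.
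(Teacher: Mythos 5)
Your proof is correct and follows essentially the same route as the paper: both use the entrywise formula $({\bm A}_{p^t})_{i,j}=\binom{p^t-1-j}{p^t-1-i}$, the fact that $p^t-1$ has all base-$p$ digits equal to $p-1$ (so complementation acts digitwise without borrows), and Lucas's Theorem to factor the binomial coefficient into the product $\prod_s\binom{p-1-j_s}{p-1-i_s}$, matching the tensor-power entries. The extra remarks on indexing conventions and the alternative induction are fine but not needed beyond what the paper does.
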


\begin{proof}
	Let $i_\ell$ and $j_\ell$ be the representation of $i$ and $j$ in base $p$. Then by Lucas's Theorem, 
	\[
	\binom{p^t - 1 - j}{p^t - 1 - i} \equiv \dbinom{\sum_{\ell = 0}^{t - 1} (p - 1 - j_\ell) \cdot p^\ell}{\sum_{\ell = 0}^{t - 1} (p - 1 - i_\ell) \cdot p^\ell} \equiv \prod_{\ell = 0}^{t - 1} \binom{p - 1 - j_\ell}{p - 1 - i_\ell} \equiv \prod_{\ell = 0}^{t - 1} (\bm A_p)_{i_\ell,j_\ell} \pmod p.
	\]
\end{proof}

Below we give another observation, which assists with our proof of \thm{degpq_sym}. 

\begin{lemma}\label{lem:apv}
	Suppose $p$ is a prime, and $n<p-1$ is a positive integer. Then for any $v\in\{0,1\}^p$ satisfying $v_i\neq v_j$ for some $0\leq i<j\leq n$, there exists $\lfloor n/2\rfloor+1\leq\ell\leq n$ such that $({\bm A}_p {\bm v})_\ell\neq 0$.
\end{lemma}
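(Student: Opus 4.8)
The plan is to identify the coordinate $(\bm A_p\bm v)_\ell$ with the $\ell$-th forward finite difference of the sequence $(v_0,v_1,\dots)$ at $0$, and then to rule out the possibility that all such differences with $\ell$ in the ``upper half'' $[\lfloor n/2\rfloor+1,n]$ vanish modulo $p$. For the identification: the entry $(\bm A_p)_{\ell,j}=\binom{p-1-j}{p-1-\ell}=\binom{p-1-j}{\ell-j}$ vanishes unless $j\le\ell$, so $(\bm A_p\bm v)_\ell=\sum_{j=0}^{\ell}\binom{p-1-j}{\ell-j}v_j$; writing the numerator of $\binom{p-1-j}{\ell-j}$ as the product $(p-1-j)(p-2-j)\cdots(p-\ell)$ of $\ell-j$ consecutive integers and reducing each factor modulo $p$ (legitimate because $0\le j\le\ell\le n<p$) yields $\binom{p-1-j}{\ell-j}\equiv(-1)^{\ell-j}\binom{\ell}{j}\pmod p$, hence
\[(\bm A_p\bm v)_\ell\equiv\sum_{j=0}^{\ell}(-1)^{\ell-j}\binom{\ell}{j}v_j\pmod p\qquad(0\le\ell\le n).\]
Only $v_0,\dots,v_n$ enter this expression, so the values $v_{n+1},\dots,v_{p-1}$ are irrelevant; equivalently, this is the standard closed form for Mahler coefficients applied to the finite data $v$.

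Next, suppose for contradiction that $c_\ell:=(\bm A_p\bm v)_\ell\equiv0\pmod p$ for every $\ell$ with $\lfloor n/2\rfloor+1\le\ell\le n$. Newton's forward-difference interpolation formula — which holds verbatim over $\mathbb F_p$ here because every binomial $\binom{j}{\ell}$ that appears has $0\le j,\ell\le n<p$ and so needs no reduction — gives $v_j=\sum_{\ell=0}^{n}c_\ell\binom{j}{\ell}$ for $0\le j\le n$. Since $c_\ell=0$ for $\ell>\lfloor n/2\rfloor$, the polynomial $P(X):=\sum_{\ell=0}^{\lfloor n/2\rfloor}c_\ell\binom{X}{\ell}\in\mathbb F_p[X]$ has degree at most $\lfloor n/2\rfloor$ and satisfies $P(j)=v_j\in\{0,1\}$ for all $j\in\{0,1,\dots,n\}$.

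Finally, consider $Q(X):=P(X)\bigl(P(X)-1\bigr)\in\mathbb F_p[X]$. Its degree is at most $2\lfloor n/2\rfloor\le n$, yet it vanishes at the $n+1$ distinct points $0,1,\dots,n$ of $\mathbb F_p$ (distinct since $n<p$), so $Q$ is the zero polynomial; as $\mathbb F_p[X]$ is an integral domain, either $P\equiv0$ or $P\equiv1$. Hence $v_0=v_1=\cdots=v_n$, contradicting the hypothesis that $v_i\ne v_j$ for some $0\le i<j\le n$. Therefore some $\ell$ with $\lfloor n/2\rfloor+1\le\ell\le n$ satisfies $(\bm A_p\bm v)_\ell\ne0$, as desired.

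I expect the only slightly delicate points to be the bookkeeping in the first step (verifying the entries of $\bm A_p$ reduce modulo $p$ to the signed binomial coefficients, with the harmless inequalities $0\le j\le\ell\le n<p$ making all reductions unambiguous) and checking that the counting is tight: $2\lfloor n/2\rfloor\le n<n+1$, which is exactly what forces a degree-$\le\lfloor n/2\rfloor$ polynomial with $n+1$ values in $\{0,1\}$ to be constant. Neither is a real obstacle; the crux is the standard fact that a nonzero polynomial over a field has at most as many roots as its degree.
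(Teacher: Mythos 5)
Your proof is correct, but it takes a genuinely different route from the paper's. The paper argues by contradiction inside the matrix framework: it restricts $\bm A_p$ to the rows $\lfloor n/2\rfloor+1,\dots,n$ and columns $0,\dots,n$ to get a matrix $\bm B$ with $\bm B\bm v'=\bm 0$, notes that each row of $\bm B$ sums to $\binom{p}{p-t}\equiv 0$ so that $\bm B\bm 1=\bm 0$ and one may assume $v'$ has at most $\lceil n/2\rceil$ ones, and then derives a contradiction because the columns selected by those ones sum to zero while their bottom square block is invertible by \prop{binom2} (proved via Stirling and Vandermonde matrices). You instead observe that $(\bm A_p\bm v)_\ell\equiv\sum_{j\le\ell}(-1)^{\ell-j}\binom{\ell}{j}v_j$, i.e.\ the $\ell$-th finite difference, so that if all of them vanish for $\ell>\lfloor n/2\rfloor$ then Newton interpolation produces $P\in\mathbb F_p[X]$ of degree at most $\lfloor n/2\rfloor$ with $P(j)=v_j\in\{0,1\}$ for $j=0,\dots,n$, whence $P(P-1)$ has degree at most $n$ but $n+1$ roots and $P$ must be constant, contradicting $v_i\neq v_j$. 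Your bookkeeping is sound: the congruence $\binom{p-1-j}{\ell-j}\equiv(-1)^{\ell-j}\binom{\ell}{j}\pmod p$ is valid since $j\le\ell\le n<p$, the points $0,\dots,n$ are distinct because $n<p-1$, and $2\lfloor n/2\rfloor\le n$ makes the root count tight. The trade-off: your argument is more elementary and self-contained---it needs neither \prop{binom2} nor the complementation trick $\bm B\bm 1=\bm 0$, with the factor of two arising from the product $P(P-1)$ rather than from halving the support of $v$---while the paper's version stays in the Mahler-coefficient-matrix language of that section and reuses \prop{binom2}, which is also needed elsewhere (e.g.\ for the invertibility of $\bm A_{p^{t-1}}$ in the proof of \lem{degpq_main}).
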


Our proof of \lem{apv} utilizes the following proposition on another binomial coefficient matrix.

\begin{proposition}\label{prop:binom2}
	For any prime $p$, integers $j, k$ with $j + k < p$ and distinct $a_0, \ldots, a_k \in \mathbb{F}_p$ satisfying $a_0, \ldots, a_k \geq j$, the matrix
	\[
	\begin{bmatrix}
	\binom{a_0}{j} & \binom{a_1}{j} &\cdots & \binom{a_k}{j} \\
	\binom{a_0}{j+1} & \binom{a_1}{j+1} &\cdots & \binom{a_k}{j+1} \\
	\vdots & \vdots & \ddots & \vdots \\
	\binom{a_0}{j+k} & \binom{a_1}{j+k} &\cdots & \binom{a_k}{j+k} \\
	\end{bmatrix}
	\]
	is invertible over $\mathbb F_p$.
\end{proposition}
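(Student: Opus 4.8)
The plan is to reduce the determinant of this matrix of binomial coefficients to a Vandermonde-type determinant over $\mathbb{F}_p$, where non-vanishing follows from the distinctness of the $a_i$'s together with the hypothesis $j+k<p$ (which guarantees that no binomial coefficient $\binom{a_i}{j+r}$ with $0\le r\le k$ collapses modulo $p$ in an uncontrolled way, and that the relevant polynomials have degree less than $p$). Concretely, each entry $\binom{a_i}{j+r}=\frac{1}{(j+r)!}\,a_i(a_i-1)\cdots(a_i-j-r+1)$ is, as a function of $a_i$, a polynomial of degree exactly $j+r$ in $a_i$; crucially $(j+r)!$ is invertible in $\mathbb{F}_p$ since $j+r\le j+k<p$. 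So the $r$-th row equals $\frac{1}{(j+r)!}$ times the vector $\bigl(g_r(a_0),\dots,g_r(a_k)\bigr)$ where $g_r(X)=X(X-1)\cdots(X-j-r+1)$ is monic of degree $j+r$.

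The key steps, in order, are: (i) factor out the scalars $\frac{1}{(j+r)!}$ from each row, reducing to the matrix $\bigl(g_r(a_i)\bigr)_{r,i}$; (ii) observe that $\{g_0,g_1,\dots,g_k\}$ is a family of polynomials of strictly increasing degrees $j, j+1, \dots, j+k$, hence by row operations (each $g_r$ can be written as $X^{j+r}$ plus a $\mathbb{F}_p$-linear combination of $g_0,\dots,g_{r-1}$ together with lower monomials $X^j,\dots,X^{j+r-1}$ — here one should be slightly careful and instead argue that the span of $\{g_0,\dots,g_k\}$ equals the span of $\{X^j, X^{j+1},\dots,X^{j+k}\}$) the determinant is, up to sign, equal to $\det\bigl(a_i^{\,j+r}\bigr)_{0\le r\le k,\,0\le i\le k}$; (iii) factor $\det\bigl(a_i^{\,j+r}\bigr) = \bigl(\prod_{i=0}^{k} a_i^{\,j}\bigr)\cdot\det\bigl(a_i^{\,r}\bigr)_{0\le r\le k}$, which is $\bigl(\prod_i a_i^{\,j}\bigr)$ times the Vandermonde determinant $\prod_{0\le i<i'\le k}(a_{i'}-a_i)$; (iv) conclude: the Vandermonde factor is nonzero since the $a_i$ are distinct in $\mathbb{F}_p$, and $\prod_i a_i^{\,j}$ is nonzero because each $a_i\ge j\ge 1$... — wait, this needs care: $a_i$ could be $0$ if $j=0$, but if $j=0$ then $\prod_i a_i^{0}=1$ and step (iii) degenerates harmlessly, so either way the product of the two factors is a unit in $\mathbb{F}_p$.

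The main obstacle I expect is step (ii): making rigorous that performing $\mathbb{F}_p$-linear row operations turns the rows $(g_r(a_i))_i$ into the rows $(a_i^{j+r})_i$ without changing the determinant up to a nonzero scalar. The clean way is to note that $g_r(X)$ is monic of degree $j+r$, so by downward induction on $r$ one subtracts suitable $\mathbb{F}_p$-multiples of rows $r+1,\dots,k$ from row $r$ is not quite it — rather, the transition matrix expressing $(g_0,\dots,g_k)$ in terms of $(X^j,\dots,X^{j+k})$ is lower-triangular with $1$'s on the diagonal (reading degrees), hence has determinant $1$; since all $g_r$ lie in the $\mathbb{F}_p$-span of $X^j,\dots,X^{j+k}$ (each $g_r$ has degree between $j$ and $j+k$ and lowest-degree term $X^j$ up to scalar — actually $g_r$ has a nonzero constant... no, $g_r(0)=0$ for $r\ge 1$ since $0$ is a root, and $g_0(0)=0$ too when $j\ge 1$, so the lowest monomial appearing is indeed at least $X^j$ only if $j\ge 1$; if $j=0$, $g_0=1$ and we are in the degenerate-but-fine case), the matrix $(g_r(a_i))$ equals the matrix $(a_i^{j+r})$ multiplied on the left by this unimodular transition matrix, so the determinants agree up to $\pm1$. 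This bookkeeping, while elementary, is where one must be precise about the edge case $j=0$ and about the fact that $j+k<p$ is exactly what keeps all factorials $(j+r)!$ invertible and all the degree counts honest modulo $p$.
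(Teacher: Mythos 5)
There is a genuine gap at your step (ii), and it is fatal to the reduction as you set it up. The polynomial $g_r(X)=X(X-1)\cdots(X-(j+r)+1)=X^{\underline{j+r}}$ contains monomials of \emph{every} degree from $1$ up to $j+r$; the observation $g_r(0)=0$ only kills the constant term and does not make the lowest monomial $X^{j}$. Hence for $j\ge 2$ the family $\{g_0,\dots,g_k\}$ does \emph{not} lie in the span of $\{X^{j},\dots,X^{j+k}\}$, there is no unitriangular transition matrix between the two row families, and the asserted identity $\det\bigl(g_r(a_i)\bigr)=\pm\det\bigl(a_i^{\,j+r}\bigr)$ is false: already for $k=0$, $j=2$ the left side is $a_0(a_0-1)$ while the right side is $a_0^{2}$ (e.g.\ $p=7$, $a_0=3$ gives $3$ versus $2$, not equal even up to sign). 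The error also hides where the hypothesis $a_i\ge j$ is used: your final expression $\prod_i a_i^{\,j}\cdot\prod_{i<i'}(a_{i'}-a_i)$ would suggest $a_i\neq 0$ suffices, but for $a_0=1$, $j=2$ the entire column $\bigl(\binom{1}{2},\dots,\binom{1}{2+k}\bigr)$ vanishes; the correct column factor is the falling factorial $a_i(a_i-1)\cdots(a_i-j+1)$, not $a_i^{\,j}$, and it is a unit precisely because $j\le a_i\le p-1$ forces every factor $a_i-s$, $0\le s\le j-1$, into $[1,p-1]$.

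The missing idea is a factorization before the Vandermonde step: write $g_r(X)=X^{\underline{j}}\cdot(X-j)^{\underline{r}}$, pull the unit $a_i^{\underline{j}}$ out of column $i$, and only then apply your transition argument to the residual matrix $\bigl((a_i-j)^{\underline{r}}\bigr)_{r,i}$, whose rows come from monic polynomials of degrees $0,1,\dots,k$ in the single variable $X-j$; this legitimately reduces, by a unitriangular change of rows, to the Vandermonde determinant $\prod_{i<i'}(a_{i'}-a_i)\neq 0$. With that correction your argument becomes essentially the paper's proof, which performs the same conversion from falling factorials to powers by multiplying with the Stirling-number matrix $S$ and absorbs the column units via the $\binom{a_i}{j}$ entries of its diagonal matrix; the parts of your proposal concerning the invertibility of $(j+r)!$ from $j+k<p$ and the use of distinctness of the $a_i$ are fine as they stand.
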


\begin{proof}
	One can verify that
	\begin{align*}
	\!\!\!\!\!\text{diag}\left\{\frac{(j + 0)^{\underline 0}}{\binom{a_0}{j}}, \ldots, \frac{(j + k)^{\underline k}}{\binom{a_k}{j}}\right\}\cdot S \cdot \begin{bmatrix}
	\binom{a_0}{j} &\cdots & \binom{a_k}{j} \\
	\vdots & \ddots & \vdots \\
	\binom{a_0}{j+k} & \cdots & \binom{a_k}{j+k}
	\end{bmatrix}
	= \begin{bmatrix}
	(a_0 - j)^0 & \cdots & (a_k - j)^0\\
	\vdots &  \ddots & \vdots \\
	(a_0 - j)^k & \cdots & (a_k - j)^k
	\end{bmatrix},
	\end{align*}
	where $S$ is the second Stirling number matrix, i.e., $S_{ij}=\stirling{i}{j}$, and the notation $x^{\underline y}$ stands for the falling factorial power $x(x-1)\cdots(x-y+1)$. The Vandermonde matrix on the R.H.S. is also invertible because $a_0, \ldots, a_k$ are distinct.
\end{proof}

\begin{proof}[Proof of \lem{apv}]
	Assume towards contradiction that there exists some $v$ satisfying the condition, but $({\bm A}_p v)_\ell=0$ for all $\lfloor n/2\rfloor+1\leq\ell\leq n$. In other words, if we take row $\lfloor n/2\rfloor+1$ to $n$ and column $0$ to $n$ from ${\bm A}_p$ to get another $\lceil n/2\rceil\times (n+1)$ matrix ${\bm B}$ i.e.,
	\[
	\bm B := \begin{bmatrix}
	\binom{p-1-0}{p-1-(\lfloor n/2\rfloor + 1)} & \binom{p-1-1}{p-1-(\lfloor n/2\rfloor + 1)} & \ldots & \binom{p-1-n}{p-1-(\lfloor n/2\rfloor + 1)}\\
	\vdots & \vdots & \ddots & \vdots\\
	\binom{p-1-0}{p-1-n} & \binom{p-1-1}{p-1-n} & \ldots & \binom{p-1-n}{p-1-n}
	\end{bmatrix},
	\]
	then ${\bm B}{\bm v}'={\bm 0}$ where ${\bm v}'=\{v_0,...,v_n\}^T$. (This is because $\left({\bm A}_p\right)_{i,j}=\binom{p-1-j}{p-1-i}=0$ for all $\lfloor n/2\rfloor+1\leq i\leq n$ and $n+1\leq j\leq p-1$. ) 
	
	Next, for any $t\in\left[(\lfloor n/2\rfloor + 1),n\right]$, the sum of row $t$ is $\binom{p-1-0}{p-1-t}+\binom{p-1-1}{p-1-t}+\cdots+\binom{p-1-n}{p-1-t}=\binom{p}{p-t}\equiv 0\pmod{p}$. Therefore ${\bm B}{\bm 1}={\bm 0}$, so we can assume the number of $1$'s in ${\bm v}'$ is no more that $\lceil n/2\rceil$, without loss of generality. (Otherwise, subtract ${\bm v}'$ from ${\bm 1}$.) This means that we can take $s\leq \lceil n/2\rceil$ column vectors of ${\bm B}$, the summation of which is ${\bm 0}$, and furthermore, the last $s$ dimensions of these vectors form a singular matrix with form ${\bm B}'_{i,j}=\binom{a_j}{p-1-n+s-1+i}$. However, by flipping it upside down and applying \prop{binom2}, this matrix is invertible.
\end{proof}

\subsection[Proof of Theorem 3]{Proof of \thm{degpq_sym}}\label{sec:degpq_bound}

Our proof requires the following two lemmas. The first one is often referred to as Periodicity Lemma. It says any function with coprime periods is constant, if the domain is large enough. 

\begin{lemma}[Periodicity Lemma, \cite{fine1965uniqueness}]
	\label{lem:period}
	Let $g$ be an $a$-periodic and $b$-periodic function on domain $\{0,1,\ldots,n\}$ with $gcd(a,b)=1$ and $n\ge a+b-2$. Then $g$ is a constant function.
\end{lemma}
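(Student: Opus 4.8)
The plan is to reprove the (classical Fine--Wilf) Periodicity Lemma by an induction on $a+b$ that mirrors the Euclidean algorithm, maintaining the invariant $\gcd(a,b)=1$ throughout. For the base case, if $\min\{a,b\}=1$ --- which in particular covers $a=b$, since $\gcd(a,b)=1$ then forces $a=b=1$ --- then $g$ is $1$-periodic, hence constant, and the bound $n\ge a+b-2$ only serves to keep the domain nonempty.

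For the inductive step I would assume $a,b\ge2$ and, using the symmetry of the statement in $a$ and $b$, that $a<b$ (equality is impossible under $\gcd(a,b)=1$). The key point is that the restriction $g|_{\{0,\dots,n-a\}}$ is still $a$-periodic (inherited) and is now also $(b-a)$-periodic: for $0\le i\le n-b$ one has $g(i)=g(i+b)$ by $b$-periodicity and $g(i+b)=g(i+b-a)$ by $a$-periodicity, with both indices lying in $\{0,\dots,n\}$, so $g(i)=g(i+(b-a))$ for every $i$ in the shrunk domain $\{0,\dots,n-a\}$. Since $\gcd(a,b-a)=\gcd(a,b)=1$, since $n-a\ge a+(b-a)-2$ (which is exactly the hypothesis $n\ge a+b-2$), and since $a+(b-a)=b<a+b$, the induction hypothesis applies and gives that $g$ is constant on $\{0,\dots,n-a\}$. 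To promote this to the full domain, note that any index $i$ with $n-a<i\le n$ satisfies $i\ge n-a+1\ge b-1\ge a$, so $i-a\in\{0,\dots,n-a\}$, and $a$-periodicity yields $g(i)=g(i-a)$, the common constant value; hence $g$ is constant on all of $\{0,\dots,n\}$.

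The only steps requiring care --- and where I would be deliberate --- are the two bits of arithmetic bookkeeping: checking that the reduced instance $(a,\,b-a)$ on $\{0,\dots,n-a\}$ again meets the numerical hypothesis of the lemma (so that the induction is legitimate), and checking that every ``leftover'' index $n-a+1,\dots,n$ lying outside the reduced domain is at least $a$, so that a single backward $a$-step lands it back inside. Everything else is routine. An equivalent, more pictorial route is to observe that $g$ is constant as soon as the graph on $\{0,\dots,n\}$ with edges $\{i,i+a\}$ and $\{i,i+b\}$ is connected, and to prove that connectivity by the same Euclidean descent; the inductive phrasing above seems cleaner to carry out in full.
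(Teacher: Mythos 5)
Your proof is correct, but note that the paper does not prove this lemma at all: it is imported verbatim from Fine and Wilf \cite{fine1965uniqueness} and used as a black box, so there is no internal argument to compare against. What you have written is a clean, self-contained derivation of the $\gcd(a,b)=1$ case by Euclidean descent, and the two places that actually need checking are handled correctly: the reduced instance $(a,b-a)$ on $\{0,\dots,n-a\}$ satisfies $n-a\ge a+(b-a)-2$ precisely because $n\ge a+b-2$, and every leftover index $i\in\{n-a+1,\dots,n\}$ satisfies $i\ge n-a+1\ge b-1\ge a$, so one backward $a$-step lands in the reduced domain. Two minor points worth making explicit if you write this up: the paper's notion of $\ell$-periodicity is ``$g(i)=g(j)$ whenever $\ell\mid i-j$'', which on an interval domain is equivalent (by chaining) to the single-step version $g(i)=g(i+\ell)$ that your argument manipulates; and in the inductive step you should note $b-a\ge 1$ and that the sum strictly decreases, so the recursion terminates in the base case $\min\{a,b\}=1$. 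With those remarks your argument is a complete proof, and it also makes visible why the bound $n\ge a+b-2$ is exactly what the descent preserves --- which the paper, citing the result, never shows.
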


The next one can be regarded as a stronger version of \thm{degpk_sym} with $k=1$. 

\begin{lemma}
	\label{lem:degpq_main}
	Assume $p$ is a prime. For any non-trivial symmetric $f: \{0, 1\}^n \to \{0, 1\}$, 
	\[\deg_p(f) \geq \min\left\{\frac{n}{2}, \left(1-\frac{1}{p}\right)\pi_p(f)\right\}.\]
\end{lemma}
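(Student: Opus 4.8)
The plan is to split the argument according to whether the base-$p$ period $\pi_p(f) = p^t$ is small or large relative to $n$, and in each regime extract a high-degree Mahler coefficient. Write $\pi_p(f) = p^t$; since $f$ is non-trivial, $t \geq 1$. Expand $f$ over the $\MOD$ basis at modulus $p^t$: because $f$ is $p^t$-periodic, $f(x) = \sum_{j=0}^{p^t-1} w_j \MOD_n^{j,p^t}(x)$ for suitable $w_j \in \{0,1\}$, and hence the $\ell$-th Mahler coefficient of $f$ over $\mathbb{Z}_p$ is $\sum_j w_j \alpha_\ell^{(j,p^t)} = (\bm A_{p^t}\bm w)_\ell$, where $\bm w = (w_0,\dots,w_{p^t-1})^\top$ and $\bm A_{p^t}$ is the coefficient matrix studied in \sec{degpq}. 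The goal is to produce an index $\ell$ with $\lceil\text{something}\rceil \le \ell \le n$ and $(\bm A_{p^t}\bm w)_\ell \not\equiv 0 \pmod p$, which by \fct{degm_mahler} gives $\deg_p(f) \ge \ell$.

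\textbf{Case $t = 1$ (and more generally $p^t$ not too large).} Here I would apply \lem{apv} directly. The vector $\bm w \in \{0,1\}^p$ is not constant on its first coordinates (non-triviality of $f$, combined with the fact that $f$ genuinely has period $p$ means $w_i \neq w_j$ for some $0 \le i < j$ among the relevant range), so \lem{apv} with the parameter $n$ there replaced by $\min\{n, p-2\}$ yields an $\ell$ in the top half with $(\bm A_p \bm w)_\ell \neq 0$, giving $\deg_p(f) \ge \lfloor n/2\rfloor + 1 \ge n/2$ when $n \le p-2$. When $n$ is large compared to $p$ (so $\MOD_n^{j,p}$ lives at its full degree), \cor{mod_deg} gives $\deg_p(f) = p-1 = (1-\tfrac1p)\pi_p(f)$ by the same ``top coefficient is common to all $\MOD$s'' argument as in \lem{pper1}. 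Stitching these together covers $t=1$.

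\textbf{Case $t \ge 2$.} Now use \prop{tensor}: $\bm A_{p^t} = \bm A_p^{\otimes t}$ over $\mathbb{Z}_p$. I would like to find a high index $\ell$ where the tensor product applied to $\bm w$ is nonzero. The idea is to look at the action of $\bm A_{p^t}$ ``one digit at a time'': index $\ell$ in base $p$ is $(\ell_{t-1}\cdots\ell_0)_p$, and $(\bm A_{p^t}\bm w)_\ell$ is a contraction of $\bm w$ against $\prod_\ell (\bm A_p)_{\ell_\ell, \cdot}$. Because $f$ is $p^t$-periodic but not $p^{t-1}$-periodic, $\bm w$ does not lie in the span of the ``coarser'' indicator vectors, so some digit-slice is non-constant; applying \lem{apv} in that slice (with room $p-1$ available since within a single base-$p$ digit the relevant length is at most $p-1$) produces a digit $\ell_s \ge \lfloor (p-1)/2\rfloor + 1$ contributing a nonzero factor, while the remaining digits can be pushed as high as $p-1$. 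This should yield $\ell \ge (p-1) + (p-1)p + \cdots + (p-1)p^{t-2} + (\lfloor(p-1)/2\rfloor+1)p^{t-1}$, i.e. roughly $(1 - \tfrac{1}{2p})p^t \ge (1-\tfrac1p)p^t$; I would need to check the arithmetic carefully, but morally one ``loses'' only half of one digit out of $t$, which is more than enough to beat $(1-\tfrac1p)\pi_p(f)$.

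\textbf{Main obstacle.} The delicate point is the middle case and the bookkeeping between the two regimes: when $n$ is comparable to but smaller than $\pi_p(f)$, neither \cor{mod_deg} (which needs $n$ large) nor the pure top-half argument of \lem{apv} (which needs non-triviality on the first $n$ coordinates) applies cleanly, and one must argue that the restriction of the $\MOD$-expansion to the window $\{0,\dots,n\}$ still carries a non-constant pattern in the right digit — this is exactly where the $\min\{n/2,\,(1-1/p)\pi_p(f)\}$ shape of the bound comes from, and where I expect the proof to require the most care in invoking \lem{apv} with the truncated length $\min\{n,p-2\}$ rather than $n$ or $p-2$ alone. The second subtlety is verifying that ``not $p^{t-1}$-periodic'' translates into the precise non-constancy hypothesis of \lem{apv} on the appropriate digit-slice, which likely needs the same kernel/span bookkeeping as in the proof of \lem{pper2}.
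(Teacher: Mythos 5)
Your setup coincides with the paper's (the $\MOD^{j,p^t}$ expansion, the matrix $\bm A_{p^t}$, the tensor structure from \prop{tensor}, and \lem{apv}), and your $t=1$ discussion is essentially correct. But the proposal has a genuine gap exactly where the lemma is hard. The relevant case split is not $t=1$ versus $t\ge 2$ but $p^t<n$ versus $p^t\ge n$. When $p^t\ge n$ (which can happen for any $t$), your plan is to exhibit a nonzero Mahler coefficient at an index of order $(1-\frac{1}{2p})p^t$; such an index typically exceeds $n$, and then \fct{degm_mahler} gives no degree lower bound at all (see the remark following it) --- this is precisely why the statement has the $\min\{n/2,\cdot\}$ shape. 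The paper spends the bulk of its proof on this regime: it conjugates by $\bm I_p\otimes\bm A_{p^{t-1}}^{-1}$ to get $\bm A_p\tilde{\bm w}^{\langle j\rangle}=\tilde{\bm\beta}^{\langle j\rangle}$ for the stride-$p^{t-1}$ subsequences of $\bm w$, applies \lem{apv} with the truncated lengths $n'=\lfloor(n+1)/p^{t-1}\rfloor-1$, $n''=\lceil(n+1)/p^{t-1}\rceil-1$, $m'=n\bmod p^{t-1}$, and then needs two further subcases (using, e.g., that lower $\bm\beta$-entries vanish via $\sum_j\binom{p-1-j}{p-1-i'}=\binom{p}{p-i'}\equiv 0\pmod p$, and that $\bm A_{p^{t-1}}$ is lower-triangular in the relevant sense) to land a nonzero coefficient $\bm\alpha^{\langle i'\rangle}_\ell$ at a position that is simultaneously $\ge n/2$ and $\le n$. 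You explicitly flag this as the ``main obstacle'' but do not supply the argument; that bookkeeping is the heart of the lemma, so the proof is incomplete.

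Moreover, the digit-at-a-time argument you sketch for $t\ge 2$ is not sound as stated. The quantity $(\bm A_{p^t}\bm w)_\ell$ is a single contraction $\sum_j\prod_s(\bm A_p)_{\ell_s,j_s}w_j$; it does not factor over base-$p$ digits unless $\bm w$ itself is a tensor product, so ``find a non-constant digit-slice, apply \lem{apv} there, and push the other digits to $p-1$'' is not a valid deduction without the block decomposition the paper uses. The arithmetic is also off: the index you display equals roughly $p^t(\tfrac12+\tfrac1p)$, not $(1-\tfrac{1}{2p})p^t$, and for $p\ge 5$ it is \emph{smaller} than $(1-\tfrac1p)p^t$. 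In the regime $p^t<n$ none of this machinery is needed: the paper's Case I observes that if the top block $\bm\alpha^{\langle p-1\rangle}$ of Mahler coefficients vanished, then invertibility of $\bm A_{p^{t-1}}$ together with the all-ones last row of $\bm A_p$ would force $\sum_j\bm w^{\langle j\rangle}=\bm 0$, hence all blocks of $\bm w$ equal (they are $0/1$ vectors), contradicting $\pi_p(f)=p^t$; this immediately gives $\deg_p(f)>(p-1)p^{t-1}=(1-\tfrac1p)\pi_p(f)$.
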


Note that the base-$p$ period appear explicitly in the lower bound. This allows us to apply \lem{period}. We prove \thm{degpq_sym} first. 

\begin{proof}[Proof of \thm{degpq_sym}]
    If $\max\{\deg_p(f), \deg_{q}(f)\}\ge\frac{n}{2}$, the theorem then follows naturally. Otherwise, according to \fac{divide_dominate} and \lem{degpq_main}, we have
    \begin{equation}\label{equ:degpq}
    \deg_{m}(f)\geq\deg_{pq}(f)=\max\{\deg_p(f),\deg_q(f)\}\ge\max\left\{\left(1-\frac{1}{p}\right)\pi_p(f),\left(1-\frac{1}{q}\right)\pi_q(f)\right\}.
    \end{equation}
    On the other hand, the non-triviality of $f(x)$ implies $\pi_p(f)+\pi_q(f)>n+2$ owing to \lem{period}. The last term in Inequality (\ref{equ:degpq}) is $>\frac{n+2}{2+1/(p-1)+1/(q-1)}>\frac{n}{2+1/(p-1)+1/(q-1)}$, and hence the theorem is also true. 
\end{proof}

It remains to show why \lem{degpq_main} is true.

\begin{proof}[Proof of \lem{degpq_main}]
    Consider over the ring $\mathbb{Z}_{p}$. Suppose $\pi_p(f)=p^t$. We write $f(x)$ as we did in Equation (\ref{equ:fexpand}), and let $\alpha_\ell$ be the $\ell$-th Mahler coefficient of $f(x)$. Then $\sum_{j=0}^{p^t-1}w_j\alpha_{\ell}^{(j,p^t)}=\alpha_\ell$, or
	\begin{equation}\label{equ:avsw}
	{\bm \alpha}={\bm A}_{p^t}{\bm w}
	\end{equation}
	if we set ${\bm w}:=\left(w_0,...,w_{p^t-1}\right)^{\top}$ and ${\bm \alpha}:=\left(\alpha_0,...,\alpha_{p^t-1}\right)^{\top}$. 
	
	Divide ${\bm w}$ and ${\bm \alpha}$ into blocks of length $p^{t-1}$ as ${\bm w} = ({\bm w}^{\langle0\rangle}, \ldots, {\bm w}^{\langle p-1\rangle})^\top$ and ${\bm \alpha} = ({\bm \alpha}^{\langle0\rangle}, \ldots, {\bm \alpha}^{\langle0\rangle})^\top$ where ${\bm w}^{\langle i\rangle} \in \{0, 1\}^{p^{t - 1}}, {\bm \alpha}^{\langle i\rangle} \in \mathbb{F}_p^{p^{t - 1}}$. By \prop{tensor}, we have
	\begin{equation}\label{equ:separate}
	{\bm \alpha}^{\langle i\rangle} = \bm A_{p^{t - 1}}\sum_{j = 0}^{p - 1} \left((\bm A_{p})_{ij}{\bm w}^{\langle j\rangle} \right).
	\end{equation}
	
	Consider two cases. One deals with the case $\pi_p(f)=p^t<n$, where we show $\deg_p(f)>\frac{p - 1}{p}\cdot\pi_p(f)$; another deals with $\pi_p(f)\geq n$, where we can obtain $\deg_p(f)\geq n/2$. 
	
	\subparagraph*{Case I ($p^t<n$).} First, assume ${\bm \alpha}^{\langle p-1\rangle} = \bm 0$. Note that ${\bm A}_{p^{t-1}}$ is full-rank according to \prop{binom2}, which allows Equation (\ref{equ:separate}) to be transformed into
	\begin{equation*}
	\bm A_{p^{t - 1}}^{-1}{\bm \alpha}^{\langle p-1\rangle} = \sum_{j = 0}^{p - 1} \left((\bm A_{p})_{p-1,j}{\bm w}^{\langle j\rangle} \right).
	\end{equation*}
	Since $(\bm A_{p})_{p-1,j}=1$, we have $\sum_{j = 0}^{p - 1} {\bm w}^{\langle j\rangle}={\bm 0}$. This implies ${{\bm w}^{\langle 0\rangle}=\cdots={\bm w}^{\langle p-1\rangle}}$ as ${\bm w}^{\langle i\rangle} \in \{0, 1\}^{p^{t - 1}}$, and further, $f(x)$ becomes $p^{t-1}$-periodic, conflicting with $\pi_p(f)=p^t$. Eventually, we have ${\bm \alpha}^{\langle p-1\rangle} \neq \bm 0$. Because $n> p^t$, the highest non-zero Mahler coefficient indicates the degree of $f$ (see the remark below \fct{degm_mahler}), and then $\deg_p(f)>(p-1)p^{t-1}=\frac{p - 1}{p}\cdot\pi_p(f)$.
	
	\subparagraph*{Case II ($p^t\geq n$).} By Equation (\ref{equ:avsw}), 
	\begin{equation}\label{equ:step1}
	\left({\bm I}_p\otimes ({\bm A}_{p^{t - 1}})^{-1}\right){\bm A}_{p^t}{\bm w}=\left({\bm I}_p\otimes ({\bm A}_{p^{t - 1}})^{-1}\right) {\bm \alpha}.
	\end{equation}
	The R.H.S. of (\ref{equ:step1}) is just
	\[\left({\bm I}_p\otimes ({\bm A}_{p^{t - 1}})^{-1}\right) {\bm \alpha}=\begin{bmatrix}
	(\bm A_{p^{t - 1}})^{-1}{\bm \alpha}^{\langle 0\rangle}\\
	\vdots\\
	(\bm A_{p^{t - 1}})^{-1}{\bm \alpha}^{\langle p-1\rangle}
	\end{bmatrix}=:
	\begin{bmatrix}
	{\bm \beta}^{\langle 0\rangle}\\
	\vdots\\
	{\bm \beta}^{\langle p-1\rangle}
	\end{bmatrix}
	.\]
	The L.H.S. of (\ref{equ:step1}) can be written as
	\begin{align*}
	\left({\bm I}_p\otimes ({\bm A}_{p^{t - 1}})^{-1}\right){\bm A}_{p^t}{\bm w}&=\left({\bm I}_p\otimes ({\bm A}_{p^{t - 1}})^{-1}\right)({\bm A}_{p}\otimes {\bm A}_{p^{t-1}}){\bm w}\\
	&=({\bm I}_p{\bm A}_{p})\otimes(({\bm A}_{p^{t - 1}})^{-1}{\bm A}_{p^{t-1}}){\bm w}\\
	&=({\bm A}_{p}\otimes {\bm I}_{p^{t-1}}){\bm w}.
	\end{align*}
	Therefore,
	\begin{equation}\label{equ:step2}
	({\bm A}_{p}\otimes {\bm I}_{p^{t-1}}){\bm w}=
	\left({\bm \beta}^{\langle 0\rangle},\cdots,{\bm \beta}^{\langle p-1\rangle}\right)^{\top}.
	\end{equation}
	For $0\leq j< p^{t-1}$ we define
	\[
	\tilde {\bm \beta}^{\langle j\rangle} := 
	\left({\bm \beta}^{\langle 0\rangle}_j,\cdots,{\bm \beta}^{\langle p-1\rangle}_j\right)^{\top}
	\text{ and }
	\tilde {\bm w}^{\langle j\rangle} := 
	\left({\bm w}^{\langle 0\rangle}_j,\cdots,{\bm w}^{\langle p-1\rangle}_j\right)^{\top},
	\]
	Intuitively, vectors with tildes here contain entries taken from the original vector with stride $p^{t-1}$. Then Equation (\ref{equ:step2}) implies
	\[{\bm A}_p\tilde{\bm w}^{\langle j\rangle}=\tilde {\bm \beta}^{\langle j\rangle}.\]
	
	Let $n' = \lfloor(n + 1)/p^{t - 1}\rfloor - 1$, $n'' = \lceil(n + 1)/p^{t - 1}\rceil - 1$, and $m' = n \bmod p^{t - 1}$. Consider the following two subcases:
	
	\subparagraph*{Subcase II-1. } Suppose there exists $\ell \leq m'$ and $i, j \in [0, n'']$ such that $\tilde{\bm w}^{\langle\ell\rangle}_i \neq \tilde{\bm w}^{\langle\ell\rangle}_j$. According to \lem{apv}, there exists $i' \in [\lfloor n''/2 \rfloor + 1, n'']$ satisfying $0\neq\left({\bm A}_p\tilde{\bm w}^{\langle j\rangle}\right)_{i'}=\tilde {\bm \beta}^{\langle\ell\rangle}_{i'}={\bm \beta}^{\langle i'\rangle}_{\ell}$. Because $\bm A_{p^{t-1}}$ is invertible, we have ${\bm \alpha}^{\langle i'\rangle} = \bm A_{p^{t-1}}{\bm \beta}^{\langle i'\rangle}\neq{\bm 0}$. What's more, 
	\begin{itemize}
		\item if $i' < n''$ and recall \fct{degm_mahler}, we have $\deg_p(f) \geq (\lfloor n''/2 \rfloor + 1)\cdot p^{t-1} \geq n/2$;
		\item if $i' = n''$, we select the minimum $\ell$ such that ${\bm \beta}^{\langle i'\rangle}_\ell \neq 0$. Due to the fact $(\bm A_{p^{t - 1}})_{\ell,j}=\binom{p^{t - 1}-1-j}{p^{t - 1}-1-\ell}=0$ when $j>\ell$, it follows that
		\begin{equation}\label{equ:caseii1}
		\bm\alpha^{\langle i'\rangle}_\ell = \sum_{j = 0}^{\ell} (\bm A_{p^{t - 1}})_{\ell,j} \bm\beta^{\langle i'\rangle}_j = (\bm A_{p^{t - 1}})_{\ell,\ell} \bm\beta^{\langle i'\rangle}_\ell \neq 0.
		\end{equation}
		Eventually $\deg_p(f) \geq n''\cdot p^{t - 1} \geq n/2$.
	\end{itemize}

	\subparagraph*{Subcase II-2. } Otherwise, there exists a minimum $\ell \in [m' + 1, p^{t - 1} - 1]$ and $i,j \in [0, n']$ such that $\tilde {\bm w}^{\langle\ell\rangle}_i \neq \tilde {\bm w}^{\langle\ell\rangle}_j$ as $f(x)$ is non-trivial. The same argument shows that there exists $i' \in [\lfloor n'/2 \rfloor + 1, n']$ satisfying ${\bm \beta}^{\langle i'\rangle}_{\ell}\neq 0$. In addition, the condition $\tilde {\bm w}^{\langle\ell'\rangle}_0 = \cdots = \tilde {\bm w}^{\langle\ell'\rangle}_{n'}$ for all $\ell'<\ell$ implies
	\[
	\bm\beta_{\ell'}^{\langle i'\rangle} = \tilde{\bm\beta}^{\langle \ell'\rangle}_{i'} = \sum_{j = 0}^{p - 1}(\bm A_p)_{i',j}\tilde {\bm w}^{\langle\ell'\rangle}_j = {\bm w}^{\langle\ell'\rangle}_0 \cdot \sum_{j = 0}^{i'}\binom{p-1-j}{p-1-i'}={\bm w}^{\langle\ell'\rangle}_0 \cdot\binom{p}{p-i'} = 0.
	\]
	Hence, by imitating (\ref{equ:caseii1}) we can obtain ${\bm \alpha}^{\langle i'\rangle}_\ell \neq 0$, which leads to $\deg_p(f) \geq \lfloor n'/2 \rfloor \cdot p^{t - 1} + m' + 1 \geq n/2$.
\end{proof}

\subsection[Proof of Theorem 4]{Proof of \thm{degpq_2}}\label{sec:degpq_tight}

We will later apply the following lemma about Diophantine approximation, which is an immediate corollary of Kronecker's Theorem.

\begin{lemma} \label{lem:kro}
    Suppose real numbers $a_1, \dots, a_k$ satisfy that $1,a_1, \dots, a_k$ are linearly independent over $\mathbb{Q}$. Then, for any $\varepsilon > 0$, there exists infinitely many positive integers $\ell$ such that $\ell a_i \bmod 1 \in (1-\varepsilon,1)$ for each $i = 1,\dots,k$.
\end{lemma}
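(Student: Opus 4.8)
The plan is to deduce the lemma directly from the multidimensional Kronecker--Weyl equidistribution theorem. Write $\bm a = (a_1,\dots,a_k)$, and for a positive integer $\ell$ let $x_\ell := (\ell a_1 \bmod 1,\dots,\ell a_k \bmod 1)$, viewed as a point of the torus $\mathbb T^k := (\mathbb R/\mathbb Z)^k$. Since $1,a_1,\dots,a_k$ are linearly independent over $\mathbb Q$, Kronecker's theorem (in Weyl's quantitative form) says that $(x_\ell)_{\ell\ge 1}$ is equidistributed in $\mathbb T^k$; in particular $\{x_\ell : \ell\ge 1\}$ is dense in $\mathbb T^k$. If a self-contained argument is wanted, equidistribution follows from Weyl's criterion: for every nonzero $\bm b\in\mathbb Z^k$ we have $\langle\bm b,\bm a\rangle\notin\mathbb Z$ (an integer value would be a nontrivial $\mathbb Q$-linear relation among $1,a_1,\dots,a_k$), so $\bigl|\sum_{\ell=1}^{L} e^{2\pi i\ell\langle\bm b,\bm a\rangle}\bigr| \le 2/\bigl|e^{2\pi i\langle\bm b,\bm a\rangle}-1\bigr|$ is bounded independently of $L$, hence $o(L)$, which is exactly Weyl's criterion for equidistribution of $(x_\ell)$.

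Next I would note that, taking $\varepsilon<1$ without loss of generality, the integers we seek are exactly those $\ell\ge 1$ with $x_\ell\in U$, where $U\subseteq\mathbb T^k$ is the image of the open box $(1-\varepsilon,1)^k\subseteq[0,1)^k$ under the quotient map $\mathbb R^k\to\mathbb T^k$; this $U$ is a nonempty open set, of Haar measure $\varepsilon^k>0$. To finish, there are two routes. Equidistribution gives $\#\{\ell\le L : x_\ell\in U\} = (\varepsilon^k+o(1))L\to\infty$, so infinitely many $\ell$ work. Alternatively, purely from density: for any $M$, deleting the finitely many points $x_1,\dots,x_M$ from the dense set $\{x_\ell:\ell\ge1\}$ leaves a set that is still dense (the torus has no isolated points), hence it meets the nonempty open set $U$, yielding some $\ell>M$ with $x_\ell\in U$; letting $M\to\infty$ produces infinitely many such $\ell$.

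I do not expect any real obstacle: the statement is genuinely a corollary of Kronecker's theorem. The only two points needing a sentence of care are (i) expressing the target condition as membership in an open subset of the torus, so that reducing modulo $1$ and the endpoint $1$ cause no issue, and (ii) upgrading ``there exists $\ell$'' to ``there exist infinitely many $\ell$'', which the counting estimate from equidistribution --- or equivalently the no-isolated-points remark --- handles. The hypothesis that $1,a_1,\dots,a_k$ are $\mathbb Q$-linearly independent enters at exactly one spot, namely $\langle\bm b,\bm a\rangle\notin\mathbb Z$ for all nonzero $\bm b\in\mathbb Z^k$, which powers the Weyl sum bound.
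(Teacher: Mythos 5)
Your proof is correct and matches the paper's approach: the paper simply states the lemma as an immediate corollary of Kronecker's theorem without further argument, and your appeal to Kronecker--Weyl equidistribution (with the Weyl-criterion verification using $\langle\bm b,\bm a\rangle\notin\mathbb Z$, and the counting or no-isolated-points step to get infinitely many $\ell$) just supplies the standard details the paper leaves implicit. No gaps.
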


Now we prove \thm{degpq_2}.
\begin{proof}[Proof of \thm{degpq_2}]
    Write $m = p_1 p_2 \cdots p_k$ for $p_i$ being primes.
    Choose a prime $q$ different from all $p_i$.
    Fix an arbitrary $\varepsilon > 0$.
    Let $a_i = \log q / \log p_i$ for $i = 1,\dots,k$.
    Then $1,a_1, \dots, a_k$ are linearly independent over $\mathbb Q$, otherwise a nontrivial linear relation can be exponentiated to contradict the unique factorization theorem over $\mathbb{Z}^+$.
    Applying \lem{kro} we get infinitely many $\ell$ that satisfy the condition $\ell\cdot \log q / \log p_i \bmod 1 \in (1-\varepsilon,1)$, which implies $p_i^{r_i} / q^\ell \in (1, p_i^\varepsilon)$ where $r_i = \lceil \ell \log q / \log p_i \rceil$.
    
    Now, choose a sufficiently large $\ell$, let $n = 2q^\ell$ and define $f: \{0,1\}^n \to \{0,1\}$ by $f(x) =  \mathbb{I}[|x|=q^\ell]$.
    Then $f$ is $p_i^{r_i}$-periodic since $p_i^{r_i} > q^\ell$.
    Therefore $\deg_{p_i}(f) \leq p_i^{r_i} - 1$ by \thm{wilson1}.
    Thus, \[\deg_m(f) \leq \max_{1 \leq i \leq k} \{ p_i^{r_i} \} \leq \frac{n}{2} \max_{1 \leq i \leq k} \{ p_i^{\varepsilon} \}.\]
    The theorem follows by letting $\varepsilon \to 0$.
\end{proof}

\section{Conclusion}\label{sec:con}

In a nutshell, we explore and exploit the matrices consisting of Mahler coefficients of the $\mathsf{MOD}$ function, serving as a significant extension of Wilson's arguments. This approach fully characterizes the modulo degree of Boolean functions when the base is prime or prime power, and provides good lower bounds for the composite case with the help of periodicity lemma. In addition, we also show a practical way to generalize properties of symmetric functions to non-degenerate ones by a Ramsey-type argument. 

Nevertheless, there is still ample room for further discussion. First and foremost, we conjecture that the constant factor in \thm{degpq_sym} can be improved to $1/2$ in correspondence with \thm{degpq_2}. Moreover, an anonymous reviewer also raises a good question with regard to \thm{degpk_all}: Could the extraordinary large prerequisite $n \geq \mathrm{tower}(\poly(p,k))$ (which is implicit in the proof) be improved to something like $n \geq \exp(\poly(p,k))$? We also wonder if it is possible to embed other kinds of functions to derive similar results. Above all, both \conj{1} and \conj{2} remain open. 
	
	\section*{Acknowledgements}
	
	We thank Noga Alon and Xiaoxu Guo for pointing us to relevant references~\cite{alon2001lower,fine1965uniqueness}. We also appreciate anonymous reviewers' careful feedback and constructive advice.
	
	\bibliography{reference}
	
\end{document}